\numberwithin{equation}{section}
\theoremstyle{plain}
\newtheorem{theorem}{Theorem}[section]
\newtheorem{lemma}{Lemma}
\theoremstyle{remark}
\newlist{assumpenum}{enumerate}{1}
\setlist[assumpenum]{leftmargin=3.5em, label={\bf (A.\arabic*)}, ref=(A.\arabic*)}
\newlist{assumpenumA2}{enumerate}{1}
\setlist[assumpenumA2]{leftmargin=3em, label={\bf (A.2.\arabic*)}, ref=(A.2.\arabic*)}
\newlist{estenum}{enumerate}{1}
\setlist[estenum]{leftmargin=3.5em, label={\bf \arabic*)}, ref=\arabic*)}
\newlist{lemmaitemenum}{enumerate}{1}
\setlist[lemmaitemenum]{leftmargin=4em, label=(\thelemma.\arabic*), ref=(\thelemma.\arabic*)}
\def\lcrarrow#1{\overset{#1}{\longrightarrow}}
\def\bs#1{\boldsymbol{#1}}
\def\Pn{\mathbb{P}_n}
\newcommand\independent{\protect\mathpalette{\protect\independenT}{\perp}}
\def\independenT#1#2{\mathrel{\rlap{$#1#2$}\mkern2mu{#1#2}}}
\definecolor{change}{RGB}{0,0,0}
\title{Post-selection inference for high-dimensional
mediation analysis with survival outcomes}
\author[1]{Tzu-Jung Huang}
\author[2]{Zhonghua Liu}
\author[2]{Ian W. McKeague}
\affil[1]{Vaccine and Infectious Disease Division, Fred Hutchinson Cancer Research Center, WA U.S.A}
\affil[2]{Department of Biostatistics, Columbia University, NY, U.S.A}
\date{}
\begin{document}
\allowdisplaybreaks
\maketitle

\abstract{It is of substantial scientific interest to detect mediators that lie in the causal pathway from an exposure to a survival outcome. However,  with high-dimensional mediators, as often encountered in modern genomic data settings, there is a lack of powerful methods that can provide valid post-selection inference for the identified marginal mediation effect. To resolve this challenge, we develop a post-selection inference procedure for the maximally selected natural indirect effect using a semiparametric efficient influence function approach. To this end, we establish the asymptotic normality of a stabilized one-step estimator that takes the selection of the mediator into account. Simulation studies show that our proposed method has good empirical performance. We further apply our proposed approach to a lung cancer dataset and find multiple DNA methylation CpG sites that might mediate the effect of cigarette smoking on lung cancer survival.}


\section{Introduction}

Mediation analysis aims to assess whether the effect of an exposure (e.g., smoking) on an outcome of interest (e.g., lung cancer survival) is mediated by an intermediate variable (mediator), for example, DNA methylation \cite{VanderWeele2011,tian2022coxmkf, liu2022large}.
Modern high-throughput platforms typically measure DNA methylation levels at hundreds of thousands of CpG sites, so multiple testing becomes a serious issue that needs to be addressed in a way that scales with the dimension of mediators. Locating CpG sites that mediate the effect of smoking on lung cancer survival offers a way for intervention, as DNA methylation is a reversible process \cite{wu2014reversing}.  

There is a comprehensive literature on causal inference for discrete survival outcomes, e.g.,\ see \cite{vanderweele2015explanation} and Chapter 17 of the monograph of \cite{hernan2023causal}; see also Chapters 18 and 23 of that monograph discussing post-selection inference for high-dimensional predictors and mediation analysis, respectively.  However,
to the best of our knowledge, there is little work in the literature that can address all these aspects for right-censored survival outcomes.  
We focus on the problem of selecting significant mediators in settings where the number of potential mediators is orders-of-magnitude larger than the sample size, motivated by the lung cancer data to be described in the application Section \ref{sec:lungcancer}. In particular, the high dimensionality of DNA methylation data poses a severe challenge to understanding whether DNA methylation mediates the effect of smoking on survival from lung cancer. Existing methods for mediation analysis with survival outcomes might fail to control the family-wise error rate (FWER) unless the selection of the potential mediator is taken into account. \label{knockoff.FWER} Although knock-off methods have been used to control the false-discovery rate (FDR) in this setting \cite{tian2022coxmkf}, they fail to adequately control the FWER; see Chapter 4 of \cite{Efron_2010}. 

The proposed approach involves constructing a semi-parametrically efficient estimator of the association between an exposure (denoted by $A$) and the survival time outcome $T$, as possibly mediated by one or more epigenetic mediators $B_1, \ldots, B_p$, where $p$ can be three or more orders-of-magnitude larger than the sample size. 
This is then used to build a test statistic for detecting the presence of mediators that is computationally tractable and provides effective FWER control.
For uncensored survival outcomes, the classical Sobel test \cite{sobel86} is applicable to our causal inference problem in conjunction with a Bonferroni correction.
However, the asymptotic distribution of Sobel's test statistic for the presence of even a single mediator is known to be discontinuous at the boundary of the null hypothesis of no mediation \cite{liu2022large}. This ``boundary effect'' has a number of consequences for the calibration of Sobel's test. These include inaccurate Type I error rates and failure of the parametric and nonparametric bootstraps: symptoms of the ``quiet scandal of statistics" identified by \cite{breiman92}. \cite{liu2022large} introduce a more powerful competitor, but their test is not designed to control the FWER. As far as we know, our contribution is the first to provide a resolution to the problem of reliably calibrating a test (that controls the FWER) with high-dimensional mediators, even in the case of uncensored outcomes.

The need to control the FWER can be ameliorated in practice via FDR control. This is not a serious drawback when it is known that there are numerous mediators among those targeted, but if the absence of any mediation effects is a possibility, FDR control is inadequate.
We refer interested readers to the monograph of \cite{claeskens_hjort_2008} for background on the general topic of model selection in parametric settings; their discussion in Chapter 10 of issues arising with the estimation of boundary parameters applies to the natural indirect effect in its product form, as used in Sobel's test. In this connection, we also refer to \cite{mckq2015} for an in-depth discussion of the non-standard asymptotics that arise in post-selection inference, specifically in the marginal screening of high-dimensional predictors (of a continuous uncensored response) using FWER control.

\label{intro.Huang2023} Our proposed approach to a post-selection inference problem for high-dimensional mediation analysis is related to \cite{mckq2015} and to recent work of \cite{Huang2019} and \cite{Huang2023}, where post-selection inference for the marginal effect of multiple predictors on a survival outcome was studied. We will refer extensively to the latter paper in the sequel.  
The main contribution of \cite{Huang2023} is to introduce an efficient estimator of each marginal slope parameter in an ``assumption lean'' accelerated failure time (AFT) model for the survival outcome given a specific predictor and a stabilized test statistic based on the maximally selected slope parameter that ``smooths out'' the effect of selecting the slope parameter.  The smoothing step is reminiscent of bagging and leads to a computationally tractable normal calibration for testing purposes, along with a confidence interval for the maximal association between predictors and the outcome.  Further discussion and references on high-dimensional marginal screening in survival analysis can be found in   \cite{Huang2024}.  
In the present setting, the maximally selected  {\it natural indirect effect} (over the considered mediators) is the target parameter of interest for detecting the presence of {\it some} mediation effect, replacing the maximally selected slope parameter when searching for associations between multiple predictors and the survival outcome.

Historically, causal mediation analysis is based on the principle of {\it counterfactual definiteness}: the ability to posit the existence of observations that are in fact not measured (missing). 
Just because we may not see the outcome $T$ for an untreated subject ($A=0$),  we can conceptually ``go back in time'' and think about what would have happened had that subject been treated ($A=1$). This principle emerged in R.A.\ Fisher's work on randomized experiments in 1925,  and in Jerzy Neyman's work on potential outcomes (published in Polish in 1923), see  \cite{rubin90} for discussion. In a series of ground-breaking papers in the 1970s, Donald Rubin made the conceptual leap of treating $T\,|\,{A=0}$ and $T\,|\,{A=1}$ as two separate variables, one of which is missing. According to  \cite{rubin2018}, ``all causal inference problems are  missing data problems.''
Rubin also points out that this idea had in fact emerged in quantum physics in 1927 in the form of Heisenberg's  {\it uncertainty principle}: it is impossible to measure two conjugate variables (e.g., position and momentum) with precision on the same unit \cite{rubin2019essential}. 
The link to causal inference was not noticed at the time, cf.\  \cite{Helland2021}. 

The paper is organized as follows.  In Section \ref{sec:targetparam}, we first review the notion of the {\it natural indirect effect} using counterfactual reasoning (thus providing the basis of our approach in the context of survival data) and then define the target parameter as the \textit{maximal natural indirect effect}. The proposed confidence interval for the maximal mediation effect is developed starting in Sections \ref{sec:eff.if.NIE} and \ref{sec:one-step estimator.k},  culminating in Section \ref{sec:stabilized.one-step}. This is done under minimal assumptions, in terms of an assumption-lean accelerated failure time model for the time-to-event outcome, and a semi-parametrically efficient estimator for each specific mediator. In Sections \ref{sec:simulation}--\ref{sec:confounding}, we conduct extensive simulation studies to assess the empirical performance of our proposed method.  In Section \ref{sec:lungcancer}, we apply our approach to a lung cancer DNA methylation dataset to demonstrate its practical performance.   Proofs are placed in the Appendix.  R code used in our numerical studies is available on a GitHub repository (\url{ https://github.com/
tzujunghuang/High-dim-mediation-analysis-with-survival-outcomes}).

\section{The Target parameter}
\label{sec:targetparam}

In this section, we introduce the notation used in the sequel and define the target parameter of interest: the maximal natural indirect effect of the exposure indicator $A$ on the survival outcome $T$, as mediated by some components of a  $p$-dimensional mediators $\bs{B}$. 

\subsection{Preliminaries}
\label{sec:prelim}
We consider survival data with independent right censorship. Let $T$ and $C$ denote a (log-transformed) survival time and censoring time, respectively. Suppose we observe $n$ i.i.d.\ copies of $O=(X,\delta, A,\bs{B})\sim P$, where $X=\min\{T, C\}$, $\delta=1(T \leq C)$, an exposure variable $A \in \mathcal{A}:= \{0,1\}$,
and $\bs{B}=(B_k, k= 1, \ldots, p)$ is a $p$-vector of ``candidate mediators''. 
The observations are denoted $O_1,\ldots, O_n$, and their empirical distribution by $\mathbb{P}_n$.
Note that $p=p_n$ can grow with $n$, but we omit the subscript $n$ throughout for notational simplicity unless otherwise stated.
We denote the joint distribution of $(T, A,\bs{B})$ by $Q$ and the survival function of the censoring distribution by $G$. 

We assume throughout that the censoring time $C$ is independent of $(T, A,\bs{B})$. The distribution $P$ belongs to the statistical model $\mathcal{M}$, which is the collection of distributions $P$ parameterized by $(Q, G)$ such that $P$ has a density (with respect to an appropriate dominating measure $\nu$) given by
\begin{align*}
  &\frac{dP}{d\nu}(x,\delta,a,\bs{b}) = \big[q(x\,|\,a,\bs{b})G(C\ge x)\big]^{\delta}\big[Q(T\ge x\,|\,A=a,\bs{B}=\bs{b})g(x)\big]^{1-\delta} q(\bs{b}\,|\,a)q(a),
\end{align*}
where $q$ and $g$ are the densities of $Q$ and $G$ with respect to $\nu$. Let the follow-up period be $\mathcal{T}=(-\infty,\tau]$. The sample space is denoted by $\mathcal{X} = \mathcal{T} \times \{0,1\} \times \mathcal{A} \times \mathbb{R}^{p}$ and the empirical distribution on this space is denoted by $\Pn$. 

Our approach specifies the relationship between the exposure variable $A$, each mediator, and the survival outcome, by a general semiparametric accelerated failure time (AFT) model without making any distributional assumption on the error term. The error term is taken to be uncorrelated with the mediators. Specifically, the marginal AFT model takes the form
\begin{align}
  T = \alpha_k + \beta_kB_k + \gamma_kA + \varepsilon_k, \label{eq:linear_model}
\end{align}
where $\alpha_k \in \mathbb{R}$ is an intercept, $\beta_k \in \mathbb{R}$ is the slope parameter for the effect of $B_k$ on $T$, $\gamma_k \in \mathbb{R}$ is the effect of $A$ on $T$, and $\varepsilon_k$
is a zero-mean error term that is uncorrelated with $(A, B_k)$. The model \eqref{eq:linear_model} 
holds without distributional assumptions (such as independent errors) apart from mild moment conditions; see \cite{Huang2023} for a discussion comparing the merits of the AFT model with those of the Cox proportional hazards model.

\subsection{Natural indirect effect}
\label{sec:NIE}
We now introduce the potential outcomes notation and assumptions to be used in the sequel for each $B_k$. For now, we drop the subscript $k$. Let $T(a,b)$ denote the potential outcome of the survival time had the exposure been set to $A=a$ and the mediator been set to  $B=b$, where we assume that the mediator could potentially be manipulated over a range of values.
Also, let $a \mapsto B(a)$ denote the potential mediator had the exposure been set to  $A=a$. 
Then the {\it natural indirect effect} of $A$ on $T$, as mediated by $B$, is given by
${\rm NIE}= E_P[T(1,B(1)) - T(1,B(0))]$.
Similarly, the {\it natural direct effect} of $A$ on $T$ (not  mediated by $B$) is ${\rm NDE} = E_P[T(1,B(0)) - T(0,B(0))]$.
The {\it total effect} of $A$ on $T$ is the sum of its direct and indirect effects:
\begin{align}
 {\rm TE} = E_P[T(1,B(1))] - E_P[T(0,B(0))] = {\rm NDE} + {\rm NIE}.
\end{align}
To identify the NIE, we make the following standard  assumptions adopted in mediation analysis \cite{Imai2010,vanderweele2015explanation}:
\begin{assumpenum}[series=assumptions]
  \item \textit{Consistency:} $T_{obs} =T(A_{obs},B_{obs})$ and $B_{obs} = B(A_{obs})$ almost surely. \label{assump:Consistency}

  \item \textit{Sequential ignorability:} For\, $a, a' \in \{0,1\}$, \label{assump:Ignorability}
  \begin{assumpenumA2}
    \item $(T(a',b), B(a)) \independent A_{obs}$\,; \label{assump:Ignorability.1}
    \item $T(a',b) \independent B\,\big|\,A_{obs} = a$. \label{assump:Ignorability.2}
  \end{assumpenumA2}
  
   \item \textit{Positivity:} $P(A_{obs}=a)>0$ and $P(B(a)\in {\rm d}u\,|\,A_{obs}=a)$ is bounded as a function of $u$, for $a=0,1$. \label{assump:Positivity}
\end{assumpenum}
The sequential ignorability assumption \cite{Imai2010} is also referred to as no unmeasured confounding assumption \cite{vanderweele2015explanation}, which is usually formulated to be conditional on pre-exposure confounders, but for simplicity, we have only given the unconditional version at this stage. In the sequel we will consider how our approach can be adjusted for measured confounders; in fact, we adjust for various measured confounders in the lung cancer data application in Section \ref{sec:lungcancer}.

Now reintroducing the subscript $k$, under assumptions \ref{assump:Consistency}--\ref{assump:Positivity},
the NIE of $A$ on $T$, as mediated by $B_k$, is given by 
\begin{align}\label{eq:def_parameter_k}
  \Psi_k(P) &:= E_P[T(1,B_k(1)) - T(1,B_k(0))] \nonumber \\
  & = \int E_P[T\,|\,A=1, B_k = u]\big\{P(B_k \in {\rm d}u\,|\,A=1) - P(B_k \in {\rm d}u\,|\,A=0)\big\} \nonumber \\
  & = \beta_{k} \int u\big\{P(B_k \in {\rm d}u\,|\,A=1) - P(B_k \in {\rm d}u\,|\,A=0)\big\} = \beta_{k} \zeta_k,
\end{align}
where $\zeta_k =E_P[B_k\,|\,A=1] - E_P[B_k\,|\,A=0]$.
The second line in the above display uses the marginal AFT model \eqref{eq:linear_model} for the conditional mean of $T$ given $(A, B_k)$ and the fact that $E_P[\varepsilon_k\,|\, A, B_k] = 0$, which holds by Theorem 2 of \cite{Imai2010}.

Note that from \eqref{eq:def_parameter_k}, the difference method \cite{Judd1981} and the product-coefficient method \cite{Baron1986} give the identical expression for the natural indirect effect, as noted in \cite{VanderWeele2011}.  The first term $\beta_k$ in the product in \eqref{eq:def_parameter_k}
represents the effect of the mediator $B_k$ on $T$ (under the marginal AFT model), and in our setting is consistently estimated using inverse-probability-weighting of the observed outcome $\delta X$ by the Kaplan--Meier estimator of the censoring survival function, enabling the use of a standard least squares estimator  $\hat{\beta}_{nk}$ as developed by \cite{Koul1981};  $\hat{\beta}_{nk}$ is called the KSV estimator in the sequel. The second term $\zeta_k$ represents the average effect of exposure $A$ on $B_k$. The empirical mean of this difference, $\hat{\zeta}_{nk}$, is naturally used to estimate $\zeta_k$. This in turn yields a consistent estimator ${\color {change} \Psi_k(\hat{P}_n)}:= \hat{\beta}_{nk}\hat{\zeta}_{nk}$ of $\Psi_k(P)$, where $\hat{P}_n$ denotes a combined estimator for the various features of the underlying distribution $P$ that will be introduced in Section \ref{sec:one-step estimator.k}.
 
The target parameter of interest is defined as the maximal natural indirect effect (in absolute value) of $A$ on $T$ as mediated by each individual component of $\bs{B}$:
\begin{align}\label{eq:def_parameter}
 \Psi(P) := \max_{k = 1,\ldots, p}\left|\Psi_{k}(P)\right|.
\end{align}
In the sequel, we develop an asymptotically valid and computationally tractable confidence interval for $\Psi(P)$.
An obviously simple (but not efficient) estimator of $\Psi(P)$ is to plug-in each estimator {\color {change} $\Psi_k(\hat{P}_n)$}
defined above. We note in passing that it is important to make sure each $B_k$ is pre-standardized (in the sense of a normal score) so the magnitudes of the various contributions $|\,\Psi_k(P)\,|$ to $\Psi(P)$ are comparable.  In the sequel, we assume this is the case (without further comments), for both the simulation studies and the real data application.

\section{Efficient influence function for the natural indirect effect}
\label{sec:eff.if.NIE}
\label{sec:efficient_if_psi_k}
In this section, we derive the efficient influence function of the natural indirect effect in the uncensored case and then extend it to the censored case.

For $a, a' \in \{0,1\}$ and $k \in \{1,\ldots,p\}$, define 
\begin{align} \label{eq:def_parameter.2}
  \eta(a,a',k) = \int E_P[T\,|\,A=a,B_k=u]P(B_k \in {\rm d}u\,|\,A=a').
\end{align}
To use the results of \cite{Tchetgen2012}, we first describe the correspondence between our notation and theirs when there is no censoring or confounding. Our survival outcome $T$ plays the role of $Y$ in their notation, and we use $A$ to denote the treatment assignment, whereas they denote it by $E$. 
For each fixed $k$, $\eta(a,a',k)$ here is their $\eta(e,e^*,X)$ with the confounders $X$ removed and $M := B_k$ for $e, e^* \in \{0,1\}$;
moreover, $\eta(1,1,k)$ is equal to their  $\delta_1$. 
Along these lines, their $\theta_0$ is $\eta(1,0,k)$ defined above.
Thus for each fixed $k$, the efficient influence function of $\eta(1,1,k)$ is
\begin{align} \label{eq:if.1}
  (a,t) \mapsto \frac{1(a=1)}{P(A=1)}\Big\{t - \eta(1,1,k)\Big\},
\end{align}
which agrees with their $S_{\delta_1}^{\ {\rm eff,nonpar}}(\delta_1)$ in the absence of any confounding.
In the sequel, we assume that $B_k$ is a discrete random variable for simplicity of notation, although our results naturally apply in the general setting of \cite{Tchetgen2012}.
By their Theorem 1, 
 the efficient influence function of $\eta(1,0,k)$ is given by  their $S_{\theta_0}^{\ {\rm eff,nonpar}}(\theta_0)$ and in our notation is
\begin{align} \label{eq:if.2}
 (a,t,b_k) \mapsto \; &\frac{1(a=1)}{P(A=1)}\frac{P(B_k = b_k\,|\,A=0)}{P(B_k = b_k\,|\,A=1)}\Big\{t - E_P[T\,|\,A=1, B_k = b_k]\Big\} \nonumber \\
 & + \frac{1(a=0)}{P(A=0)}\Big\{E_P[T\,|\,A=1,B_k = b_k] - \eta(1,0,k)\Big\}.
\end{align}
Thus, since the influence function of a difference coincides with the difference of the influence functions, together with expression  \eqref{eq:bayes_rule} and $\Psi_k(P) = \eta(1,1,k) - \eta(1,0,k)$ from the first line of \eqref{eq:def_parameter_k},
the efficient influence function of $\Psi_k(P)$ when $T$ is uncensored is given by
\begin{align} \label{eq:if_psi_k.1}
  & (a,t,b_k) \mapsto 
  - \frac{1(a=0)}{P(A=0)}\Big\{E_P[T\,|\,A=1,B_k = b_k] - \eta(1,0,k)\Big\} \nonumber \\
  & \quad + \frac{1(a=1)}{P(A=1)}\Big\{t - \eta(1,1,k) -
  \frac{P(A=0\,|\,B_k = b_k)P(A=1)}{P(A=1\,|\,B_k = b_k)P(A=0)}\big\{t - E_P[T\,|\,A=1, B_k = b_k]\big\}\Big\},
\end{align}
where we have used the Bayes rule giving
\begin{align} \label{eq:bayes_rule}
  \frac{P(B_k = b_k\,|\,A=0)}{P(B_k = b_k\,|\,A=1)} = \frac{P(A=0\,|\,B_k = b_k)P(A=1)}{P(A=1\,|\,B_k = b_k)P(A=0)}.
\end{align}

In the case that the censoring distribution $G$ is known, the synthetic response $\tilde{Y} = \delta X/G(X)$ has the same first (conditional) moment as $T$, under the assumption of independent censoring:
\begin{align}\label{eq:identical.first.moment}
  E_P[\tilde{Y}\,|\,A, B_k] & = E_P\bigg\{E_P\bigg[\frac{\delta X}{G(X)}\bigg|\,A,B_k,T\bigg]\bigg\} = 
  E_P\bigg\{\frac{T}{G(T)}E_P\bigg[1(T \le C) \bigg|\,A,B_k,T\bigg]\bigg\} \nonumber \\ 
  & = E_P[T\,|\,A,B_k],
\end{align}
for each $k$.
Therefore, the efficient influence function in \eqref{eq:if_psi_k.1} can be re-expressed as, with $o = (a,\delta,x,b_k)$,
\begin{align*}
  & f_k(\cdot\,|\,P) : o \mapsto
   - \frac{1(a=0)}{P(A=0)}\Big\{E_P[\tilde{Y}\,|\,A=1,B_k = b_k] - \eta(1,0,k)\Big\} \nonumber \\
  & \quad + \frac{1(a=1)}{P(A=1)}\Big\{\tilde{y} - \eta(1,1,k) -\frac{P(A=0\,|\,B_k = b_k)P(A=1)}{P(A=1\,|\,B_k = b_k)P(A=0)}\big\{\tilde{y} - E_P[\tilde{Y}\,|\,A=1, B_k = b_k]\big\}\Big\}.
\end{align*}
As we see from \eqref{eq:def_parameter_k} that
$\eta(1,a,k) = \beta_kE_P[B_k\,|\,A=a]$ for $a \in \{0,1\}$, where
$\beta_k$ is the effect of $B_k$ on $T$ in Model \eqref{eq:linear_model}, so under this framework, it is reasonable to re-express the above display as
\begin{align} \label{eq:if_psi_k}
  & f_k(\cdot\,|\,P) : o \mapsto
  - \frac{1(a=0)}{P(A=0)}\Big\{E_P[\tilde{Y}\,|\,A=1,B_k = b_k] - \beta_kE_P[B_k\,|\,A=0]\Big\} + \frac{1(a=1)}{P(A=1)}\Big\{\tilde{y} \nonumber \\
  & - \beta_kE_P[B_k\,|\,A=1] -\frac{P(A=0\,|\,B_k = b_k)P(A=1)}{P(A=1\,|\,B_k = b_k)P(A=0)}\big\{\tilde{y} - E_P[\tilde{Y}\,|\,A=1, B_k = b_k]\big\}\Big\}.
\end{align}

When the censoring distribution $G$ is unknown, however, $f_k(\cdot\,|\, P)$ is no longer the efficient influence function. To obtain the efficient influence function of $\Psi_k(P)$ in this case, we need to project $f_{k}(\cdot\,|\,P)$ onto the tangent space $\mathbf{T}^{\mathcal{M}}(P)$ at $P$ in the model $\mathcal{M}$.
To this end, despite our assumption of independent censoring, it is convenient to consider the broader coarsening-at-random (CAR) model $\mathcal{M}^{car}\supseteq \mathcal{M}$, as indicated in \cite{Huang2023}. Under $\mathcal{M}^{car}$, $G$ is viewed as a survival function for $C$ conditionally on $(A, \bs{B})$, and this survival function may depend non-trivially on $(A, \bs{B})$. Since we have assumed that $C$ is independent of $(T, A,\bs{B})$ for the particular distribution that generates our data, this conditional survival function is equal to the marginal survival function $G(\cdot)$ for that distribution. This observation simplifies the expression for the tangent space for $G$ in $\mathcal{M}^{car}$, which is given by
\begin{align} \label{eq:T_car_space}
  \mathbf{T}^{car}(G) = \left\{\int_{\mathcal{T}} H(A, \bs{B},s)\,M({\rm d}s)\,\middle|\, H \colon \{0,1\} \times \mathbb{R}^{p} \times \mathcal{T} \rightarrow \mathbb{R} \right\},
\end{align}
where $H$ is any measurable function for which the integral has finite variance, and $M({\rm d}s) = 1(X \in {\rm d}s, \delta=0)-1(X \geq s)\, \Lambda({\rm d}s)$ with $\Lambda( \cdot )$ as the cumulative hazard function corresponding to $G(\cdot)$ with respect to the filtration 
\begin{align}\label{eq:filtration}
  \mathcal{F}_{s} = \sigma\big\{1(X \leq s'\,, \delta=0)\,, 1(X \geq s')\,, A\,, \bs{B}, \; s' \le s \in \mathcal{T}\big\}.
\end{align}
See Example 1.12 in \cite{vanderLaan&Robins2003} and Section 3 in \cite{vanderLaan1999} for further details.

Following the techniques given in Appendix A of \cite{Huang2023}, it is shown that $\mathbf{T}^{car}(G)^{\perp}\subseteq \mathbf{T}^{\mathcal{M}}(P)$, where $\perp$ denotes the orthogonal complement in the Hilbert space of $P$-square integrable functions with mean zero, denoted by $L_0^2(P)$.
Using $\Pi(\cdot\,|\,S)$ to denote the projection operator onto a closed linear subspace $S\subseteq L_0^2(P)$, it is also shown that the efficient influence function of $\Psi_k(P)$ should be
$\Pi(\,f_k(\cdot\,|\,P)\,|\,\mathbf{T}^{car}(G)^{\perp})$.
Taking $H(A,\bs{B},s) = -\,E_P[\,f_k(O\,|\,P)\,|\,A,\bs{B},X \geq s]$,
the projection of $f_k(\cdot\,|\,P)$ onto $\mathbf{T}^{car}(G)$ is developed as
\begin{align*} 
  &(a,\delta,x,b_k) \mapsto \;-\int_{\mathcal{T}} E_P\left[\,f_k(O\,|\,P)\,\big|\,A=a,B_k = b_k,X \geq s\right]\big\{1(x \in {\rm d}s, \delta=0)-1(x \geq s)\ \Lambda({\rm d}s)\big\}\\
  & \quad = -\frac{1(a=1)}{P(A=1)}\bigg\{1 -
  \frac{P(A=0\,|\,B_k=b_k)P(A=1)}{P(A=1\,|\,B_k=b_k)P(A=0)}\bigg\}\\
  & \hspace{3cm} \times \int_{\mathcal{T}} 
  E_P\big[\tilde{Y}\,\big|\,A = a,B_k = b_k,X \geq s\big]\big\{1(x \in {\rm d}s, \delta=0)-1(x \geq s)\ \Lambda({\rm d}s)\big\}.
\end{align*}
Hence the efficient influence function of $\Psi_k(P)$, the projection of $f_k(\cdot\,|\,P)$ onto $\mathbf{T}^{car}(G)^{\perp}$, is given by
\begin{align}\label{eq:if_star}
  f^*_k(\cdot\,|\,P) := f_k(\cdot\,|\,P) - f^{car}_k(\cdot\,|\,P),
\end{align}
where $f_k(\cdot\,|\,P)$ is as defined in \eqref{eq:if_psi_k}, $o = (a,\delta,x,b_k)$ and
\begin{align} \label{eq:if_nu_eff}
  f^{car}_k(\cdot\,|\,P) : o \mapsto & -\frac{1(a=1)}{P(A=1)}\bigg\{1- \frac{P(A=0\,|\,B_k=b_k)P(A=1)}{P(A=1\,|\,B_k=b_k)P(A=0)}\bigg\}\\
  & \quad \times \int_{\mathcal{T}} E_P\big[\tilde{Y}\,\big|\,A = a,B_k = b_k,X \geq s\big]\big\{1(x \in {\rm d}s, \delta=0)-1(x \geq s)\ \Lambda({\rm d}s)\big\}. \nonumber 
\end{align}

\section{One-step estimator of the natural indirect effect}  
\label{sec:one-step estimator.k}
This section utilizes the efficient influence function given above to construct an asymptotically efficient one-step estimator of $\Psi_k(P)$. 
\cite{Huang2023} introduce an efficient estimator of the {\it maximal marginal} effect of high-dimensional predictors on the survival outcome, but estimating the {\it natural indirect effect} mediated by high-dimensional mediators poses a new challenge.

To build the proposed estimator we need to identify the various features of $P$ needed to estimate \eqref{eq:if_psi_k} and \eqref{eq:if_nu_eff}, namely the following six ingredients. These will be combined into an estimator denoted by $\hat{P}_n$:
\begin{estenum}
  \item $\hat{E}_n(a,u,s,k)\colon$ an estimator of $E_{P}[\tilde{Y}\,|\,A=a, B_k=u, X \ge s]$ restricted to take values in some $P$-Donsker family of uniformly bounded functions of $(a,u,s)\in \mathcal{A} \times \mathbb{R} \times \mathcal{T}$. 
  \label{item:estimation_E}

  \item $\hat{G}_{n}(\cdot)\colon$ the Kaplan--Meier estimator of the censoring survival function $G$. Note that $Y = \delta X/\hat{G}_n(X)$  is an inverse-probability-weighted estimator of the synthetic response $\tilde{Y}$ used in the KSV estimator of $\beta_k$ mentioned earlier. This in turn provides an estimator $\hat{\Lambda}_n$ of the cumulative hazard function $\Lambda$ of the censoring distribution, and
  {\color {change} $\hat{M}({\rm d}s) = 1(X \in {\rm d}s, \delta=0)-1(X \geq s)\, \hat{\Lambda}_n({\rm d}s)$.}
  \label{item:estimation_G}

  \item $\mathbb{Q}_{n}(a) \colon$ the empirical estimator of $Q(a) := P(A=a)$; in our binary experiment, $\mathbb{Q}_{n}(1) = \mathbb{P}_{n}[1(A=1)]$ and
  $\mathbb{Q}_{n}(0) = 1-\mathbb{Q}_{n}(1)$. \label{item:estimation_prob}

  \item {\color {change} $\hat{q}_{n}(a,k) \colon$} the empirical estimator of $q(a,k) := E_P[B_k\,|\,A=a]$, the sample mean of $B_k$ among subjects with $A=a$. Note that $\hat{\zeta}_{nk} = \hat{q}_{n}(1,k) - \hat{q}_{n}(0,k)$. \label{item:estimation_conditional.mean}

  \item $\hat{\beta}_{nk} \colon$ the KSV estimator for $\beta_k$, the effect of $B_k$ on $T$ in Model \eqref{eq:linear_model}, needed to estimate \eqref{eq:if_psi_k}. Specifically,
  \begin{align}
     \hat{\beta}_{nk} := \frac{{\rm Var}_{\mathbb{P}_{n}}(A){\rm Cov}_{\mathbb{P}_{n}}(B_k,Y)-{\rm Cov}_{\mathbb{P}_{n}}(A,B_k){\rm Cov}_{\mathbb{P}_{n}}(A,Y)}{{\rm Var}_{\mathbb{P}_n}(A){\rm Var}_{\mathbb{P}_n}(B_k) - {\rm Cov}^2_{\mathbb{P}_{n}}(A,B_k)}
  \end{align}
  in which {\color {change}
  $Y$ is defined in {\bf \ref{item:estimation_G}}.} \label{item:estimation_beta.k}

  \item $\hat{Q}_{n}(u,k)\colon$ 
  a uniformly consistent estimator of the reciprocal of the odds for the counterfactual effect of $B_k$ on $A$, namely $Q(u,k):=P(A=0\,|\, B_k=u)\big/P(A=1\,|\, B_k=u)$, as a function of $u$, for each $k$.
  In practice, it would be reasonable to use logistic regression to estimate $Q(u,k)$, as we do for the sake of simplicity, although from Bayes rule we could in theory estimate it nonparametrically.\label{item:estimation_reciprocal.odds}
\end{estenum}  
Note that the estimators \ref{item:estimation_E} and \ref{item:estimation_G} furnish an estimator of  $E_P[T\,|\,A=a, B_k=u, X \ge s]$, in view of  $$E_{P}[T\,|\,A=a,B_k=u,X \ge s] = G(s)E_{P}[\tilde{Y}\,|\,A=a,B_k=u,X \geq s].$$

For each  $a \in \mathcal{A}$, we introduce  the estimator $\hat{E}_n(a,u,s,k)$ constructed by regressing $Y$ on $B_k$ using only a sub-sample $\{O_i=(X_i,\delta_i,\bs{B}_i)\, ,\, i:A_i=a,X_i \geq s,\, i \le n\}$ in the fashion of \cite{Koul1981}, following the lines of \cite{vanderLaan1998}. Specifically,
\begin{align} \label{eq:est_E}
  &\hat{E}_n(a,u,s,k) = \mathbb{P}_{n}[Y1(X \geq s, A=a)]\\
  & \quad + \frac{{\rm Cov}_{\mathbb{P}_{n}}(B_k1(X \geq s, A=a), Y1(X \geq s,A=a))}{{\rm Var}_{\mathbb{P}_{n}}(B_k1(X \geq s,A=a))}\big(u-\mathbb{P}_{n}[B_k1(X \geq s,A=a)]\big) \nonumber.
\end{align}
Note that for the given $k$ and each $(a,u)$, the process $s \mapsto \hat{E}_j(a,u,s,k)$ is unpredictable with respect to the filtration defined in \eqref{eq:filtration}. This is the unpredictability issue referred to in \cite{Huang2023}.
In the sequel we suppress the argument $s$ if $s=-\infty$ and use 
\begin{align} \label{eq:est_E.variant}
  \hat{E}_n(a,u,k) = \mathbb{P}_{n}[Y1(A=a)]
  + \frac{{\rm Cov}_{\mathbb{P}_{n}}(B_k1(A=a), Y1(A=a))}{{\rm Var}_{\mathbb{P}_{n}}(B_k1(A=a))}\big(u-\mathbb{P}_{n}[B_k1(A=a)]\big)
\end{align}
to denote the estimator of $E_{P}[\tilde{Y}\,|\,A=a,B_k=u] = E_P[T\,|\,A=a,B_k=u]$.

In terms of $f^*_k$ in \eqref{eq:if_star} with $P$ replaced by $\hat{P}_n$,
the proposed one-step estimator is now given by
\begin{align}\label{eq:if_onestep}
  S_k(\mathbb{P}_n, \hat{P}_n) = \Psi_k(\hat{P}_n) + \mathbb{P}_n f^*_k(O\,|\,\hat{P}_n),
\end{align}
where $\Psi_k(\hat{P}_n) = \hat{\beta}_{nk}\hat{\zeta}_{nk}$ was defined at the end of Section \ref{sec:NIE} and we have used plug-in of the various features {\bf \ref{item:estimation_E} {\rm --} \ref{item:estimation_reciprocal.odds}} to estimate $f^*_k$.

\section{Stabilized one-step estimator}
\label{sec:stabilized.one-step}
For estimating the target parameter $\Psi(P)$ defined in \eqref{eq:def_parameter}, we need to incorporate the selection of the most informative mediator into the one-step estimator \eqref{eq:if_onestep}.
Following the stabilization approach of \cite{Huang2023}, the idea is first to randomly order the data, and then consider subsamples consisting of the first $j$ observations for $j=q_n,\ldots,n-1$, where $\{q_n\}$ is some positive integer sequence such that both $q_n$ and $n-q_n$ tend to infinity. 
Based on the subsample of size $j$, the label of the most informative mediator is estimated by
\begin{align} \label{eq:predictor_selection}
  k_j = \arg \max_{k = 1,\ldots,p} \big|\Psi_{k}(\hat{P}_{nj})\big|.
\end{align}
The stabilized one-step estimator of $\Psi(P)$ is then given by
\begin{equation} \label{eq:Sn_star}
  S^*_{n} = \frac{1}{n-q_n}\sum_{j=q_n}^{n-1}w_{nj}m_jS_{k_j}(\delta_{O_{j+1}}, \hat{P}_{nj})\, ,             
\end{equation}
where $m_j \in \{-1,1\}$ is the sign of $\Psi_{k_j}(\hat{P}_{nj})$, $S_{k_j}$ refers to \eqref{eq:if_onestep} with the mediator $B_k$ now being $B_{k_j}$, and $\hat{P}_{nj}$ refers to $\hat{P}_n$ based on only the first $j$ observations to estimate part of the parameters of $P$.
Here $\delta_{O_{j+1}}$ is the Dirac measure putting unit mass at $O_{j+1}$, $w_{nj} := \bar{\sigma}_n/\hat{\sigma}_{nj}$ with $\bar{\sigma}_n=\{(n-q_n)^{-1}\sum_{j=q_n}^{n-1}(1/\hat{\sigma}_{nj})\}^{-1}$,  
\begin{align*}
  \hat{\sigma}^2_{nj} = \frac{1}{j}\sum_{i=1}^j\bigg\{f^*_{k_j}(O_i\,|\,\hat{P}_{nj})
  -\frac{1}{j}\sum_{i=1}^jf^*_{k_j}(O_i\,|\,\hat{P}_{nj})
  \bigg\}^2,                                
\end{align*}
and $f^*_{k_j}$ is $f^*_k$ with the mediator taken as $B_{k_j}$.

Under the following conditions, we establish the asymptotic normality of $S_n^*$.
\begin{assumpenum}[resume=assumptions]
  \item \label{assump:Mediator} Each mediator $B_k$ has bounded support.
  
  \item \label{assump:Survival function} The survival function of the censoring, $G$, is continuous and $G(\tau) > 0$.
  
  \item \label{assump:At-risk prob} There is a positive probability of a subject still being at risk at the end of follow-up: ${\rm P}(X \geq \tau) > 0$.
  
  \item \label{assump:Variances} ${\rm Var}(B_k)$, ${\rm Var}(B_k1(X \geq s, A=a))$ and ${\rm Var}(f^*_k(O\,|\,P))$ are bounded away from zero and infinity.
\end{assumpenum}

\begin{theorem}\label{Thm:stab_one_step}
Suppose the number of predictors $p=p_n$ satisfies
$\log(p_n)/n^{1/4} \to 0$, and the subsample size $q_n$ used for stabilization satisfies $n-q_n\to \infty$, $n\big/q_n = O(1)$, and $q_n^{1/4}\big/\log( n \lor p_n) \to \infty$.
Assume \ref{assump:Consistency}--\ref{assump:Variances}, the asymptotic stability conditions \ref{assump:Conditional_mean_E0}--\ref{assump:Signal strength} that are stated just before the proof of lemmas in the Appendix.
Then $S_n^*$ is an asymptotically normal estimator of $\Psi(P) \colon$
\begin{align*}
\sqrt{n-q_n}\bar{\sigma}_n^{-1}[S^*_n-\Psi(P)] \lcrarrow{d} \mathcal{N}(0,1).
\end{align*}
\end{theorem}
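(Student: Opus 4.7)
The plan is to establish a martingale central limit theorem with respect to the filtration $\mathcal F_j := \sigma(O_1,\ldots,O_j)$. The essential structural feature that enables this is the cross-fit character of the construction: every ingredient $\hat P_{nj}$, $k_j$, $m_j$, $\hat\sigma_{nj}$, and $w_{nj}$ is $\mathcal F_j$-measurable (the multiplicative constant $\bar\sigma_n$ can be pulled outside the sum), while $f^*_{k_j}(O_{j+1}\,|\,\hat P_{nj})$ involves only the independent fresh observation $O_{j+1}$. Averaging over $j$ therefore yields a martingale-difference sum and avoids any need for a global Donsker class indexed by $k$, which would be unattainable at the scale $\log p_n = o(n^{1/4})$.

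The first step is decomposition and selection consistency. Writing $\hat\Psi_j := m_j\Psi_{k_j}(\hat P_{nj}) = \Psi(\hat P_{nj})$, one has
\begin{equation*}
  m_j S_{k_j}(\delta_{O_{j+1}}, \hat P_{nj}) - \Psi(P) = \big(\hat\Psi_j - \Psi(P)\big) + m_j f^*_{k_j}(O_{j+1}\,|\,\hat P_{nj}).
\end{equation*}
Under a signal-strength stability condition there is a unique $k^* = \arg\max_k|\Psi_k(P)|$ with sign $m^*$ and gap bounded away from zero. A Bernstein-type bound on $\max_k|\Psi_k(\hat P_{nj})-\Psi_k(P)|$, exploiting \ref{assump:Mediator} and the product form $\hat\beta_{nk}\hat\zeta_{nk}$, combined with a union bound over $k\le p_n$ and $j\ge q_n$, yields $\Pr(k_j = k^*\text{ and } m_j = m^*\text{ for every }j\ge q_n) \to 1$ under $\log(p_n)/n^{1/4}\to 0$ and $q_n^{1/4}/\log(n\lor p_n)\to \infty$. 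On this good event the von Mises expansion $P\,f^*_{k^*}(\cdot\,|\,\hat P_{nj}) = \Psi_{k^*}(P) - \Psi_{k^*}(\hat P_{nj}) + R_{nj}$ reduces the display to $m^* D_{j+1} + m^* R_{nj}$, where
\begin{equation*}
  D_{j+1}\;:=\;f^*_{k^*}(O_{j+1}\,|\,\hat P_{nj}) - P\,f^*_{k^*}(\cdot\,|\,\hat P_{nj})
\end{equation*}
is an $\mathcal F_{j+1}$-martingale increment.

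The second step is to show that the weighted remainder sum is negligible: the double-robust bilinear structure of $R_{nj}$ in the nuisance errors from \ref{item:estimation_E}, \ref{item:estimation_G}, and \ref{item:estimation_reciprocal.odds}, together with the assumed $n^{-1/4}$-type rates in the stability conditions, gives $R_{nj} = o_p(j^{-1/2})$ uniformly in $j \ge q_n$, whence $\sqrt{n-q_n}\,(n-q_n)^{-1}\sum_{j=q_n}^{n-1} w_{nj} R_{nj} = o_p(1)$ by $n/q_n = O(1)$ and $w_{nj} = O_p(1)$. McLeish's martingale CLT is then applied to $(n-q_n)^{-1/2}\bar\sigma_n^{-1}\sum w_{nj} D_{j+1}$: the conditional-variance term equals
\begin{equation*}
  \frac{1}{n-q_n}\sum_{j=q_n}^{n-1}\frac{{\rm Var}\big(f^*_{k^*}(O\,|\,\hat P_{nj})\,\big|\,\mathcal F_j\big)}{\hat\sigma_{nj}^{2}},
\end{equation*}
which tends to $1$ in probability by consistency of $\hat\sigma_{nj}^2$ for ${\rm Var}(f^*_{k^*}(O\,|\,\hat P_{nj})\,|\,\mathcal F_j)$ on the good event and by \ref{assump:Variances}, while the Lindeberg condition follows from uniform boundedness of $\hat E_n$ in \ref{item:estimation_E} together with \ref{assump:Mediator}--\ref{assump:At-risk prob}.

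The main obstacle is the simultaneous uniform control required across the $p_n$ mediator indices and across the $O(n)$ overlapping subsamples of sizes growing from $q_n$ to $n-1$: selection consistency, the quadratic-remainder rate, and variance consistency all must hold at the $\max_k$ level with enough margin to survive the $\sqrt{n-q_n}$ rescaling. This is precisely what couples $\log(p_n)/n^{1/4}\to 0$, $q_n^{1/4}/\log(n\lor p_n)\to \infty$, and the $n^{-1/4}$ nuisance rates together. The key conceptual move, adapting the marginal screening framework of \cite{Huang2023}, is to translate selection consistency into an analysis that can treat $k^*$ as if it were fixed, and thus reduce the high-dimensional max-selected parameter to a single pathwise differentiable functional to which standard semiparametric efficiency arguments apply.
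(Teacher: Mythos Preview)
Your plan correctly identifies the martingale CLT and the cross-fit structure as the engine of the proof, but the ``key conceptual move'' you describe --- reducing to a fixed $k^*$ via selection consistency --- is precisely what the paper avoids, and for good reason: it fails under the null. When $\Psi(P)=0$, every $\Psi_k(P)=0$, so no unique $k^*$ exists and the selected index $k_j$ can wander arbitrarily across $\{1,\ldots,p_n\}$; your claim $\Pr(k_j=k^*\text{ for all }j\ge q_n)\to 1$ is then vacuous. Even under the alternative, assumption \ref{assump:Signal strength} does not furnish a unique maximizer with gap bounded away from zero: it posits a \emph{set} $\mathcal K_n^*$ of near-maximizers with $\mathrm{Diam}(\mathcal K_n^*)=o(n^{-1/2})$, so $k_j$ may jump among several indices whose influence functions $f^*_k$ need not be close to one another.

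The paper never fixes $k$. Its martingale term (IV) is
\[
\frac{1}{\sqrt{n-q_n}}\sum_{j=q_n}^{n-1}\hat\sigma_{nj}^{-1}m_j\,f^*_{k_j}(O_{j+1}\,|\,P),
\]
which is a martingale-difference sum \emph{with the random $k_j$ left in place}, because $k_j$ is $\mathcal F_j$-measurable and $Pf^*_k(\cdot\,|\,P)=0$ for \emph{every} $k$. The selection-bias term $m_j\Psi_{k_j}(P)-\Psi(P)$ is isolated separately as (V), identically zero under the null and controlled by \ref{assump:Signal strength} under the alternative --- this is the only place that condition enters. Your von Mises decomposition can be repaired along these lines by writing $D_{j+1}=f^*_{k_j}(O_{j+1}\,|\,\hat P_{nj})-Pf^*_{k_j}(\cdot\,|\,\hat P_{nj})$ with $k_j$ random, and then the term $m_j\Psi_{k_j}(P)-\Psi(P)$ reappears exactly as (V).

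A secondary difference: the paper does not bundle nuisance error into a single second-order remainder $R_{nj}$. It telescopes through intermediate estimators $\hat P_{nj}\to\hat P''_{nj}\to\hat P'''_{nj}\to P$, replacing one nuisance component at a time, and dispatches each of (I)--(III) by its own martingale-difference argument exploiting specific conditional-mean-zero identities (for instance $E[\,Q(B_{k_j},k_j)A-(1-A)\mid B_{k_j}]=0$ in Lemma~\ref{lemma:asymp.I}). Your double-robustness sketch is in the right spirit but would need these structural cancellations, not just a product-of-rates bound, to get the required uniformity over $j$ and $k$.
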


The following $100(1-\alpha)\%$ confidence interval for $\Psi(P)$ is justified by the above asymptotic normality:
\begin{align*} 
  [{\rm LB}_n, {\rm UB}_n]=\left[S^*_n \pm z_{\alpha/2}\,\frac{\bar{\sigma}_n}{\sqrt{n-q_n}}\right],
\end{align*} 
and the corresponding two-sided p-value   for testing the null hypothesis that $\Psi(P)=0$ is
$$2\big(1-\Phi\big(\,\big|\sqrt{n-q_n}S_n^*\big/\bar{\sigma}_n\big|\,\big)\big),
$$ where $\Phi$ is the cumulative distribution function of $\mathcal{N}(0,1)$, and $z_{\alpha/2}$ is the upper $\alpha/2$ quantile of $\Phi$.

It should be noted that the application of the stabilized one-step estimator needs to first randomize the order of the data, and to mitigate the effects of a single random ordering it is advisable to combine the results from say 100 random orderings (as we do in the sequel). The value of $q_n$ plays the role of a tuning parameter. Taking a smaller $q_n$ (holding $n$ and $p_n$ fixed) is expected to reduce variability in the performance of $S_n^*$, but taking a too-small value of $q_n$ leads to overfitting.
In practice, we recommend setting $q_n = 0.8n$ (which satisfies the conditions in Theorem \ref{Thm:stab_one_step}) as a reasonable trade-off, although in practice it is advisable to run the analysis for a few different values of $q_n$ and compare the results.

\section{Competing methods}
\label{sec:competing_methods}

\noindent {\bf Bonferroni-corrected one-step estimator:}
With the label $k$ of the strongest mediator $B_k$ estimated by $\hat{k}_n := \arg \max_{k = 1,\ldots,p}|\Psi_{k}(\hat{P}_n)|$ 
based on
the full sample, the test statistic is the standardized $S_{\hat{k}_n}(\mathbb{P}_n,\hat{P}_n)$. When disregarding the selection (i.e., viewing $\hat{k}_n$ as fixed), the null distribution of this statistic is asymptotically standard normal.
We apply the Bonferroni correction to the resulting p-value (or the $\alpha$-level of the confidence interval) and expect conservative behaviors for large $p$. Without the Bonferroni correction, this approach is anti-conservative and showcases the consequence of ignoring the selection bias that results from a naive use of $\hat{k}_n$.

\noindent {\bf Oracle one-step estimator:}
In this case, the label $k$ of the most contributing mediator $B_k$ is given, and the test statistic is simply the standardized $S_{k}(\mathbb{P}_n,\hat{P}_n)$, which has an asymptotically standard normal null distribution. Assuming knowledge of $k$ is of course unrealistic and thus this estimator cannot be used in practice, but this estimator serves as a benchmark in simulation studies for comparison purposes.

\section{Simulation studies}
\label{sec:simulation}

In this section, we report the results of simulation studies evaluating the performance of the stabilized one-step estimator (with $q_n=0.8n$) in comparison with the competing methods in Section \ref{sec:competing_methods}.
The treatment or exposure variable $A$ follows $Bernoulli(0.5)$.
The noise $\varepsilon$ for the outcome model is distributed as $\mathcal{N}(0,1)$ (independently of $\bs{B}$).
The log-transformed survival times are generated under one of the following AFT scenarios:
\begin{itemize}
\itemsep=-1pt
\item[] {\bf Model 0:} $\; T = 0.2\, A + \varepsilon$, and the mediators $\bs{B} = \bs{E}$, where $\bs{E}$ has a p-dimensional  normal distribution with unit variances and an exchangeable correlation structure with each pairwise correlation  $0.5$; 
\item[] {\bf Model 1:} $\; T = 0.4\,A + 0.2\,B_1 + \varepsilon$, and the mediators $\bs{B}$ with $B_1 = A + E_1$, $B_k = 0.6\,A + E_k$ for $2 \le k \le 5$, $B_k = 0.3\,A + E_k$ for $6 \le k \le 10$ and $B_k = E_k$ for $k \ge 11$, where each component of $(E_{1},\ldots,E_{10})$ follows $\mathcal{N}(0,1)$;
$(E_{11},\ldots, E_p)$ have a $(p-10)$-dimensional  normal distribution with unit variances and an exchangeable correlation structure with each pairwise correlation 0.1;
\item[] {\bf Model 2:} $\; T = 0.4\,A +\sum_{k=1}^p\beta_kB_k+\varepsilon$ with $\beta_1=\ldots=\beta_5=0.2$, $\beta_6=\ldots=\beta_{10} = -\,0.1$, $\beta_k=0$ for $k \geq 11$, and the same structure of mediators $\bs{B}$ as specified in {\bf Model 1}. 
\end{itemize}
In Model 0, neither natural direct nor indirect effects are present. In Model 1, the exposure is only mediated through a single active mediator ($k=1$), and the maximal NIE is 0.2. In Model 2, there are ten active mediators, the first being the most influential with maximal NIE again taking the value 0.2. 
The censoring time $C$ is taken to be the logarithm of an exponential random variable with a rate parameter that gives moderate censoring ($20\%$). For each data-generating scenario, we fix the sample size at $n=800$ and consider 5 values of $p$ of the form $10^a$ (for $a=2,3, \ldots, 6$). 
The Kaplan--Meier estimator $\hat{G}_n$ is used in $S_n^*$, as justified by the independent censoring assumption; although a more sophisticated conditional Kaplan--Meier estimator could be used instead, doing so would involve an additional computational cost.

Based on the proposed confidence interval, empirical coverage probabilities of $\Psi(P)$  in  Models 0--2, using $1000$ Monte Carlo replications in each case, are displayed in Figure \ref{fig:empirical_coverage_censoring20_wholesample}. We use the full-sample-based $\hat{P}_n$ to estimate the features of $P$ used in the stabilized one-step estimator. Figure \ref{fig:CIwidth_censoring20_wholesample} presents the resulting average confidence interval widths.
The panels for Models 0--2 show coverage probabilities in the corresponding models, respectively, with the nominal level of $90 \%$ shown by the horizontal black dashed line. The results for the Bonferroni-corrected one-step estimator are highly conservative, as expected, along with the increasing confidence interval widths as $p$ grows.
Throughout with the shortest confidence intervals among all the methods,
the Oracle one-step estimator provides a fair benchmark in Models 1--2 but has over-coverage in Model 0 where no active mediators are present.
The stabilized one-step estimator provides the closest-to-nominal coverage and stability in confidence interval width throughout, apart from the Oracle one-step estimator in Models 1--2.
Figure \ref{fig:QQ_plot} shows good agreement with the asymptotic normality of the stabilized one-step estimator for $p = 10^5$ and $10^6$.

\begin{figure}[htbp]
\begin{center}
 \includegraphics[width=\textwidth]{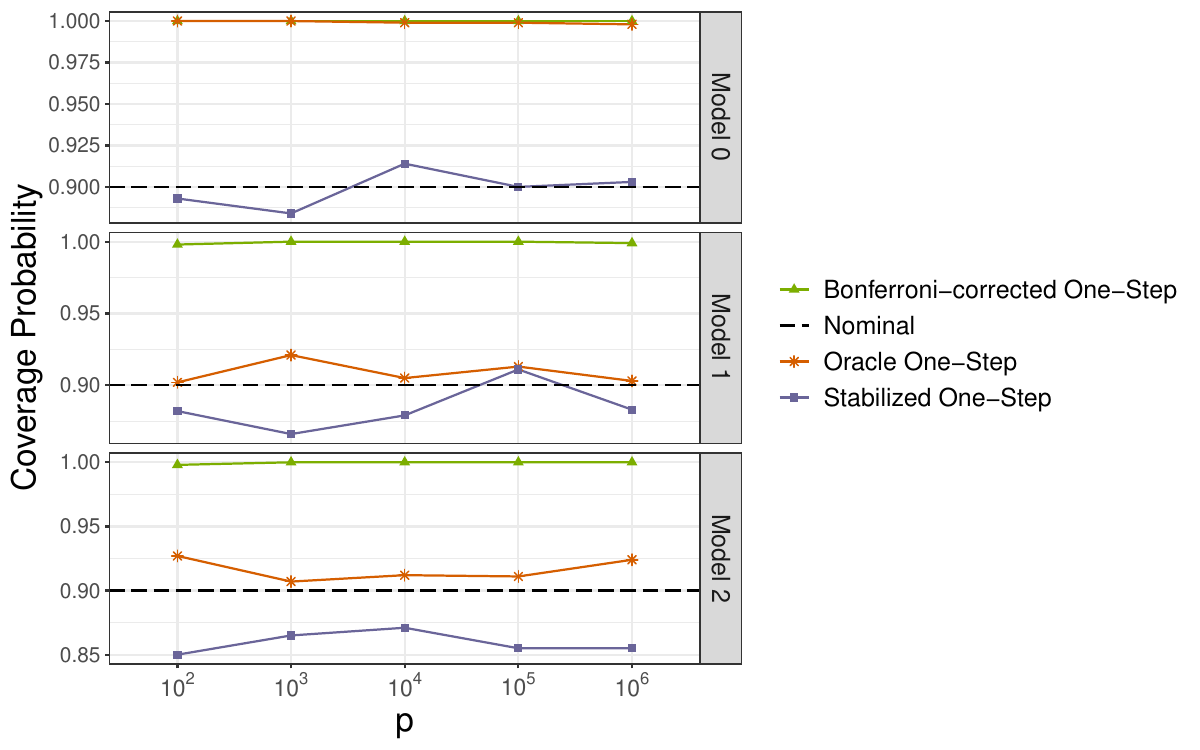}
 \caption{Empirical coverage probabilities (at nominal level 90\%) based on $1000$ samples ($n=800$) generated from Models 0--2 under moderate censoring ($20 \%$), for $p$ in the range $10^2$--$10^6$, using the full sample to obtain $\hat{P}_n$.}

\label{fig:empirical_coverage_censoring20_wholesample}
 \end{center}
\end{figure}

\begin{figure}[htbp]
\begin{center}
 \includegraphics[width=\textwidth]{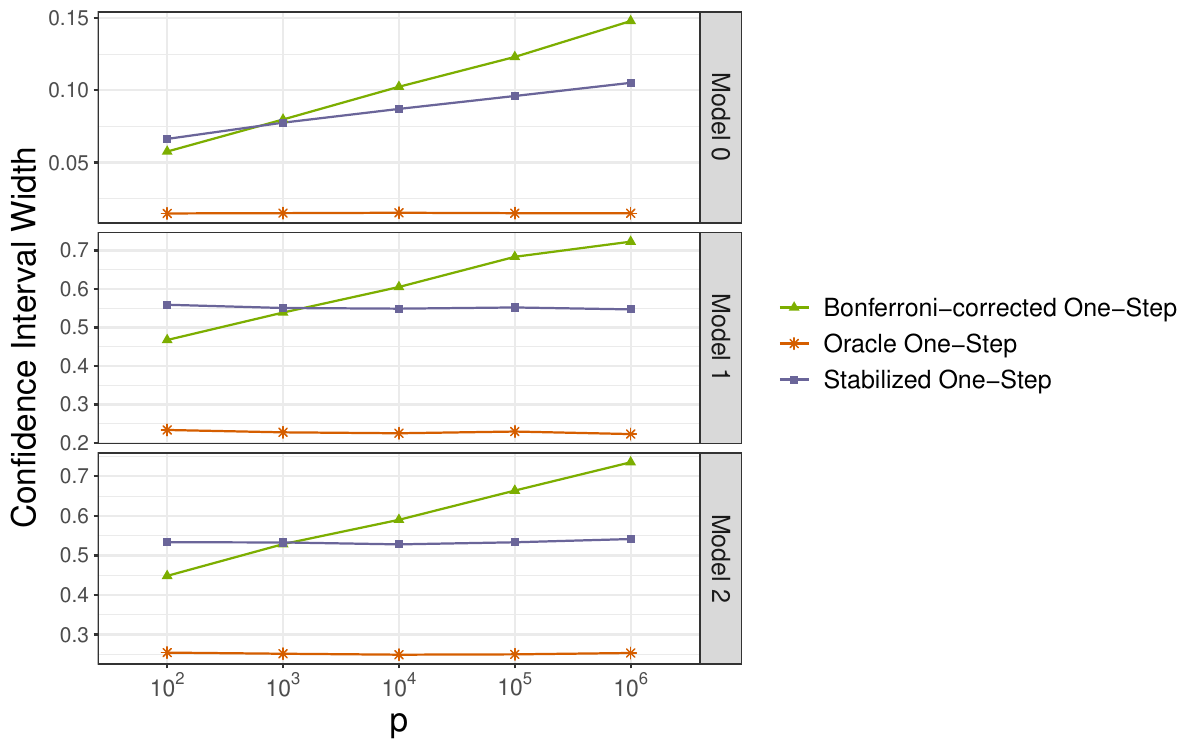}
 \caption{Average 90\% confidence interval width based on $1000$ samples ($n=800$) generated from Models 0--2 under moderate censoring ($20 \%$), for $p$ in the range $10^2$--$10^6$.}
\label{fig:CIwidth_censoring20_wholesample}
 \end{center}
\end{figure}

\begin{figure}[htbp]
\begin{center}
 \includegraphics[width=\textwidth]{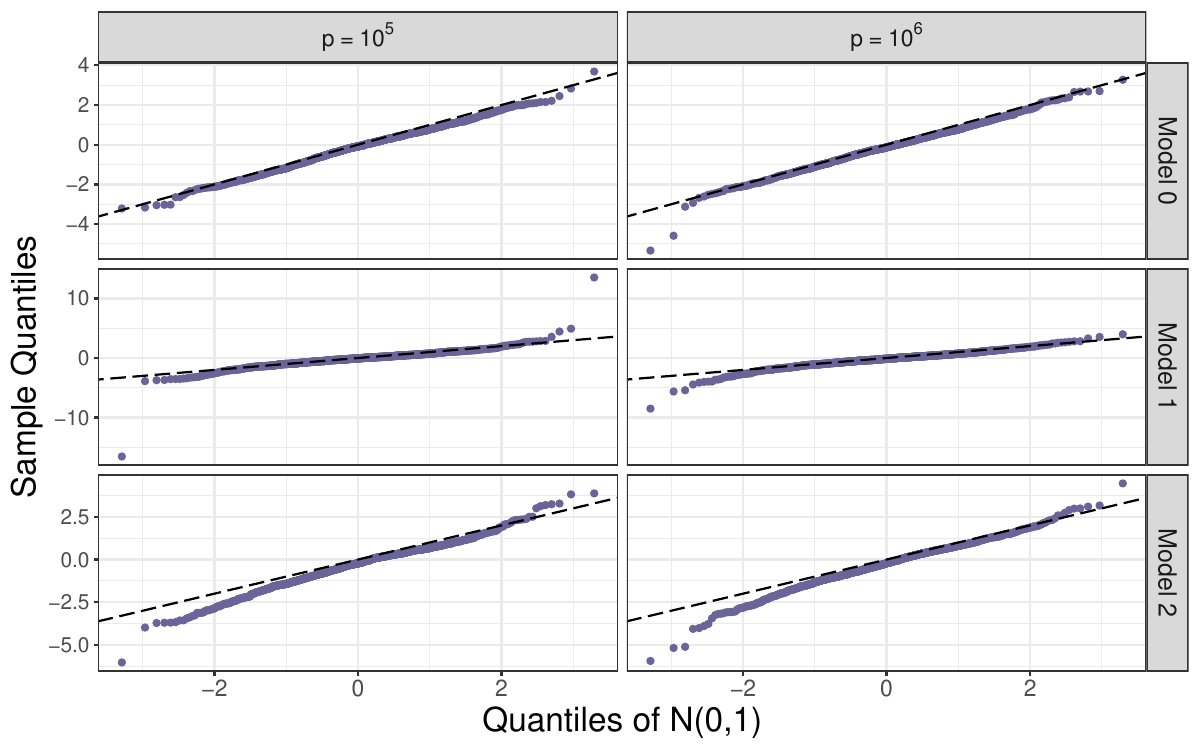}
 \caption{{\normalsize Empirical standardized test statistics based on $1000$ samples ($n=800$, $p = 10^5$ and $10^6$) plotted against standard normal quantiles.}}
\label{fig:QQ_plot}
 \end{center}
\end{figure}

\section{Confounding}
\label{sec:confounding}

In many biomedical studies with survival endpoints, various covariates can confound the relationship between mediators and the survival outcome. To examine the sensitivity of our approach to such confounding, suppose that we are given a (fully-observed) low-dimensional 
confounder $Z$. 
We should then estimate $\beta_k$ in \eqref{eq:def_parameter_k} by the least-squares estimator of the effect of the mediator $B_k$ on $T$ adjusted for $Z$, as in \cite{Koul1981}, which accordingly gives an {\it extended} version of the stabilized one-step estimator. Similarly, we can extend the Bonferroni-corrected and Oracle one-step estimators. The estimator of $\zeta_k$ could also readily be adjusted for $Z$.

In this framework, we report the results of simulation studies comparing the performance of the stabilized one-step estimator and its extended version (with $q_n=0.8n$) to the competing methods in Section \ref{sec:competing_methods} and their extended versions.
The simulation models are constructed using the parts of Models 0--2 for $T$ (as defined previously) with the inclusion of an independent $Z \sim Bernoulli(0.4)$, where $Z$ is independent of $(A, {\bs B}, \varepsilon)$ as follows: 
\begin{itemize}
\itemsep=-1pt
\item[] {\bf Model 0$'$:} $\; T = 0.2\,A - 0.1\,Z + \varepsilon$; 
\item[] {\bf Model 1$'$:} $\; T = 0.4\,A - 0.1\,Z + 0.2\,B_1 + \varepsilon$;
\item[] {\bf Model 2$'$:} $\; T = 0.4\,A - 0.1\,Z  +\sum_{k=1}^p\beta_kB_k+\varepsilon$ with $\beta_1=\ldots=\beta_5=0.2,$ $\beta_6=\ldots=\beta_{10} = -\,0.1,$ $\beta_k=0$ for $k \geq 11$. 
\end{itemize}

Compared to the results without the confounder involved, as in Figure \ref{fig:empirical_coverage_censoring20_wholesample}, empirical coverage probabilities based on $1000$ Monte Carlo replications remain close to the nominal level under the three extended models, either extended with the confounder adjusted in the estimation of natural indirect effects or not (Figure \ref{fig:empirical_coverage_censoring20_wholesample_ext}). In addition, we present the resulting average confidence interval widths in Figure \ref{fig:CIwidth_censoring20_wholesample_ext}, which shows including confounders does not increase the confidence interval width in comparison with the results in Figure \ref{fig:CIwidth_censoring20_wholesample}.
Figure \ref{fig:QQ_plot_ext} indicates that with the presence of confounders, the asymptotic normality of the stabilized one-step estimator is fairly well maintained. 

On the other hand, if the effect of the confounding is strong, then the results may well be very different between the unadjusted and extended approaches, as we see in the next section, in which case it would be advisable to rely on the extended approach. 
Another important sensitivity issue would be possible violation of the sequential ignorability assumption, as discussed in \cite{Tchetgen2012}, although we do not pursue that issue here.

\begin{figure}[htbp]
\begin{center}
 \includegraphics[width=\textwidth]{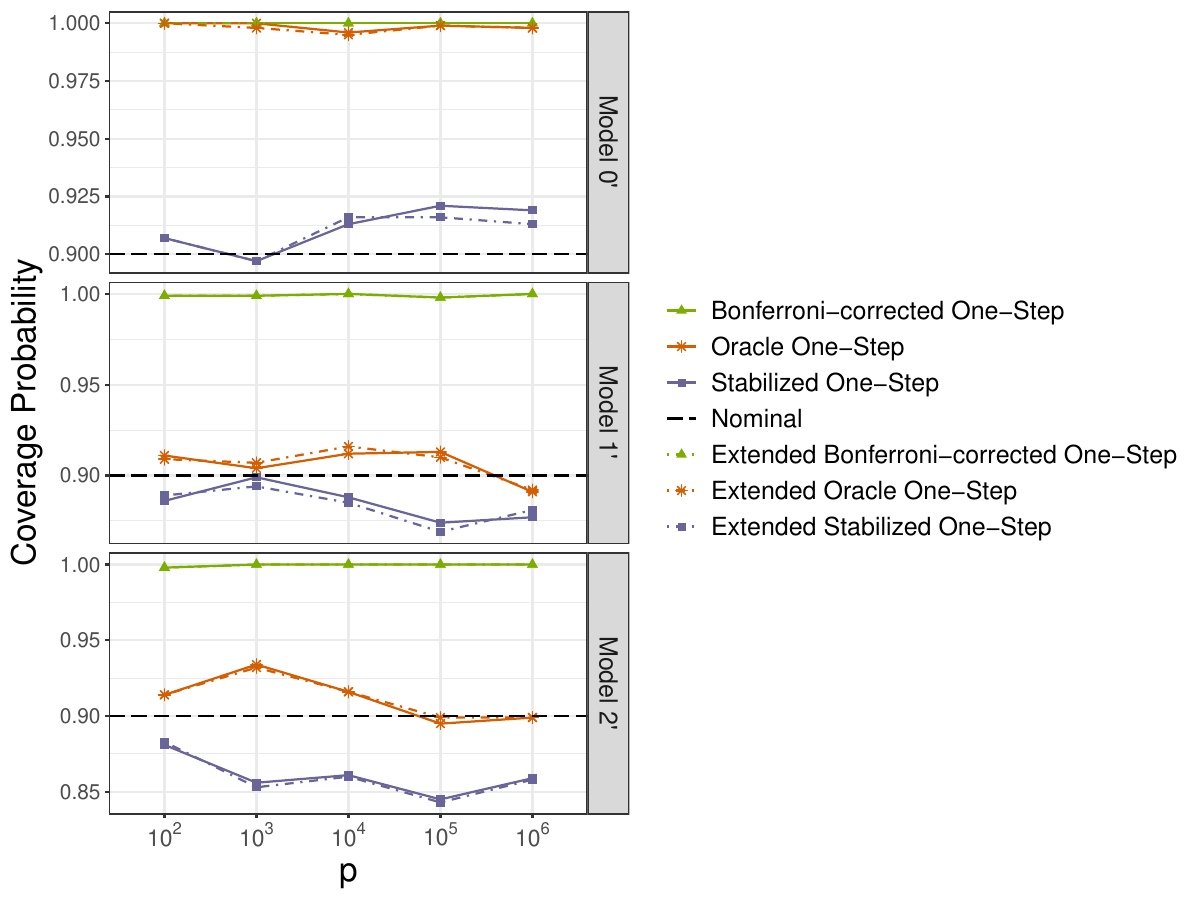}
 \caption{{\normalsize As in Figure \ref{fig:empirical_coverage_censoring20_wholesample}, except for empirical coverage probabilities based on survival times  generated from Models $0'$--$2'$.}}
\label{fig:empirical_coverage_censoring20_wholesample_ext}
 \end{center}
\end{figure}

\begin{figure}[htbp]
\begin{center}
 \includegraphics[width=\textwidth]{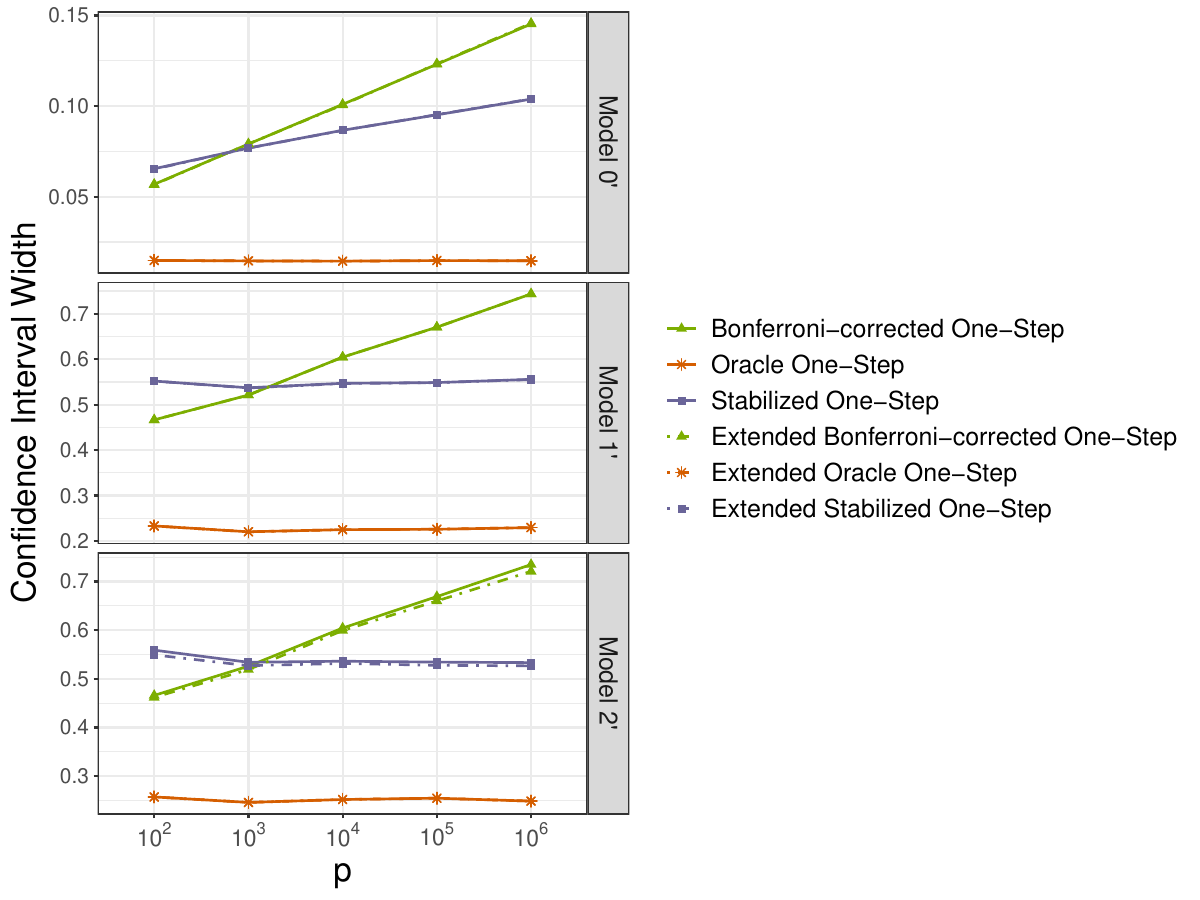}
 \caption{{\normalsize As in Figure \ref{fig:CIwidth_censoring20_wholesample}, except for average confidence interval width based on data generated from Models $0'$--$2'$.}}
\label{fig:CIwidth_censoring20_wholesample_ext}
 \end{center}
\end{figure}

\begin{figure}[htbp]
\begin{center}
 \includegraphics[width=\textwidth]{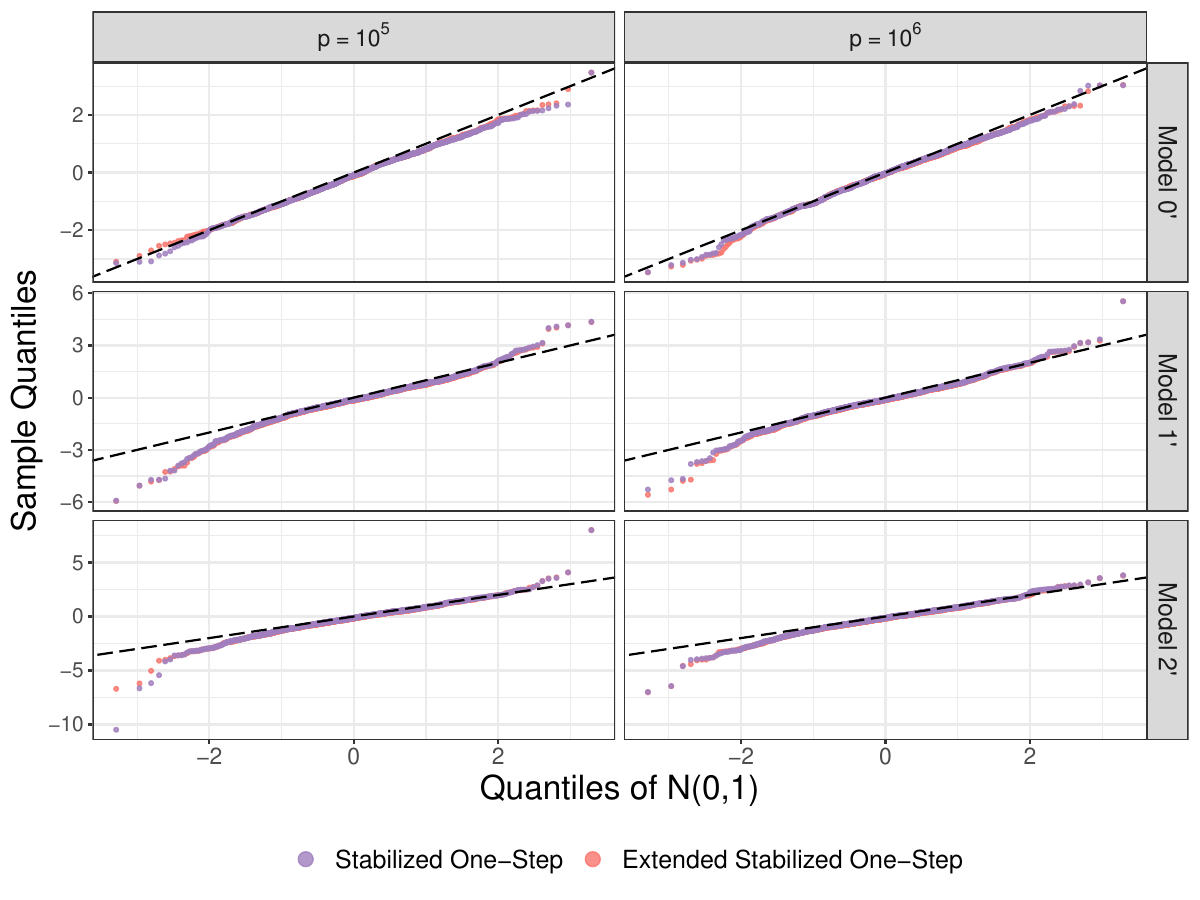}
 \caption{{\normalsize As in Figure \ref{fig:QQ_plot}, except for empirical standardized test statistics based on data generated from the extended models.}}
\label{fig:QQ_plot_ext}
 \end{center}
\end{figure}

\section{Application to lung cancer data}
\label{sec:lungcancer}
Lung cancer is one of the most prevalent types of cancer and is the leading cause of mortality worldwide \cite{sung2021global}. As reported by \cite{CAO20181483}, approximately 85\% of lung cancer cases are classified as non-small cell lung cancer (NSCLC), while the remaining 15\% are categorized as small-cell lung cancer.  \cite{breitling2011tobacco} found that tobacco smoking, an important risk factor for lung cancer, has been associated with changes in DNA methylation. 

As noted in the Introduction, DNA methylation is a reversible process \cite{wu2014reversing}, bringing considerable scientific interest to identify potential DNA methylation CpG sites that mediate the effect of smoking on the survival of lung cancer patients. We apply our proposed methods (the stabilized one-step estimator and its extended version) to analyze a cohort from the Cancer Genome Atlas (TCGA) project dataset (\url{https://xenabrowser.net/datapages/}), consisting of $n=754$ lung cancer patients aged between 33 and 90 years, and involving $p=$ 365,306 DNA methylation CpG sites. The DNA methylation profiles were measured using the Illumina Infinium HumanMethylation 450 platform and recorded by BeadStudio software. The exposure variable $A$ is smoking status (current/ever smoker versus non-smoker). 
The survival of lung cancer patients is encoded as the number of days from the initial diagnosis to the death or the censoring time. The median survival time is 1,632 days; 305 deaths were observed over the follow-up period with a censoring rate of 60\%.  Following previous papers that analyzed this dataset \cite{luo2020high,zhang2021mediation,tian2022coxmkf}, we adopt the independent censoring assumption.

Relevant confounders are age (median 68 years), gender (58\% male), pathologic stage (taking ten values: I, IA, IB, II, IIA, IIB, III, IIIA, IIIB, IV), and whether receiving radiation therapy (12\%). 
We simplify the pathologic stage variable to take the value 0 for a mild stage, 1 for moderate and 2 for severe, accounting for 53\%, 29\% and 18\% of the patients, respectively.  
As noted earlier, in practice the stabilized one-step estimator (and its extended version that adjusts for confounding) should be applied to multiple random orderings of the data, to mitigate possible selection bias from a single random ordering. We combine the results from say 100 random orderings simply using a Bonferroni correction.

In Table \ref{tab:estNIE_qn_SOSE.std}, we report
the point estimate and corresponding standard error obtained from the particular random ordering that yields the minimal p-value 
among the 100 random orderings of the data, using either the stabilized one-step estimator or its extended version, for various choices of $q_n$. The time needed to handle a single random ordering at a given value of $q_n$ is about ten minutes on a powerful desktop computer.
The Bonferroni-corrected confidence intervals (with a nominal level of 0.001), and the Bonferroni-corrected p-values, are also displayed.
We find that when adjusting for the confounders, there exists at least one significant mediator (DNA methylation at some CpG sites) for the effect of smoking on survival from lung cancer, when we choose $q_n =$ 302, 377, 419, and 503. For each value of $q_n$, the selected CpG sites across five sample splits in the most significant random ordering are listed in Figure \ref{fig:rdo100selects_plt.std},
together with their frequency of being selected. The full results are shown in Figure \ref{fig:rdo100CIs_plt.std}, where the individual point estimates and confidence intervals obtained from all 100 random orderings are displayed.  
For comparison, the testing procedures involving the Bonferroni-corrected one-step estimator and its extended version are also utilized, but no significant results can be found using those approaches (results not shown). The CpG site cg25644150 is located in the CpG island in gene SAR1B which has been found related to the development of lung cancer \cite{chen2021sar1b}. The CpG site cg04889061 is located in gene JPH3 which is a lung cancer-associated gene and is also associated with smoking \cite{bruse2014increased}. Those discovered CpG sites warrant more future follow-up studies to better understand their biological functions.

\begin{table}[ht]
\centering
\caption{For various choices of $q_n$ (with the percentage of the original data in  parentheses), the table  gives the estimates and standard errors of the NIE corresponding to the minimal p-values among 100 random orderings of the data, 
along with the Bonferroni-corrected confidence intervals (C.I.) and p-values that are adjusted for the multiple random orderings.}
\resizebox{0.85\textwidth}{!}{%
\begin{tabular}[t]{p{1cm}cccccccccc} %
\toprule
\multirow{5}{*}{\parbox{1cm}{}} & &
\multicolumn{4}{c}{Stabilized One-Step} &  &
\multicolumn{4}{c}{Extended Stabilized One-Step} \\ 
\cmidrule{3-6} \cmidrule{8-11}
 $q_n\,(\%)$ & & {\centering Est.} & 
 {\centering S.E.} & {C.I.} & {P-Value}
 & & {\centering Est.} & {\centering S.E.} &
 {C.I.} & {P-Value} \\
\midrule
\parbox{1cm}{$302\,(40\%)$} &
 & $1.49$ & $0.62$ & $(-0.55, 3.53)$ & $1.000$ &   
 & $2.03$ & $0.53$ & $(0.30, 3.77)$ & $0.012$ \\
\parbox{1cm}{$377\,(50\%)$} &
 & $1.91$ & $0.69$ & $(-0.37, 4.19)$ & $0.573$ &   
 & $2.61$ & $0.69$ & $(0.36, 4.87)$ & $0.014$ \\
\parbox{1cm}{$419\,(55\%)$} &
 & $1.91$ & $0.73$ & $(-0.50, 4.33)$ & $0.922$ &   
 & $2.52$ & $0.66$ & $(0.36, 4.69)$ & $0.013$ \\
\parbox{1cm}{$503\,(66\%)$} &
 & $2.24$ & $0.82$ & $(-0.47, 4.95)$ & $0.664$ &   
 & $3.19$ & $0.83$ & $(0.47, 5.91)$ & $0.011$ \\ 
\parbox{1cm}{$629\,(83\%)$} &
 & $3.05$ & $1.15$ & $(-0.74, 6.84)$ & $0.802$ &   
 & $2.84$ & $1.06$ & $(-0.65, 6.34)$ & $0.749$ \\ 
\bottomrule
\end{tabular}
}
\label{tab:estNIE_qn_SOSE.std}
\end{table}

\begin{figure}[htbp]
\begin{center}
 \includegraphics[width=\textwidth]{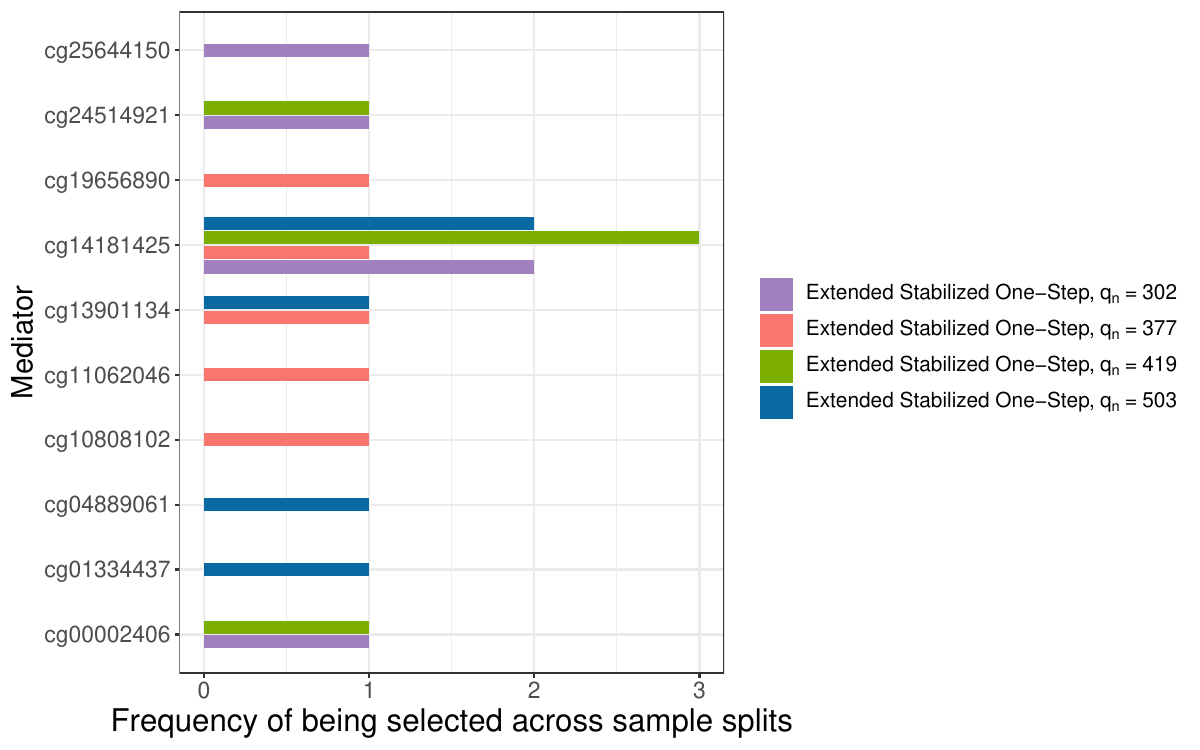}
 \caption{The selected mediators and their frequency of being selected across five sample splits in the most significant random ordering of data for each $q_n$, in which mediators are standardized.}
\label{fig:rdo100selects_plt.std}
 \end{center}
\end{figure}

\begin{figure}[htbp]
\begin{center}
 \includegraphics[width=\textwidth]{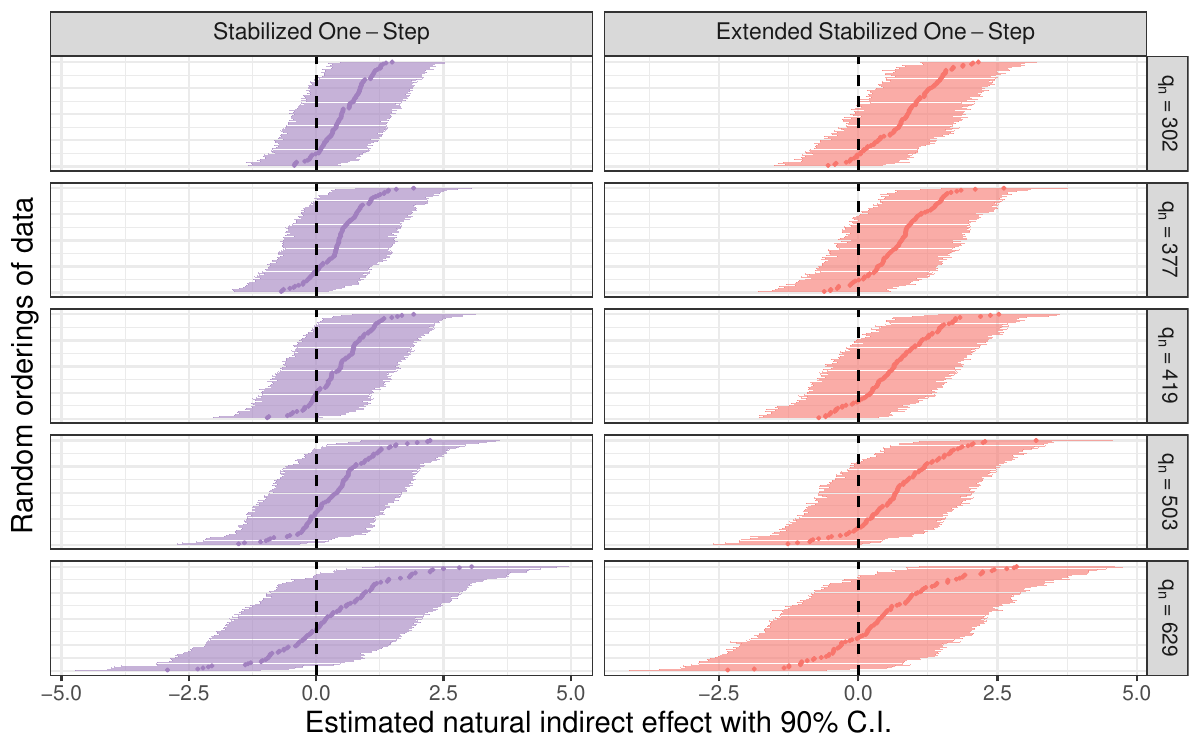}
 \caption{90\% confidence intervals (C.I.) from 100 random orderings of the lung cancer data for various values of $q_n$ along with the point estimates (dark dots), in which mediators are standardized.}
\label{fig:rdo100CIs_plt.std}
 \end{center}
\end{figure}

\section{Concluding remarks}

We have developed a post-selection inference method for the mediated effect of a binary exposure on a right-censored survival outcome and compared the numerical performance of our approach with competing methods of mediation analysis with high-dimensional mediators. This is done in terms of controlling the FWER and providing the asymptotic accuracy of a confidence interval for the natural indirect effect. In Chapter 4 of \cite{vanderweele2015explanation}, mediation analysis methods for survival outcome data have been developed, but mainly just for {\it identification}; beyond our proposed approach, little has been done in terms of formal statistical inference, especially in high-dimensional mediator settings. In the framework of
high-dimensional mediators and a survival outcome, there are only a few methods available, as reviewed in \cite{tian2022coxmkf}. Notably, \cite{tian2022coxmkf} use the knockoff method to control FDR for a survival outcome in finite-sample settings (without relying on large-sample theory) whereas, as mentioned in the Introduction, this method is not guaranteed to control FWER. Our proposed approach is the first to control FWER for inferring large-scale mediation effects with possibly right-censored survival outcomes. Given the increasing availability of genome-wide data in longitudinal follow-up studies, our method has great potential to help researchers better understand the causal pathways from exposure to survival outcomes.
One limitation of our methodology is that we assume independent censoring. Another limitation is that it is restricted to non-time-varying exposures, mediators, and confounders (because it is difficult to accommodate time-dependent covariates in the AFT framework). For future work, it would be of interest to extend our methods in such a direction.

\bibliographystyle{abbrvnat}
\bibliography{highdim_mediation}

\appendix
\renewcommand{\theequation}{S.\arabic{equation}}
\section*{Proof of Theorem 1}
We start by introducing a decomposition of the stabilized one-step estimator.
The distribution of $P$ is identified by $(Q, G, E_0)$, where $E_0(a,u,s,k) := E_{P}[\tilde Y\,|\,A=a, B_k=u, X \geq s]$. For $j \in \{q_n,\ldots,n\}$, define functions
$[\hat{E}_j - E_0] : (a,u,s,k) \mapsto \hat{E}_j(a,u,s,k) - E_0(a,u,s,k)$, 
$[\hat{Q}_{j}-Q] : (u,k) \mapsto \hat{Q}_{j}(u,k) -Q(u,k)$, and
$\hat{M}_{j}({\rm d}s) := 1(X_{j} \in {\rm d}s, \delta_{j}=0)-1(X_{j} \geq s)\, \hat{\Lambda}_n({\rm d}s)$.
Moreover, denote the probability $P(A=a)$ by $Q(a)$, for $a \in \mathcal{A}$.
Replacing in various ways each feature of $P$ by its estimator introduced gives $\hat{P}_{nj}=(\hat{E}_j, \hat{G}_n, \mathbb{Q}_{j}, \hat{q}_j, \hat{\beta}_{jk_j}, \hat{Q}_j)$;
$\hat{P}^{''}_{nj}=(E_0, \hat{G}_n, \mathbb{Q}_{j}, \hat{q}_j, \hat{\beta}_{jk_j}, \hat{Q}_j)$ and
$\hat{P}^{'''}_{nj}=(E_0, G, \mathbb{Q}_{j}, \hat{q}_j, \hat{\beta}_{jk_j}, \hat{Q}_j)$.
Therefore we are able to decompose the statistic of interest as
\begin{align} \label{eq:decomposition_rootn_Sn_star}
  &\sqrt{n-q_n}\bar{\sigma}_n^{-1}\big[S^*_n - \Psi(P)\big] \;=\; \frac{1}{\sqrt{n-q_n}}\sum_{j=q_n}^{n-1}\hat{\sigma}_{nj}^{-1}m_j\big[S_{k_j}(\delta_{O_{j+1}}, \hat{P}_{nj})-S_{k_j}(\delta_{O_{j+1}}, \hat{P}^{''}_{nj})\big] \\
  &\hspace{10pt} + \frac{1}{\sqrt{n-q_n}}\sum_{j=q_n}^{n-1}\hat{\sigma}_{nj}^{-1}m_j\big[S_{k_j}(\delta_{O_{j+1}}, \hat{P}^{''}_{nj})-S_{k_j}(\delta_{O_{j+1}}, \hat{P}^{'''}_{nj})\big] \nonumber\\
  &\hspace{10pt} + \frac{1}{\sqrt{n-q_n}}\sum_{j=q_n}^{n-1}\hat{\sigma}_{nj}^{-1}m_j\big[S_{k_j}(\delta_{O_{j+1}}, \hat{P}^{'''}_{nj})-S_{k_j}(\delta_{O_{j+1}}, P)\big] \nonumber\\
  &\hspace{10pt} + \frac{1}{\sqrt{n-q_n}}\sum_{j=q_n}^{n-1}\hat{\sigma}_{nj}^{-1}m_j\big[S_{k_j}(\delta_{O_{j+1}}, P)-\Psi_{k_j}(P)\big] \nonumber\\
  &\hspace{10pt} + \frac{1}{\sqrt{n-q_n}}\sum_{j=q_n}^{n-1}\hat{\sigma}_{nj}^{-1}\big[m_j\Psi_{k_j}(P)-\Psi(P)\big]
  := \mbox{(I) + (II) + (III) + (IV) + (V)}. \nonumber
\end{align}
Thus we can specifically have that
\begin{align} \label{eq:def_terms}
  \mbox{(I)} & =
  \frac{1}{\sqrt{n-q_n}}\sum_{j=q_n}^{n-1}\frac{m_j}{\hat{\sigma}_{nj}}\bigg\{
  - \frac{(1-A_{j+1})}{\mathbb{Q}_j(0)}\big[\hat{E}_j-E_0\big](1,B_{j+1,\,k_j},k_j) \nonumber\\
  & \hspace{2cm} + \frac{A_{j+1}}{\mathbb{Q}_j(0)}\hat{Q}_j(B_{j+1,\,k_j},k_j)\big[\hat{E}_j-E_0\big](1,B_{j+1,\,k_j},k_j) \nonumber\\
  &\hspace{2cm} + \frac{A_{j+1}}{\mathbb{Q}_j(1)}\bigg(1-\hat{Q}_j(B_{j+1,\,k_j},k_j)\frac{\mathbb{Q}_j(1)}{\mathbb{Q}_j(0)}\bigg)\int_{\mathcal{T}} \big[\hat{E}_j-E_0\big](A_{j+1},B_{j+1,\,k_j},s,k_j)\hat{M}_{j+1}({\rm d}s)\bigg\}, \nonumber \\
  \mbox{(II)} & = \frac{1}{\sqrt{n-q_n}}\sum_{j=q_n}^{n-1}\frac{m_j}{\hat{\sigma}_{nj}}\bigg\{\frac{A_{j+1}}{\mathbb{Q}_j(1)}\bigg(1 - 
  \hat{Q}_j(B_{j+1,\,k_j},k_j)\frac{\mathbb{Q}_j(1)}{\mathbb{Q}_j(0)}\bigg)\big(Y_{j+1}-\tilde{Y}_{j+1}\big) \nonumber \\
  & \hspace{1cm} - \frac{A_{j+1}}{\mathbb{Q}_j(1)}\bigg(1-\hat{Q}_j(B_{j+1,\,k_j},k_j)\frac{\mathbb{Q}_j(1)}{\mathbb{Q}_j(0)}\bigg) \int_{\mathcal{T}}E_0(A_{j+1},B_{j+1,\,k_j},s,k_j)1(X_{j+1} \ge s)[\hat{\Lambda}_n-\Lambda]({\rm d}s)\bigg\}, \nonumber \\
  \mbox{(III)} & = \frac{1}{\sqrt{n-q_n}}\sum_{j=q_n}^{n-1}\frac{m_j}{\hat{\sigma}_{nj}}\bigg\{
  - \bigg(\frac{1}{\mathbb{Q}_j(0)}-\frac{1}{Q(0)}\bigg)(1-A_{j+1})\Big[E_0(1,B_{j+1,\,k_j},k_j) - {\color {change}\hat{\beta}_{jk_j}\hat{q}_j(0,k_j)}\Big] \nonumber \\
  & \hspace{1.8cm} - \bigg(1 - \frac{1}{Q(0)}(1-A_{j+1})\bigg){\color{change}\Big[\hat{\beta}_{jk_j}\hat{q}_j(0,k_j)-\beta_{k_j}q(0,k_j)\Big]} \nonumber \\
  &\hspace{1.8cm} + \bigg(\frac{1}{\mathbb{Q}_j(1)} - \frac{1}{Q(1)}\bigg)A_{j+1}\Big[\tilde{Y}_{j+1} - {\color {change} \hat{\beta}_{jk_j}\hat{q}_j(1,k_j)} + \int_{\mathcal{T}} E_0(A_{j+1},B_{j+1,\,k_j},s,k_j)M_{j+1}({\rm d}s)\Big] \nonumber \\
  &\hspace{1.8cm} + {\color {change} \bigg(1 - \frac{1}{Q(1)}A_{j+1}\bigg)\Big[\hat{\beta}_{jk_j}\hat{q}_j(1,k_j)-\beta_{k_j}q(1,k_j)\Big]} \nonumber \\
  &\hspace{1.8cm} - \frac{1}{\mathbb{Q}_j(0)}A_{j+1}\big[\hat{Q}_j-Q\big](B_{j+1,\,k_j},k_j)\Big[\tilde{Y}_{j+1}-E_0(1,B_{j+1,\,k_j},k_j)\Big] \nonumber \\
  &\hspace{1.8cm} - \bigg(\frac{1}{\mathbb{Q}_j(0)}-\frac{1}{Q(0)}\bigg)A_{j+1}Q(B_{j+1,\,k_j},k_j)\Big[\tilde{Y}_{j+1}-E_0(1,B_{j+1,\,k_j},k_j)\Big] \nonumber \\
  & \hspace{1.8cm} - \bigg(\frac{1}{\mathbb{Q}_j(0)}-\frac{1}{Q(0)}\bigg)A_{j+1}\hat{Q}_{j}(B_{j+1,\,k_j},k_j)\int_{\mathcal{T}} E_0(A_{j+1},B_{j+1,\,k_j},s,k_j)M_{j+1}({\rm d}s) \nonumber \\
  & \hspace{1.8cm} -\frac{1}{Q(0)}A_{j+1}\big[\hat{Q}_{j}-Q\big](B_{j+1,\,k_j},k_j)\int_{\mathcal{T}} E_0(A_{j+1},B_{j+1,\,k_j},s,k_j)M_{j+1}({\rm d}s)\bigg\}.
\end{align}

Let $\tilde{\epsilon} > 0$; define $\mathcal{G}$ to be the collection of monotone nonincreasing c\`{a}dl\`{a}g functions $\tilde{G} \colon \mathcal{T} \rightarrow [0,1]$ such that $\tilde{G}(\tau) > \tilde{\epsilon}$.
Given constants $m_0, m_1 >0$, where the constants can be shown to exist following Lemma S6.2 of \cite{Huang2023}, let ${\cal BV}({\cal A} \times {\cal T})$ be the collection of functions $f \colon {\cal A} \times {\cal T} \rightarrow [-m_0,m_0]$ with total variation bounded by $m_1$.
Below we use the notation $\bs{b} =(b_1,\ldots,b_p)$. For $k\in\mathbb{N}$, $(q,v,w)\in\{0,1,2\}^3$, and $r\in\{0,1,2,3,4\}$, define the function classes
\begin{align}\label{eq:class_tilde_F}
  &\tilde{\mathcal{F}}_{1}(k,q,r) = \bigg\{(a, x, \delta, \bs{b})\: \mapsto\: b_k^{r}\Big(a\frac{\delta x}{\tilde{G}(x)}\Big)^q1(a=a', x \geq s):\: a' \in {\cal A},\: \tilde{G}\in\mathcal{G},\: s \in {\cal T} \bigg\}; \nonumber \\
  &\tilde{{\cal E}}(k,w) = \bigg\{(a',\bs{b},s)\: \mapsto\: 
  \Big[r_1(a',s) + r_2(a',s)b_k\Big]^w:\: r_1, r_2 \in {\cal BV}({\cal A} \times {\cal T}) \bigg\}; \nonumber \\
  &\tilde{\mathcal{F}}_{2}(k,v) = \bigg\{(a, x, \delta, \bs{b})\: \mapsto
  \Big[\int_{\mathcal{T}}\big[r_1(a',s) + r_2(a',s)b_k\big]1(x \geq s)\big\{1(x \in {\rm d}s, \delta=0) - \tilde{\Lambda}({\rm d}s)\big\}\Big]^v : \nonumber \\
  &\hspace{5cm} \tilde{\Lambda}(s) = -\log(\tilde{G}(s));\:
  \tilde{G}\in\mathcal{G};\: r_1,\, r_2 \in {\cal BV}({\cal A} \times {\cal T}) \bigg\}; \nonumber \\
  &\tilde{\mathcal{F}}(k,q,r,v,w) =
  \Big[\tilde{\mathcal{F}}_{1}(k,q,r)\tilde{{\cal E}}(k,w)\Big] \cup 
  \Big[\tilde{\mathcal{F}}_{1}(k,q,r)\tilde{\mathcal{F}}_{2}(k,v)\Big],
\end{align}
where for two function classes $\mathcal{H}_1$ and $\mathcal{H}_2$, we let $\mathcal{H}_1\mathcal{H}_2=\{h_1(\cdot)h_2(\cdot) : h_1\in\mathcal{H}_1,h_2\in\mathcal{H}_2\}$.
Also, for $k\in\mathbb{N}$, let $\tilde{\mathcal{F}}(k)=\cup_{q=0}^{2}\cup_{r=0}^{4} \cup_{v=0}^{2}\cup_{w=0}^{2}\tilde{\mathcal{F}}(k,q,r,v,w)$.
Following the arguments used for Lemmas S6.1-S6.5 of \cite{Huang2023}, we can show that
$\tilde{\mathcal{F}}(k)$ is a Vapnik-\v{C}ervonenkis (VC)-hull class for sets.

In the sequel, we use $\lesssim$ to denote ``bounded above up to a universal multiplicative constant that does not depend on $(j,n)$'' and $a \lor b:= \sup(a, b)$.
Suppose the number of predictors $p = p_n$ satisfies $\log(p_n)/n^{1/4} \to 0$, and the subsample size $q_n$ used for stabilization satisfies $n - q_n \to \infty$, $n\big/q_n = O(1)$, and $q_n^{1/4}\big/log(n \lor p_n) \to \infty$.
Let $\mathcal{K}_n = \{1, \ldots, p_n\}$. Henceforth we consider a large class 
$\tilde{\mathcal{F}}_n = \cup_{k\in\mathcal{K}_n}\tilde{\mathcal{F}}(k)$.
For $j =1,\ldots, n$, let $\mathbb{P}_j$ denote the empirical distribution of $O_1, \ldots, O_j$. With $K_{nj} := \{\log(n \lor p_n)\,/\,j\}^{1/2}$ for $j=q_n,\ldots,n$,
$\mathcal{I}_n := \{(a,k,s,u) : a \in \mathcal{A},\, k \in \mathcal{K}_n,\, s \in \mathcal{T},\, u \in \mathbb{R}\}$ and
$\mathcal{I}'_n := \{(a,j,k) : a \in \mathcal{A},\, j \in \{q_n,\ldots,n\},\, k \in {\cal K}_n\}$,
we define the following auxiliary events.
\begin{itemize}
  \item $\mathcal{A}_n := \cap_{j=1}^n\mathcal{A}_{nj}$, where $\mathcal{A}_{nj} := \Big\{\sup_{\tilde{f} \in \tilde{\mathcal{F}}_n} \big|(\mathbb{P}_j-P)\,\tilde{f}\big| \lesssim K_{nj}\Big\}$.
  
  \item Let $Y_n(\cdot) = \sum_{i=1}^n1(X_i \geq \cdot)$ and $\hat{\Lambda}_n(\cdot) = \int_{-\infty}^{\cdot}\big[1(Y_n(s)>0)/Y_n(s)\big]N_n({\rm d}s)$.
  \begin{align*}
    \mathcal{B}_n := \bigg\{&\sup_{s \in \mathcal{T}}\bigg|\frac{\hat{G}_n(s)}{G(s)}-1\bigg| \leq \sqrt{\frac{\log n}{n}},\:\: \sup_{s \in \mathcal{T}}\big|\hat{\Lambda}_n(s)-\Lambda(s)\big| \le \sqrt{\frac{\log n}{n}},\:\:\inf_{s \in \mathcal{T}}Y_n(s) \ge \sqrt{n},\\
    & \sup_{(a,k,s,u) \in {\cal I}_n}\big|\hat{E}_j(a,u,s,k)-E_0(a,u,s,k)\big| \lesssim K_{nj}\,\mbox{ for }\, j = q_n,\ldots,n \bigg\}.
  \end{align*}

  \item There exist $\varepsilon > 0$ and $\varepsilon' > 0$ such that
  \begin{align*}
    \mathcal{C}_n := \bigg\{
    & \sup_{(a,k,s,u) \in {\cal I}_n}\big|\hat{E}_j(a,u,s,k)\big| \lesssim K_{nj} + \sup_{(a,k,s,u) \in {\cal I}_n}\big|E_0(a,u,s,k)\big|,\; 
    \inf_{(a,j) \in \mathcal{A} \times \{q_n,\ldots,n\}}\mathbb{Q}_{j}(a) > \varepsilon,\\
    & \hspace{0.5cm} \sup_{a \in \mathcal{A}}\Big|1/\mathbb{Q}_{j}(a) - 1/Q(a)\Big| \lesssim 1/\sqrt{j},\;
    \sup_{(a,k) \in \mathcal{A} \times \mathcal{K}_n}\Big|\hat{q}_{j}(a,k) - q(a,k)\Big| \lesssim K_{nj} \\
    & \hspace{0.5cm} \sup_{k \in \mathcal{K}_n}\big|\hat{\beta}_{jk} - \beta_{k}\big| \lesssim K_{nj},\; \sup_{(k, u) \in \mathcal{K}_n \times \mathbb{R}}\big|\hat{Q}_{j}(u,k)-Q(u,k)\big| \lesssim K_{nj}\,\mbox{ for }\, j = q_n,\ldots,n \bigg\}.
  \end{align*}

  \item Provided $d_n(P_1,P_2) := P\big[\sup_{k \in \mathcal{K}_n}\big|\,f^*_{k}(O\,|\,P_1)-f^*_{k}(O\,|\,P_2)\big|1_{\mathcal{A}_n \cap \mathcal{B}_n \cap {\cal C}_n}\,\big]$ and \\
  $\hat{P}'_{nj} := (\hat{E}_j, G, \mathbb{Q}_{j}, \hat{q}_j, \hat{\beta}_{jkj}, \hat{Q}_j)$, define $\mathcal{D}_{n} = \cap_{j=q_n}^{n-1}\mathcal{D}_{nj}$, where
  \begin{align*}
    \mathcal{D}_{nj} := \Big\{d_n(\hat{P}_{nj}, \hat{P}_{nj}') \lor d_n(\hat{P}_{nj}',P) \lesssim K_{nj}\Big\}.
  \end{align*}  

  \item Let $\sigma^2_{nj} =\int f^*_{k_j}(o\,|\,P)^2 dP(o)$. There exists a constant $\varepsilon''$ such that
  \begin{align*}
    \mathcal{E}_n := \bigg\{&\min_{j \in \{q_n,\ldots,n\}} \hat{\sigma}_{nj}^2 > \varepsilon'',\; \Big|\frac{\sigma_{nj}}{\hat{\sigma}_{nj}} - 1\Big| \lesssim K_{nj}\,\mbox{ for }\, j = q_n,\ldots,n\bigg\}.
  \end{align*}

  \item Let $S(\tau) = {\rm P}(T \ge \tau)$.
  Define $\mathcal{H}_{n} = \cap_{j=q_n}^{n-1} \mathcal{H}_{nj}$, where
\begin{align*}
  \mathcal{H}_{nj} := \bigg\{\sup_{s \in \mathcal{T}}\bigg|\frac{\hat{G}_j(s)}{G(s)}-1\bigg| \le \frac{1}{S(\tau)}\sqrt{\frac{\log n}{j}}\: \bigg\}.
\end{align*}
 Note that $S(\tau)>0$, following that $G(\tau)>0$ by \ref{assump:Survival function}, that ${\rm P}(X \ge \tau) > 0$ by \ref{assump:At-risk prob}, and the independent censoring assumption that implies ${\rm P}(X \ge \tau)=S(\tau)G(\tau)$.
\end{itemize}
Assembling all the above auxiliary events gives ${\cal L}_n =
\mathcal{A}_n \cap \mathcal{B}_n \cap \mathcal{C}_{n} \cap \mathcal{D}_{n} \cap \mathcal{E}_{n} \cap \mathcal{H}_{n}$ and by Lemma \ref{lemma:Ln_prob_to_one} below, we can show that
\begin{align}
    {\rm P}({\cal L}_n)\to 1 \label{eq:Ln1}
\end{align}
when the conditions of Theorem \ref{Thm:stab_one_step} hold.
By \eqref{eq:Ln1}, it suffices to show the asymptotically negligibility of {\rm (I),\; (II),\; (III)} and {\rm (V)} after multiplication by $1_{{\cal L}_n}$, which are postponed to Lemmas \ref{lemma:asymp.I}--\ref{lemma:asymp.V}.

Replacing $(\mathbb{P}_n,\hat{P}_n)$ in the expression in \eqref{eq:if_onestep} by $(\delta_{O_{j+1}}, P)$ and with the predictor taken as $B_{k_j}$ gives that $S_{k_j}(\delta_{O_{j+1}}, P) = \Psi_{k_j}(P) + f^*_{k_j}(O_{j+1}\,|\,P)$. This further implies that
\begin{align*}
  \mbox{(IV)} = \frac{1}{\sqrt{n-q_n}}\sum_{j=q_n}^{n-1}\hat{\sigma}_{nj}^{-1}m_j\,f^*_{k_j}(O_{j+1}\,|\,P),    
\end{align*}
where $f_{k_j}^*(\cdot\,|\, P)$ is $f^*_k(\cdot\,|\, P)$ with the predictor $B_k$ taken as $B_{k_j}$, and the asymptotic normality of (IV) follows the martingale central limit theorem for triangular arrays (e.g., Theorem 2 in \cite{Gaenssler1978}) and the arguments in Appendix C of \cite{Huang2023}. 

\subsection*{Proofs of lemmas}
\label{app-sec:lemmas}
 Before proceeding to the required lemmas and their proofs we make the following asymptotic stability assumptions:
\begin{assumpenum}[resume=assumptions]
  \item \label{assump:Conditional_mean_E0}
  $\hat{E}_n$ defined in \eqref{eq:est_E} 
  for a given mediator $B_k$ consistently estimates (pointwise in its arguments) the true conditional mean residual life function $E_0(a,u,s,k):= E_P[\tilde{Y}\,|\, A=a, B_k=u, X \ge s]$, which is assumed to be uniformly-bounded and left-continuous in $s$, and with $k_j$ defined in \eqref{eq:predictor_selection},
  \begin{align*}
   E\bigg[\sup_{(j,s) \in \{q_n,\ldots,n\}
  \times {\cal T}}\big|E_0(A,B_{k_j},s,k_j)-E_0(A,B_{k_{j-1}},s,k_{j-1})\big|\bigg] = o(n^{-1/2}).
  \end{align*}
  
  \item \label{assump:Signal strength} If the parameter $\Psi_k(P)\neq 0$ for some $k$, there exists a sufficiently large $c > 0$ and a sequence of non-empty subsets $\mathcal{K}_n^* \subseteq \mathcal{K}_n=\{1,\ldots,p_n\}$ such that
  \begin{align*}
    \inf_{k \in \mathcal{K}_n^*}\big|\Psi_k(P)\big| - \sup_{l \in \mathcal{K}_n \setminus \mathcal{K}_n^*}\big|\Psi_l(P)\big| \ge c\sqrt{\frac{\log(n \lor p_n)}{q_n}},
  \end{align*}
  where the supremum over $l \in \mathcal{K}_n \setminus \mathcal{K}_n^*$ is defined to be 0 if $\mathcal{K}_n^*=\mathcal{K}_n$, and
  \[
  {\rm Diam}(\mathcal{K}^*_n) := \sup_{k, l \in \mathcal{K}^*_n}\big|\,\big|\Psi_k(P)\big|-\big|\Psi_l(P)\big|\,\big| = o(n^{-1/2}).
  \]
\end{assumpenum}

\begin{lemma} \label{lemma:An_Bn_prob_to_one}
For any sample size $n$, the event $\mathcal{A}_n$ occurs with probability at least $1-1/n$. Under the conditions of Theorem \ref{Thm:stab_one_step}, 
${\rm P}(\mathcal{A}_n\,\cap\, \mathcal{B}_n) \to 1$.
\end{lemma}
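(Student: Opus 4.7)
I would prove the two claims separately, leveraging the VC-hull property of $\tilde{\mathcal{F}}_n$ that has already been established in the discussion preceding the lemma.

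\textbf{Controlling $\mathcal{A}_n$.} The class $\tilde{\mathcal{F}}_n = \cup_{k \in \mathcal{K}_n}\tilde{\mathcal{F}}(k)$ is a union of $p_n$ VC-hull classes. Under \ref{assump:Mediator}--\ref{assump:Survival function} and the restriction $\tilde{G}(\tau) > \tilde{\epsilon}$ baked into the definition, every $\tilde{f} \in \tilde{\mathcal{F}}_n$ has a uniformly bounded envelope $F$. Hence the uniform covering numbers satisfy
\[
\log N(\epsilon \lVert F\rVert_{L_2(Q)}, \tilde{\mathcal{F}}_n, L_2(Q)) \;\lesssim\; \log p_n + V \log(1/\epsilon)
\]
uniformly in probability measures $Q$, where $V$ is a universal VC-hull index. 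Dudley's entropy integral then gives $E\sup_{\tilde{f} \in \tilde{\mathcal{F}}_n}|(\mathbb{P}_j - P)\tilde{f}| \lesssim \sqrt{\log p_n/j}$, and Talagrand's inequality (applied with deviation parameter $c\sqrt{\log(n\vee p_n)/j}$) yields
\[
\mathrm{P}\!\Big(\sup_{\tilde{f}\in\tilde{\mathcal{F}}_n}|(\mathbb{P}_j - P)\tilde{f}| > C K_{nj}\Big) \le \exp(-c\log(n \vee p_n)) \le (n\vee p_n)^{-c}
\]
for $C$ sufficiently large. A union bound across $j = 1,\ldots, n$ costs a factor of $n$, so taking $c$ large enough gives $\mathrm{P}(\mathcal{A}_n^c) \le 1/n$.

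\textbf{Controlling $\mathcal{B}_n$.} Treat the four events inside $\mathcal{B}_n$ in turn. First, the exponential inequality of Bitouz\'{e}--Laurent--Massart for the Kaplan--Meier estimator, combined with $G(\tau)>0$ from \ref{assump:Survival function}, yields $\sup_{s \in \mathcal{T}}|\hat{G}_n(s)/G(s) - 1| \le \sqrt{\log n/n}$ with probability $\to 1$. Second, since $\hat{\Lambda}_n = -\log \hat{G}_n$ (modulo the discrete-continuous correction) and $G(\tau) > 0$, the same bound transfers to $\sup_s|\hat{\Lambda}_n - \Lambda|$ by a delta-method argument. Third, $\inf_{s\in\mathcal{T}} Y_n(s) = Y_n(\tau)$, and by \ref{assump:At-risk prob} together with Hoeffding, $Y_n(\tau) \ge \tfrac{1}{2} n\,\mathrm{P}(X \ge \tau) \ge \sqrt{n}$ with probability $1 - e^{-cn}$.

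The fourth piece, uniform consistency of $\hat{E}_j$ at rate $K_{nj}$, is the main obstacle. The strategy is to use the closed form \eqref{eq:est_E} and write
\[
\hat{E}_j(a,u,s,k) - E_0(a,u,s,k) \;=\; \frac{N_j(a,s,k) + (u - \mathbb{P}_j[B_k 1(X\ge s, A=a)])\,\mathrm{Num}_j(a,s,k)}{\mathrm{Den}_j(a,s,k)}
\]
plus an explicit bias term, where $N_j$, $\mathrm{Num}_j$, and $\mathrm{Den}_j$ are differences of empirical integrals of functions lying in the VC-hull class $\tilde{\mathcal{F}}_n$ (after replacing $Y$ by $\tilde{Y}$ at the cost of an error controlled by the KM bound above). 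On the event $\mathcal{A}_n$ these numerators are uniformly $O(K_{nj})$ across $(a,s,u,k)$, and the denominator $\mathrm{Den}_j$ is bounded below by \ref{assump:Variances} (which forces the empirical variance of $B_k 1(X\ge s, A=a)$ away from zero on a further high-probability event). Assumption \ref{assump:Conditional_mean_E0} handles the bias. Dividing and taking suprema over $(a,u,s,k) \in \mathcal{I}_n$ and $j=q_n,\ldots,n$ yields the claimed rate. Intersecting all four high-probability events with $\mathcal{A}_n$, each of which has probability tending to one under the rate assumption $\log(p_n)/n^{1/4} \to 0$ (so that $K_{nn}$ is still $o(1)$), gives $\mathrm{P}(\mathcal{A}_n \cap \mathcal{B}_n) \to 1$.
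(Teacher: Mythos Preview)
Your outline is correct and matches the standard route that the paper defers to (it simply cites Lemmas~S6.6--S6.7 of \cite{Huang2023} without further detail). The ingredients you invoke---a maximal inequality plus Talagrand-type concentration for $\mathcal{A}_n$, the Bitouz\'e--Laurent--Massart bound for the Kaplan--Meier ratio, Hoeffding for $Y_n(\tau)$, and an algebraic decomposition of $\hat{E}_j-E_0$ into empirical-process numerators over a variance denominator controlled by \ref{assump:Variances}---are exactly the tools one expects in those referenced lemmas.

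One technical slip to clean up: you state the uniform entropy bound as $\log N(\epsilon\|F\|_{L_2(Q)},\tilde{\mathcal{F}}_n,L_2(Q))\lesssim \log p_n + V\log(1/\epsilon)$, which is the bound for a union of $p_n$ VC \emph{subgraph} classes. The paper establishes that each $\tilde{\mathcal{F}}(k)$ is a VC-\emph{hull} class, for which the correct entropy bound is of the form $\log N(\epsilon)\lesssim (1/\epsilon)^{\rho}$ for some $\rho<2$ (Theorem~2.6.9 in van~der~Vaart--Wellner). This does not break your argument---Dudley's integral $\int_0^1 \sqrt{\log N(\epsilon)}\,d\epsilon$ still converges and Talagrand's inequality still applies with the same conclusion---but the stated logarithmic bound is not what the VC-hull property delivers, so you should adjust the display accordingly.
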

\begin{proof}
The proof can be completed, following Lemmas S6.6-S6.7 of \cite{Huang2023}.
\end{proof}

\begin{lemma} \label{lemma:An_Bn_Cn_prob_to_one}
Under the conditions of Theorem \ref{Thm:stab_one_step}, the event $\mathcal{C}_n$ is the intersection of
\begin{lemmaitemenum}
  \item \label{eq:event_Cn_1}
  $\sup_{(a,k,s,u) \in {\cal I}_n}\big|\hat{E}_j(a,u,s,k)\big| \lesssim K_{nj} + \sup_{(a,k,s,u) \in {\cal I}_n}\big|E_0(a,u,s,k)\big|$ for $j = q_n,\ldots,n$,
  \item \label{eq:event_Cn_2} $\inf_{(a,j) \in \mathcal{A} \times \{q_n,\ldots,n\}}\mathbb{Q}_j(a)$ is bounded away from zero,
  \item \label{eq:event_Cn_3} $\sup_{a \in \mathcal{A}}\big|1/\mathbb{Q}_j(a) - 1/Q(a)\big| \lesssim 1/\sqrt{j}$ for $j = q_n,\ldots,n$,
  \item \label{eq:event_Cn_4} $\sup_{(a,k) \in \mathcal{A} \times \mathcal{K}_n}\big|\hat{q}_{j}(a,k) - q(a,k)\big| \lesssim K_{nj}$ for $j = q_n,\ldots,n$,
  \item \label{eq:event_Cn_5} {\color {change} $\sup_{k \in \mathcal{K}_n}\big|\hat{\beta}_{jk} - \beta_{k}\big| \lesssim K_{nj}$ for $j = q_n,\ldots,n$,}
  \item \label{eq:event_Cn_6} {\color {change} $\sup_{(k, u) \in \mathcal{K}_n \times \mathbb{R}}\big|\hat{Q}_{j}(u,k)-Q(u,k)\big| \lesssim K_{nj}$ for $j = q_n,\ldots,n$.}
\end{lemmaitemenum}
where each of \ref{eq:event_Cn_1}, \ref{eq:event_Cn_3}, \ref{eq:event_Cn_4}, \ref{eq:event_Cn_5} and \ref{eq:event_Cn_6} relies on appropriately specified constants that do not depend on $(j,n)$. Then such constants exist such that
${\rm P}(\mathcal{A}_n \cap \mathcal{B}_n\cap {\cal C}_n) \to 1$.
\end{lemma}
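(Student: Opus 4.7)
The plan is to leverage Lemma \ref{lemma:An_Bn_prob_to_one}, which already establishes $\mathrm{P}(\mathcal{A}_n \cap \mathcal{B}_n) \to 1$, so that it suffices to verify each of the six bounds \ref{eq:event_Cn_1}--\ref{eq:event_Cn_6} defining $\mathcal{C}_n$ holds with probability tending to one, and then to invoke a union bound. The rate $K_{nj} = \{\log(n \vee p_n)/j\}^{1/2}$ that appears throughout is tailored to absorb a union bound over the $O(n)$ values of $j \in \{q_n,\ldots,n\}$ and over the $p_n$ coordinates of the mediator vector, so the overarching tool will be Bernstein-type concentration in concert with \ref{assump:Mediator} (bounded mediators), \ref{assump:Positivity} (positivity), \ref{assump:Variances} (variances bounded away from zero and infinity), and the rate conditions $\log(p_n)/n^{1/4} \to 0$ and $n/q_n = O(1)$ assumed in Theorem \ref{Thm:stab_one_step}.

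For \ref{eq:event_Cn_1}, I would simply apply the triangle inequality on the event $\mathcal{B}_n$, on which $\sup_{(a,k,s,u) \in \mathcal{I}_n}|\hat{E}_j - E_0| \lesssim K_{nj}$ by definition. For \ref{eq:event_Cn_3}, $\mathbb{Q}_j(a)$ is an empirical average of the bounded indicator $1(A=a)$, and Bernstein's inequality combined with a union bound over $a \in \mathcal{A}$ and $j \in \{q_n,\ldots,n\}$ (costing a $\log n$ factor) gives $\sup_a|\mathbb{Q}_j(a) - Q(a)| \lesssim \sqrt{\log n /j}$; the delta-type identity $1/\mathbb{Q}_j(a) - 1/Q(a) = -[\mathbb{Q}_j(a)-Q(a)]/[\mathbb{Q}_j(a) Q(a)]$ then yields \ref{eq:event_Cn_3} and, simultaneously, \ref{eq:event_Cn_2}, since $Q(a)>0$ by \ref{assump:Positivity}. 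For \ref{eq:event_Cn_4}, writing $\hat{q}_j(a,k) = \mathbb{P}_j[B_k 1(A=a)] / \mathbb{Q}_j(a)$, the numerator is an average of the bounded random variable $B_k 1(A=a)$ (bounded by \ref{assump:Mediator}), so Bernstein plus a union bound over $k \in \mathcal{K}_n$ and $j$ gives the uniform rate $K_{nj}$; combining with \ref{eq:event_Cn_2}--\ref{eq:event_Cn_3} delivers the ratio.

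For \ref{eq:event_Cn_5}, I would decompose the KSV formula into its empirical variance and covariance components, first replacing the IPCW response $Y = \delta X/\hat{G}_n(X)$ by the oracle $\tilde{Y} = \delta X/G(X)$ at an additive cost of order $\sqrt{\log n/n}$, which is admissible on $\mathcal{B}_n$ since $\sup_{s \in \mathcal{T}}|\hat{G}_n(s)/G(s)-1| \lesssim \sqrt{\log n/n}$. Each empirical moment of $(B_k, \tilde{Y}, A)$ then concentrates uniformly in $k$ via Bernstein and a union bound over $\mathcal{K}_n$, using \ref{assump:Mediator} to ensure boundedness of $B_k$ and \ref{assump:Survival function}--\ref{assump:At-risk prob} to control the tails of $\tilde{Y}$; assumption \ref{assump:Variances} is used to ensure that the denominator ${\rm Var}_{\mathbb{P}_j}(A){\rm Var}_{\mathbb{P}_j}(B_k) - {\rm Cov}^2_{\mathbb{P}_j}(A,B_k)$ is bounded away from zero uniformly in $k$ and $j$, so the ratio inherits the rate $K_{nj}$. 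For \ref{eq:event_Cn_6}, assuming that $\hat{Q}_j(u,k)$ is the logistic-regression plug-in estimator of $P(A=0\,|\,B_k=u)/P(A=1\,|\,B_k=u)$, standard M-estimator theory gives $\sqrt{j}$-consistency of the estimated coefficient vector for each fixed $k$; the boundedness of $u$ in \ref{assump:Mediator} and smoothness of the logistic link transfer this into a uniform-in-$u$ rate, and a final union bound over $k \in \mathcal{K}_n$ introduces the $\sqrt{\log p_n}$ factor.

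The hard part will be \ref{eq:event_Cn_5}: the KSV estimator is a ratio whose denominator is itself an empirical process over $k$, so I need to verify uniform positivity of the denominator jointly over $k \in \mathcal{K}_n$ and $j \in \{q_n, \ldots, n\}$, and the inverse-probability weighting couples the empirical moments with $\hat{G}_n$ in a way that demands tracking the interaction between the Kaplan--Meier estimator and the empirical process of $(B_k, \tilde{Y})$. Once all six bounds are in place, the subadditivity $\mathrm{P}(\mathcal{A}_n \cap \mathcal{B}_n \cap \mathcal{C}_n^c) \le \mathrm{P}(\mathcal{A}_n \cap \mathcal{B}_n)^c + \sum_{i=1}^{6} \mathrm{P}(\text{bound }i\text{ fails})$ together with Lemma \ref{lemma:An_Bn_prob_to_one} delivers the claim.
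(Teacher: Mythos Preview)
Your proposal is correct and the overall structure matches the paper's: verify each of \ref{eq:event_Cn_1}--\ref{eq:event_Cn_6} on $\mathcal{A}_n \cap \mathcal{B}_n$ and conclude by subadditivity. The one substantive difference is that you plan to derive each concentration bound from scratch via Bernstein plus union bounds over $j$ and $k$, whereas the paper economizes by reading most of these directly off the event $\mathcal{A}_n$, which already encodes $\sup_{\tilde f \in \tilde{\mathcal F}_n}|(\mathbb{P}_j - P)\tilde f| \lesssim K_{nj}$ uniformly in $j$ via the VC-hull class machinery of \cite{Huang2023}. Since the indicator $1(A=a)$, the product $B_k 1(A=a)$, and the empirical moments entering $\hat{\beta}_{jk}$ all lie in $\tilde{\mathcal F}_n$, the paper obtains \ref{eq:event_Cn_2}--\ref{eq:event_Cn_4} and the denominator stability in \ref{eq:event_Cn_5} essentially for free on $\mathcal{A}_n$; for the numerator in \ref{eq:event_Cn_5} it writes out a four-term decomposition separating the $Y$-versus-$\tilde Y$ discrepancy (handled by Cauchy--Schwarz on $\mathcal{B}_n$) from the purely empirical fluctuations (handled by $\mathcal{A}_n$ and the weak law of large numbers), which is precisely what your sketch anticipates. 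For \ref{eq:event_Cn_6} both approaches invoke M-estimation theory for the logistic MLE; the paper specifically cites Theorem~5.39 and Example~5.40 of \cite{Vaart1998} together with the continuous mapping theorem. Your route is valid but duplicates work already packaged in Lemma~\ref{lemma:An_Bn_prob_to_one}; leaning on $\mathcal{A}_n$ as the paper does is shorter and avoids re-tracking constants.
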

\begin{proof}
According to \ref{assump:Conditional_mean_E0},
$\sup_{(a,k,s,u)}\big|E_0(a,u,s,k)\big|$ is uniformly bounded by some $(j,n)$-independent constant. Therefore when ${\cal B}_n$ occurs, we have $\sup_{(a,k,s,u)}\big|\hat{E}_j(a,u,s,k) - E_0(a,u,s,k)\big| \leq \tilde{K}K_{nj}$ for all $j$, implying that
\begin{align*}
  \sup_{(a,k,s,u)}\big|\hat{E}_j(a,u,s,k)\big| \leq
  \tilde{K}K_{nj} + \sup_{(a,k,s,u)}\big|E_0(a,u,s,k)\big| \lesssim K_{nj} + \sup_{(a,k,s,u)}\big|E_0(a,u,s,k)\big|.
\end{align*}
Thus, letting $\mathcal{C}_{n1}$ denote the event that
\begin{align*}
  \sup_{(a,k,s,u)}\big|\hat{E}_j(a,u,s,k)\big| \lesssim K_{nj} + \sup_{(a,k,s,u)}\big|E_0(a,u,s,k)\big| \mbox{ for } j=q_n,\ldots,n,
\end{align*}  
we have 
${\rm P}(\mathcal{A}_n \cap \mathcal{B}_n \cap {\cal C}_{n1}^c) \to 0$.

Following \ref{assump:Positivity}, there exists $\zeta>0$ such that $\min_{a \in \mathcal{A}}Q(a) \geq \zeta$. When ${\cal A}_{n}$ holds, then there exists $\eta>0$ such that for all $n \in \mathbb{N}$ and $a \in \mathcal{A}$, $\big|\mathbb{Q}_j(a)-Q(a) \big| \leq \eta\,j^{\,-1/2}$. This yields, for all $n$ and all $a \in \mathcal{A}$, 
\begin{align*}
  \mathbb{Q}_j(a) = Q(a)+\mathbb{Q}_j(a)-Q(a)
  \geq Q(a) - \big|\mathbb{Q}_j(a)-Q(a)\big|  \geq \zeta - \eta\,j^{\,-1/2}.
\end{align*}
Provided that $j^{\,-1/2} \to 0$ as $n$ grows sufficiently large, from the above display, we see that for all such $n$, $\mathbb{Q}_j(a) \geq \zeta$ when $\mathcal{A}_n \cap \mathcal{B}_n$ occurs.  Denote the event that $\inf_{(a,j)}\mathbb{Q}_j(a)\geq \zeta$ as $\mathcal{C}_{n2}$, and we show 
${\rm P}(\mathcal{A}_n \cap \mathcal{B}_n \cap {\cal C}_{n2}^c) \to 0$.

Let $\mathcal{C}_{n3}$ correspond to the event \ref{eq:event_Cn_3},
which is implied by $\mathcal{A}_{n} \cap \mathcal{B}_{n} \cap \mathcal{C}_{n2}$ and \ref{assump:Positivity}, thus giving
${\rm P}(\mathcal{A}_n \cap \mathcal{B}_n \cap \mathcal{C}_{n2} \cap \mathcal{C}_{n3}^c) \to 0.$
Similarly, let $\mathcal{C}_{n4}$ correspond to the event \ref{eq:event_Cn_4},
which is implied by $\mathcal{A}_{n} \cap \mathcal{B}_{n} \cap \mathcal{C}_{n2} \cap \mathcal{C}_{n3}$, \ref{assump:Positivity} and \ref{assump:Mediator}, thus giving ${\rm P}(\mathcal{A}_n \cap \mathcal{B}_n \cap \mathcal{C}_{n2} \cap \mathcal{C}_{n3} \cap \mathcal{C}_{n4}^c) \to 0.$

{\color {change} To show the event \ref{eq:event_Cn_5}, first note that
\begin{align}\label{eq:def.beta_k}
  \beta_{k} := \frac{{\rm Var}_{P}(A){\rm Cov}_{P}(B_k,\tilde{Y})-{\rm Cov}_{P}(A,B_k){\rm Cov}_{P}(A,\tilde{Y})}{{\rm Var}_{P}(A){\rm Var}_{P}(B_k) - {\rm Cov}^2_{P}(A,B_k)},
\end{align}
and for each $j$, we upper bound $\sup_{k \in \mathcal{K}_n}\big|\hat{\beta}_{jk} - \beta_k \big|$ as follows. 
\begin{align}
  &\sup_{k \in \mathcal{K}_n}\big|\hat{\beta}_{jk} - \beta_k \big|
  \le \sup_{k \in \mathcal{K}_n}\bigg|\bigg\{\frac{{\rm Var}_{\mathbb{P}_{j}}(A)}{{\rm Var}_{\mathbb{P}_j}(A){\rm Var}_{\mathbb{P}_j}(B_k) - {\rm Cov}^2_{\mathbb{P}_{j}}(A,B_k)}-\frac{{\rm Var}_P(A)}{{\rm Var}_P(A){\rm Var}_P(B_k) - {\rm Cov}^2_P(A,B_k)}\bigg\} \nonumber \\
  & \hspace{3.8cm} \times \bigg\{\frac{1}{j}\sum_{i=1}^j\big(B_{ik} - \mathbb{P}_j[B_k]\big)Y_i\bigg\}\bigg| \nonumber \\
  & \quad + \sup_{k \in \mathcal{K}_n}\bigg|\frac{{\rm Var}_P(A)}{{\rm Var}_P(A){\rm Var}_P(B_k) - {\rm Cov}^2_P(A,B_k)}\bigg\{\frac{1}{j}\sum_{i=1}^j\big(B_{ik} - \mathbb{P}_j[B_k]\big)Y_i - {\rm Cov}_P(B_{k},\tilde{Y})\bigg\}\bigg| \nonumber \\
  & \quad + \sup_{k \in \mathcal{K}_n}\bigg|\bigg\{\frac{{\rm Cov}_{\mathbb{P}_{j}}(A,B_k)}{{\rm Var}_{\mathbb{P}_j}(A){\rm Var}_{\mathbb{P}_j}(B_k) - {\rm Cov}^2_{\mathbb{P}_{j}}(A,B_k)} - \frac{{\rm Cov}_P(A,B_k)}{{\rm Var}_P(A){\rm Var}_P(B_k) - {\rm Cov}^2_P(A,B_k)}\bigg\}\nonumber \\
  & \hspace{1.8cm} \times\bigg\{\frac{1}{j}\sum_{i=1}^j\big(A_i - \mathbb{P}_j[A]\big)Y_i\bigg\}\bigg| \nonumber \\
  & \quad + \sup_{k \in \mathcal{K}_n}\bigg|\frac{{\rm Cov}_P(A,B_k)}{{\rm Var}_P(A){\rm Var}_P(B_k) - {\rm Cov}^2_P(A,B_k)}\bigg\{\frac{1}{j}\sum_{i=1}^j\big(A_i - \mathbb{P}_j[A]\big)Y_i-{\rm Cov}_{P}(A,\tilde{Y})\bigg\}\bigg| \label{eq:decomp.beta_k}
\end{align}
in which using the arguments for Lemmas S6.6-S6.7 of \cite{Huang2023}, the first and the third terms of the right-hand side can be shown $\lesssim K_{nj}$ given the occurrence of ${\cal A}_{n}$ and the boundedness of $\big|\,j^{-1}\sum_{i=1}^j\big(B_{ik} - \mathbb{P}_j[B_k]\big)Y_i\big|$ or $\big|\,j^{-1}\sum_{i=1}^j\big(A_i - \mathbb{P}_j[A]\big)Y_i\big|$
that follows \ref{assump:Mediator}, \ref{assump:Survival function}, the consistency of the Kaplan--Meier estimator and the boundedness of $A$ and $X$.

For the second term of the right-hand side on \eqref{eq:decomp.beta_k}, there exists a $(j,n)$-independent constant $K'$ so that
\begin{align}
  & \sup_{k \in \mathcal{K}_n}\bigg|\frac{1}{j}\sum_{i=1}^j\big(B_{ik} - \mathbb{P}_j[B_k]\big)Y_i - {\rm Cov}_P(B_{k},\tilde{Y})\bigg| \nonumber \\
  & = \; \sup_{k \in \mathcal{K}_n}\bigg|\frac{1}{j}\sum_{i=1}^j\big( \mathbb{P}_j[B_k] - P[B_k]\big)Y_i\bigg| + \sup_{k \in \mathcal{K}_n}\bigg|\frac{1}{j}\sum_{i=1}^j\big(B_{ik} - P[B_k]\big)\big(Y_i-\tilde{Y}_i\big)\bigg| \nonumber \\
  &\qquad + \sup_{k \in \mathcal{K}_n}\bigg|\frac{1}{j}\sum_{i=1}^j\big(B_{ik} - P[B_k]\big)\tilde{Y}_i - {\rm Cov}_P(B_{k},\tilde{Y})\bigg| \nonumber \\
  &\leq K'\bigg(K_{nj} +\sqrt{\frac{\log n}{n}}\bigg) + o_p(1),
\end{align}
following the occurrence of ${\cal A}_{n}$, \ref{assump:Mediator}, \ref{assump:Survival function}, the boundedness of $Y$ (that follows the consistency of the Kaplan--Meier estimator and the boundedness of $X$ along with \ref{assump:Survival function}), the weak law of large numbers, and
\begin{align*}
  &\sup_{k \in \mathcal{K}_n}\bigg|\frac{1}{j}\sum_{i=1}^j\big(B_{ik} - P[B_k]\big)\big(Y_i-\tilde{Y}_i\big)\bigg|
  \leq \sup_{k \in \mathcal{K}_n}\bigg|\bigg\{\frac{1}{j}\sum_{i=1}^j\big(B_{ik} - P[B_k]\big)^2\bigg\}^{1/2}\bigg\{\frac{1}{j}\sum_{i=1}^j\big(Y_i-\tilde{Y}_i\big)^2\bigg\}^{1/2}\bigg| \\
  & \quad \leq \sup_{k}\bigg|\sqrt{\sup_i\big(B_{ik} - P[B_k]\big)^2}\,\bigg|\sqrt{\sup_i\big(Y_i-\tilde{Y}_i\big)^2} \lesssim 
  \sqrt{\frac{\log n}{n}}
\end{align*}
implied by Cauchy--Schwarz inequality, \ref{assump:Mediator} and the occurrence of ${\cal B}_{n}$. 
Therefore, we can see that the second term of the right-hand side on \eqref{eq:decomp.beta_k} $\lesssim K_{nj}$, and the same for the last term of the right-hand side on \eqref{eq:decomp.beta_k}. Define the event \ref{eq:event_Cn_5} as ${\cal C}_{n5}$ and collecting all the results gives ${\rm P}(\mathcal{A}_n \cap \mathcal{B}_n \cap {\cal C}_{n5}^c) \to 0.$ 
}

Define the event \ref{eq:event_Cn_6} as ${\cal C}_{n6}$. As $\hat{Q}_{j}(u,k)$ is an exponential function of the maximum likelihood estimator (MLE) for the coefficient of $B_k$ in a univariate logistic regression for each $(u,k)$, the continuous mapping theorem implies that $\hat{Q}_{j}(u,k)$ is also a MLE. Therefore, ${\cal C}_{n6}$ is a direct consequence of \ref{assump:Mediator}, \ref{assump:Variances}, the continuous differentiability of exponential functions and the occurrence of $\mathcal{A}_n$, following the asymptotic behaviors of a M-estimator given by Theorem 5.39 and Example 5.40 in \cite{Vaart1998}.
Thus, we have ${\rm P}(\mathcal{A}_n \cap \mathcal{B}_n \cap {\cal C}_{n6}^c) \to 0.$

Letting $\mathcal{C}_n:=\cap_{q=1}^6\mathcal{C}_{nq}$, we have shown
${\rm P}(\mathcal{A}_n \cap \mathcal{B}_n \cap {\cal C}_n) \geq 1 - \sum_{q=1}^6{\rm P}(\mathcal{A}_n \cap \mathcal{B}_n \cap {\cal C}_{nq}^c) \to 1$.
\end{proof}

\begin{lemma}\label{lemma:Ln_prob_to_one}
Under the conditions of Theorem \ref{Thm:stab_one_step}, ${\rm P}(\mathcal{L}_n)\to 1$, where
$\mathcal{L}_n := \mathcal{A}_n\,\cap\, \mathcal{B}_n\,\cap\, \mathcal{C}_n\,\cap\, \mathcal{D}_n\,\cap\,\mathcal{E}_n\,\cap\,\mathcal{H}_n$.
\end{lemma}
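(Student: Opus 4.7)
The plan is to apply a union bound in the form
${\rm P}(\mathcal{L}_n^c) \le {\rm P}((\mathcal{A}_n \cap \mathcal{B}_n \cap \mathcal{C}_n)^c) + {\rm P}(\mathcal{D}_n^c \cap \mathcal{A}_n \cap \mathcal{B}_n \cap \mathcal{C}_n) + {\rm P}(\mathcal{E}_n^c \cap \mathcal{A}_n \cap \mathcal{B}_n \cap \mathcal{C}_n) + {\rm P}(\mathcal{H}_n^c)$.
By Lemma \ref{lemma:An_Bn_Cn_prob_to_one} the first term vanishes. What remains is to dispose of the other three, working on the good event $\mathcal{A}_n \cap \mathcal{B}_n \cap \mathcal{C}_n$, on which every nuisance feature in $\hat{P}_{nj}$ is uniformly consistent at the rate $K_{nj}$ with the exception of $1/\mathbb{Q}_j(a) - 1/Q(a)$, which is $O(j^{-1/2})$.

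For $\mathcal{D}_n$ I would expand $f_k^*(\cdot\,|\,P) = f_k(\cdot\,|\,P) - f_k^{car}(\cdot\,|\,P)$ from \eqref{eq:if_psi_k} and \eqref{eq:if_nu_eff} and compare it termwise under $\hat{P}_{nj}$ versus $\hat{P}_{nj}'$, then under $\hat{P}_{nj}'$ versus $P$. The first comparison isolates differences due to swapping $\hat{G}_n$ for $G$: since the integrand of the compensator term is uniformly bounded on $\mathcal{C}_n$ (via \ref{item:estimation_E}, \ref{assump:Mediator} and \ref{assump:Survival function}) and $\sup_{s \in \mathcal{T}} |\hat{G}_n(s)/G(s) - 1| + \sup_s|\hat{\Lambda}_n(s) - \Lambda(s)| \lesssim \sqrt{\log n/n} \le K_{nj}$ on $\mathcal{B}_n$, integration by parts gives $d_n(\hat{P}_{nj},\hat{P}_{nj}') \lesssim K_{nj}$. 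The second comparison involves $\hat{E}_j - E_0$, $\mathbb{Q}_j - Q$, $\hat{q}_j - q$, $\hat{\beta}_{jk} - \beta_k$, and $\hat{Q}_j - Q$; each of these is controlled at rate $K_{nj}$ on $\mathcal{C}_n$ (items \ref{eq:event_Cn_1}--\ref{eq:event_Cn_6}), and after multiplying by uniformly-bounded factors (using \ref{assump:Mediator}, \ref{assump:Positivity}, and the boundedness of $E_0$) and summing the finite number of terms, we obtain $d_n(\hat{P}_{nj}',P) \lesssim K_{nj}$.

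For $\mathcal{E}_n$, assumption \ref{assump:Variances} gives $\sigma_{nj}^2 = P(f_{k_j}^*(O\,|\,P))^2$ bounded away from $0$ and $\infty$ uniformly in the choice of $k_j$. The difference $\hat{\sigma}_{nj}^2 - \sigma_{nj}^2$ decomposes into (i) the empirical-process deviation $(\mathbb{P}_j - P)(f_{k_j}^*(O\,|\,P))^2$, which is $O(K_{nj})$ by the VC-hull control underlying $\mathcal{A}_n$ applied to the class $\tilde{\mathcal{F}}_n \cdot \tilde{\mathcal{F}}_n$, and (ii) a plug-in error $\mathbb{P}_j[(f_{k_j}^*(O\,|\,\hat{P}_{nj}))^2 - (f_{k_j}^*(O\,|\,P))^2]$, which reduces to $d_n(\hat{P}_{nj},P)$ times a bounded factor and is therefore $\lesssim K_{nj}$ by the $\mathcal{D}_n$ analysis. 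The ratio bound $|\sigma_{nj}/\hat{\sigma}_{nj} - 1| \lesssim K_{nj}$ then follows from the mean-value theorem applied to $x \mapsto 1/\sqrt{x}$, and the lower bound $\hat{\sigma}_{nj}^2 > \varepsilon''$ follows for large $n$ because $K_{nj} \to 0$.

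For $\mathcal{H}_n$, the key tool is an exponential inequality for the Kaplan--Meier estimator based on the first $j$ observations. Using the martingale representation $\hat{G}_j(s)/G(s) - 1 = -\int_{-\infty}^{s} \hat{G}_j(u-)/G(u)\,d\hat{M}_j(u)/Y_j(u)$ together with assumptions \ref{assump:Survival function} and \ref{assump:At-risk prob} (which imply ${\rm P}(X \ge \tau) = S(\tau)G(\tau) > 0$ so that $Y_j(\tau) \gtrsim j$ with overwhelming probability), a Bernstein-type bound for martingales gives ${\rm P}(\sup_{s \in \mathcal{T}}|\hat{G}_j(s)/G(s) - 1| > S(\tau)^{-1}\sqrt{\log n/j}) \le C n^{-c}$ for any $c > 1$ by choosing the constant appropriately, and a union bound over $j \in \{q_n,\ldots,n-1\}$ (with $n - q_n = O(n)$) yields ${\rm P}(\mathcal{H}_n^c) \to 0$. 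The main obstacle will be $\mathcal{D}_n$: the efficient influence function has many additive components with products of ratios and stochastic integrals, so care is required to verify that on $\mathcal{A}_n \cap \mathcal{B}_n \cap \mathcal{C}_n$ every factor multiplying an $O(K_{nj})$ nuisance-error is uniformly bounded in $(j,n,k)$; the remaining pieces are then routine bookkeeping.
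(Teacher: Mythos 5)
Your proposal is correct and follows essentially the same route as the paper, which proves this lemma simply by combining Lemmas \ref{lemma:An_Bn_prob_to_one}--\ref{lemma:An_Bn_Cn_prob_to_one} with the arguments of Lemmas S6.9--S6.12 of \cite{Huang2023} (i.e., termwise nuisance comparison for $\mathcal{D}_n$, empirical-variance consistency for $\mathcal{E}_n$, and uniform Kaplan--Meier control for $\mathcal{H}_n$). Your sketch is in effect a reconstruction of those cited arguments, with the same union-bound structure and the same use of the events $\mathcal{A}_n$, $\mathcal{B}_n$, $\mathcal{C}_n$ to control each remaining piece at rate $K_{nj}$.
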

\begin{proof}
The proof can be completed, following the arguments for Lemmas S6.9-S6.12 of \cite{Huang2023}, along with Lemmas \ref{lemma:An_Bn_prob_to_one} and \ref{lemma:An_Bn_Cn_prob_to_one}.
\end{proof}

\begin{lemma}\label{lemma:asymp.I}
Under the conditions of Theorem \ref{Thm:stab_one_step}, $\mbox{(I)}1_{\mathcal{L}_n}$ is asymptotically negligible.
\end{lemma}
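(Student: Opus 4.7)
The plan is to follow the template of Lemma S7.1 of \cite{Huang2023}, showing $\mbox{(I)}\,1_{\mathcal{L}_n}=o_p(1)$ by decomposing each summand into a martingale increment and a conditional bias with respect to the filtration $\mathcal{F}_j = \sigma(O_1,\ldots,O_j)$, then bounding each piece using the uniform rates collected in $\mathcal{L}_n$ together with the VC-hull entropy bounds for $\tilde{\mathcal{F}}_n$ delivered by $\mathcal{A}_n$. All the quantities $\hat{E}_j$, $\hat{Q}_j$, $\mathbb{Q}_j$, $\hat{\sigma}_{nj}$, $m_j$, $k_j$ appearing in the $j$-th summand are $\mathcal{F}_j$-measurable, while $O_{j+1}$ is independent of $\mathcal{F}_j$ with law $P$; writing that summand as $\xi_{j+1}$, I split it as $\xi_{j+1}=\eta_{j+1}+r_j$ with $\eta_{j+1}=\xi_{j+1}-E[\xi_{j+1}\mid\mathcal{F}_j]$ and $r_j=E[\xi_{j+1}\mid\mathcal{F}_j]$.

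For the martingale-difference aggregate, on $\mathcal{L}_n$ the bounds $\sup_{(a,k,s,u)\in\mathcal{I}_n}|\hat{E}_j-E_0|\lesssim K_{nj}$ from $\mathcal{B}_n$, together with boundedness of $1/\mathbb{Q}_j(a)$ and $\hat{Q}_j$ on $\mathcal{C}_n$, of $\hat{\sigma}_{nj}^{-1}$ on $\mathcal{E}_n$, and of $\int\cdot\,d\hat{M}_{j+1}$ on $\mathcal{B}_n\cap\mathcal{H}_n$, imply $|\xi_{j+1}|\,1_{\mathcal{L}_n}\lesssim K_{nj}$. Orthogonality of martingale differences then yields
\[
E\Big[\Big((n-q_n)^{-1/2}\sum_{j=q_n}^{n-1}\eta_{j+1}1_{\mathcal{L}_n}\Big)^2\Big] \lesssim \frac{1}{n-q_n}\sum_{j=q_n}^{n-1}\frac{\log(n\lor p_n)}{j} \lesssim \frac{\log(n\lor p_n)}{q_n},
\]
which tends to zero under $q_n^{1/4}/\log(n\lor p_n)\to\infty$, so Chebyshev gives $o_p(1)$.

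For the conditional-bias aggregate, the first two terms of the brace of (I) combine via a double-robust cancellation: taking $E_P[\cdot\mid\mathcal{F}_j]$ and using the Bayes identity $Q(0)\,p(u\mid A=0)=Q(1)\,Q(u,k_j)\,p(u\mid A=1)$ collapses them into the product
\[
\frac{Q(1)}{\mathbb{Q}_j(0)}\int [\hat{E}_j-E_0](1,u,k_j)\,[\hat{Q}_j-Q](u,k_j)\,p(u\mid A=1)\,du,
\]
whose modulus on $\mathcal{L}_n$ is $\lesssim K_{nj}^2$ by Cauchy--Schwarz with the uniform rates on $\hat{E}_j-E_0$ and $\hat{Q}_j-Q$. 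For the third term, conditional on $(A_{j+1},B_{j+1,k_j},\mathcal{F}_j)$ the true-compensated $M_{j+1}$ is a mean-zero increment for any $\mathcal{F}_j$-measurable, left-continuous integrand, so replacing $\hat{\Lambda}_n$ by $\Lambda$ and $\hat{E}_j$ by a left-continuous modification would give zero conditional mean; the two residual discrepancies, $\sup_{s\in\mathcal{T}}|\hat{\Lambda}_n-\Lambda|\lesssim \sqrt{\log n/n}$ on $\mathcal{B}_n$ and the oscillation of $\hat{E}_j-E_0$ at jump points (uniformly $O(K_{nj})$ on $\mathcal{L}_n$), each pair multiplicatively with $|\hat{E}_j-E_0|$ to give an $O(K_{nj}^2)\lor O(K_{nj}\sqrt{\log n/n})$ bound for $r_j$. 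Summing, $(n-q_n)^{-1/2}\sum_j|r_j|1_{\mathcal{L}_n}\lesssim \sqrt{n-q_n}\,K_{nq_n}^2\lesssim \log(n\lor p_n)/\sqrt{q_n}=o(1)$ under $n/q_n=O(1)$ and $q_n^{1/4}/\log(n\lor p_n)\to\infty$.

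The main obstacle is this third term, because it has two features absent from the first two: $\hat{\Lambda}_n$ is built from the full sample and hence not $\mathcal{F}_j$-measurable, and $\hat{E}_j(a,u,s,k)$ in \eqref{eq:est_E} is right-continuous in $s$ and therefore not predictable with respect to the censoring filtration \eqref{eq:filtration} that underlies Doob--Meyer compensation. I would handle the first issue by adding and subtracting the integral against $\Lambda$ and controlling the remainder uniformly over $k\in\mathcal{K}_n$ via the VC-hull entropy bound on the product class $\tilde{\mathcal{F}}_1\tilde{\mathcal{F}}_2$ of \eqref{eq:class_tilde_F} acting through $\mathcal{A}_n$. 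The unpredictability issue I would resolve as in \cite{Huang2023}, by adding and subtracting a left-continuous modification of $\hat{E}_j$ and bounding the resulting jump contribution by the oscillation of $\hat{E}_j$ at the points $X_i$, which on $\mathcal{L}_n$ is $O(K_{nj})$, so that its contribution feeds back into a product-type bound of the required order.
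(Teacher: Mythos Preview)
Your approach is correct in spirit and reaches the same destination as the paper, but the organization differs in two notable ways.

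First, the paper enlarges the filtration to $\mathcal{O}_{nj}=\sigma(O_1,\ldots,O_j,\boldsymbol{B}_{j+1})$ rather than your $\mathcal{F}_j=\sigma(O_1,\ldots,O_j)$. After first replacing $\hat{\sigma}_{nj},\mathbb{Q}_j,\hat{Q}_j$ by their population counterparts $\sigma_{nj},Q,Q$ (via Lemma~S6.13 of \cite{Huang2023}, whose cost is precisely the product-rate term you obtain as ``bias''), the conditional expectation given $\mathcal{O}_{nj}$ vanishes \emph{exactly}, because with $\boldsymbol{B}_{j+1}$ already in the filtration one needs only the pointwise identity $Q(u,k_j)\,P(A=1\mid B_{k_j}=u)-P(A=0\mid B_{k_j}=u)=0$. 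Your route keeps the estimated nuisances and the smaller filtration, integrates over $(A_{j+1},B_{j+1})$, and recovers the same cancellation as a double-robust product $[\hat{Q}_j-Q]\times[\hat{E}_j-E_0]=O(K_{nj}^2)$. These are two rearrangements of the same algebra; the paper's version avoids computing the bias explicitly at the price of an auxiliary lemma, while yours makes the second-order structure transparent.

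Second, your exposition applies the martingale decomposition to $\xi_{j+1}$ \emph{before} dealing with the fact that $\hat{\Lambda}_n$ uses the full sample. As written, $\eta_{j+1}$ is then not $\mathcal{F}_{j+1}$-measurable and the orthogonality step is not licensed. You identify the issue and the fix (split off the $\hat{\Lambda}_n-\Lambda$ piece first), but the order of operations should be reversed: replace $\hat{M}_{j+1}$ by $M_{j+1}$ at the outset, handle the difference by the empirical-process/entropy argument you sketch (this is the paper's appeal to Lemma~S6.18 of \cite{Huang2023}; a crude sup-norm bound does not suffice here because integration by parts leaves a term of size $\sup_s|\hat{\Lambda}_n-\Lambda|\times\mathrm{TV}(\hat{E}_j-E_0)$ that carries no $K_{nj}$ factor), and only then run the martingale-plus-bias decomposition on the $\mathcal{F}_{j+1}$-measurable remainder. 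With that reordering your argument is complete and equivalent to the paper's.
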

\begin{proof}
By the definition of $\mbox{(I)}$ in \eqref{eq:def_terms} and using the triangle inequality,
\begin{align} \label{eq:decomp_I.1}
  & \big|\mbox{(I)}1_{\mathcal{L}_n}\big| \le
  \bigg|\frac{1_{\mathcal{L}_n}}{\sqrt{n-q_n}}\sum_{j=q_n}^{n-1}\frac{m_j}{\hat{\sigma}_{nj}}\bigg\{\hat{Q}_{j}(B_{j+1,\,k_j},k_j)\frac{A_{j+1}}{\mathbb{Q}_j(0)} - \frac{(1-A_{j+1})}{\mathbb{Q}_j(0)}\bigg\}\big[\hat{E}_j-E_0\big](1,B_{j+1,\,k_j},k_j)\bigg| \nonumber \\
  &+ \bigg|\frac{1_{\mathcal{L}_n}}{\sqrt{n-q_n}}\sum_{j=q_n}^{n-1}\frac{m_j}{\hat{\sigma}_{nj}}\frac{A_{j+1}}{\mathbb{Q}_j(1)}\bigg(1-\hat{Q}_{j}(B_{j+1,\,k_j},k_j)\frac{\mathbb{Q}_j(1)}{\mathbb{Q}_j(0)}\bigg) \int_{\mathcal{T}} \big[\hat{E}_j-E_0\big](A_{j+1},B_{j+1,\,k_j},s,k_j)\hat{M}_{j+1}({\rm d}s)\bigg|.
\end{align}
Let $M_{j+1}({\rm d}s) = 1(X_{j+1} \in {\rm d}s, \delta_{j+1}=0)-1(X_{j+1} \geq s)\Lambda({\rm d}s)$ and $\hat{M}_{j+1}({\rm d}s) - M_{j+1}({\rm d}s)
= -1(X_{j+1} \geq s)\big\{\hat{\Lambda}_n({\rm d}s)-\Lambda({\rm d}s)\big\}$, re-expressing the integral in the last term on the right-hand side of \eqref{eq:decomp_I.1}:
\begin{align*}
  &\big| 1_{{\cal L}_n}\int_{\mathcal{T}}\big[\hat{E}_j-E_0\big](A_{j+1},B_{j+1,\,k_j},s,k_j)\big\{\hat{M}_{j+1}({\rm d}s)-M_{j+1}({\rm d}s)\big\} \big| \\
  & = \big| 1_{{\cal L}_n}\int_{\mathcal{T}}\big[\hat{E}_j-E_0\big](A_{j+1},B_{j+1,\,k_j},s,k_j)1(X_{j+1} \geq s) \big\{\hat{\Lambda}_n({\rm d}s)-\Lambda({\rm d}s)\big\} \big|.
\end{align*}
We further apply Lemma S6.13 of \cite{Huang2023} to \eqref{eq:decomp_I.1}, taking
\begin{align} \label{eq:ub_hatE_j}
  f_{n1}(O_{j+1,k_j}) = \big[\hat{E}_j-E_0\big](1,B_{j+1,\,k_j},k_j)\mbox{ with }
  \sup_{o \in \mathcal{X}}\big|f_{n1}(o)1_{{\cal L}_n}\big| \leq \sup_{(j,k,u)}\big|\big[\hat{E}_j-E_0\big](1,u,k)\big| \lesssim K_{nq_n},
\end{align}
and
\begin{align*}
  f_{n2}(O_{j+1,k_j}) = \int_{\mathcal{T}}\big[\hat{E}_j-E_0\big](A_{j+1},B_{j+1,\,k_j},s,k_j)1(X_{j+1} \geq s)\big\{\hat{\Lambda}_n({\rm d}s)-\Lambda({\rm d}s)\big\}
\end{align*}
with $\sup_{o \in \mathcal{X}}\big|f_{n2}(o)1_{{\cal L}_n}\big| \lesssim \sqrt{\log(n \lor p_n)}/q_n^{1/4}$,
where the upper bound of $\sup_{o \in \mathcal{X}}\big|f_{n2}(o)1_{{\cal L}_n}\big|$ is derived using (S6.15.2) of Lemma S6.15 in \cite{Huang2023}.

Continuing \eqref{eq:decomp_I.1},
we further bound $\big|\mbox{(I)}1_{\mathcal{L}_n}\big|$ by
\begin{align} \label{eq:decomp_I}
  & \big|\mbox{(I)}1_{\mathcal{L}_n}\big| \le
  \bigg|\frac{1_{\mathcal{L}_n}}{\sqrt{n-q_n}}\sum_{j=q_n}^{n-1}\frac{m_j}{\sigma_{nj}}\frac{1}{Q(0)}\bigg\{Q(B_{j+1,\,k_j},k_j)A_{j+1} - (1-A_{j+1})\bigg\}\big[\hat{E}_j-E_0\big](1,B_{j+1,\,k_j},k_j)\bigg| \nonumber \\
  &\quad + \bigg|\frac{1_{\mathcal{L}_n}}{\sqrt{n-q_n}}\sum_{j=q_n}^{n-1}\frac{m_j}{\sigma_{nj}}\frac{A_{j+1}}{Q(1)}\bigg(1-Q(B_{j+1,\,k_j},k_j)\frac{Q(1)}{Q(0)}\bigg) \nonumber \\
  & \hspace{3cm} \times \int_{\mathcal{T}} \big[\hat{E}_j-E_0\big](A_{j+1},B_{j+1,\,k_j},s,k_j)1(X_{j+1} \ge s)\big\{\hat{\Lambda}_n({\rm d}s)-\Lambda({\rm d}s)\big\}\bigg| + o_p(1),
\end{align}
where the penultimate term can be shown to converge to zero in probability by using the arguments for Lemma S6.18 of \cite{Huang2023}.

Now we deal with the first term on the right-hand-side of \eqref{eq:decomp_I}. Fix $n$ and for $j \in \{q_n+1,\ldots, n\}$,
\begin{align*}
  \bar{H}_{nj} := \frac{1}{\sqrt{n-q_n}}\frac{m_{j-1}}{\sigma_{n{j-1}}}\bigg\{Q(B_{j,\,k_{j-1}},k_{j-1})\frac{A_{j}}{Q(0)} - \frac{(1-A_{j})}{Q(0)}\bigg\}\big[\hat{E}_{j-1}-E_0\big](1,B_{j,\,k_{j-1}},k_{j-1}).
\end{align*}
By \eqref{eq:ub_hatE_j}, along with the uniform boundedness of $B_k$ given in \ref{assump:Mediator} and that $\sigma_{nj}$ is uniformly bounded away from zero as assumed in \ref{assump:Variances},
there exists a $\bar{U}_n := K'\sqrt{\log(n \lor p_n)}\,\big/\sqrt{(n-q_n)q_n}$ for some constant $K'>0$ such that $\max_j\big|\bar{H}_{nj}\big| \le \bar{U}_n$. Define the filtration ${\cal O}_{nj} = \sigma(\{O_1,\ldots,O_j\}, \bs{B}_{j+1})$.
We know that $\{(\bar{H}_{nj}, {\cal O}_{nj}),\, j=q_n+1,\ldots, n\}$ is a martingale difference sequence because $E\big|\bar{H}_{nj}\big| < \infty$; $\bar{H}_{nj}$ is ${\cal O}_{nj}$-measurable, and for $j=q_n,\ldots, n-1$,
\begin{align*}  
  & E\big[\bar{H}_{n,\,j+1}\big|{\cal O}_{nj}\big]\\
  & = \frac{1}{\sqrt{n-q_n}}\frac{m_{j}}{\sigma_{nj}Q(0)}\big[\hat{E}_{j}-E_0\big](1,B_{j+1,\,k_j},k_j) \Big\{Q(B_{j+1,\,k_j},k_j)E\big[A_{j+1}\big|{\cal O}_{nj}\big] - E\big[1-A_{j+1}\big|{\cal O}_{nj}\big]\Big\} = 0
\end{align*}
because the value in the curly brackets is precisely zero. Then for $\varepsilon > 0$,
\begin{align*}
  & {\rm P}\bigg(\,\bigg|\sum_{j=q_n}^{n-1}\bar{H}_{n,\,j+1}\bigg| \ge \varepsilon \bigg) \le \varepsilon^{-2}E\bigg[\bigg(\sum_{j=q_n}^{n-1}\bar{H}_{n,\,j+1}\bigg)^2\,\bigg] \\
  & = \varepsilon^{-2}\bigg(\sum_{j=q_n}^{n-1}E\big[\bar{H}_{n,\,j+1}^2\big] + 2\sum_{q_n\le i< j\le n-1}E\Big[\bar{H}_{n,\,i+1}E\Big[\bar{H}_{n,\,j+1} \big| \mathcal{O}_{nj}\Big]\Big]\bigg) \\
  & = \varepsilon^{-2}\sum_{j=q_n}^{n-1}E\big[ \bar{H}_{n,\,j+1}^2 \big]
  \le \varepsilon^{-2}(n-q_n)\bar{U}_n^2 \to 0.
\end{align*}
As $\big|\mbox{(I)}1_{{\cal L}_n}\big| \lesssim \big|\sum_{j=q_n}^{n-1}\bar{H}_{n,\,j+1}\big| + o_p(1)$, we conclude that $\mbox{(I)}1_{{\cal L}_n}=o_p(1)$.
\end{proof}

\begin{lemma}\label{lemma:asymp.II}
Under the conditions of Theorem \ref{Thm:stab_one_step}, $\mbox{(II)}1_{\mathcal{L}_n}$ is asymptotically negligible.
\end{lemma}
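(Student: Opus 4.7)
My plan for Lemma~\ref{lemma:asymp.II} is to show that $\mbox{(II)}1_{\mathcal{L}_n}=o_p(1)$ by exploiting the martingale structure of the Kaplan--Meier estimator, in the spirit of Lemmas~S6.17--S6.18 of \cite{Huang2023}. The two sums comprising $\mbox{(II)}$ capture the two distinct errors from using $\hat G_n$ in place of $G$ (through $Y_{j+1}-\tilde Y_{j+1}$) and from using $\hat\Lambda_n$ in place of $\Lambda$ in the integral term. Naive application of the event $\mathcal{B}_n$, which gives $\sup_{s\in\mathcal{T}}|\hat G_n/G-1|\lor\sup_{s\in\mathcal{T}}|\hat\Lambda_n-\Lambda|\lesssim\sqrt{\log n/n}$, only yields a bound of order $\sqrt{\log n}$ for the whole sum of length $n-q_n$ after the $1/\sqrt{n-q_n}$ normalization; cancellation across the subsamples is therefore essential.

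First, I would stabilize the random denominators. On $\mathcal{C}_n\cap\mathcal{E}_n$ I replace $\hat\sigma_{nj}\mapsto\sigma_{nj}$, $\mathbb{Q}_j(a)\mapsto Q(a)$, and $\hat Q_j\mapsto Q$ in the coefficients of $\mbox{(II)}$. The induced errors are each $\lesssim K_{nj}=\{\log(n\lor p_n)/j\}^{1/2}$, and the surviving sum is uniformly bounded in probability by the maximal inequality for the VC-hull class $\tilde{\mathcal F}_n$ used to define $\mathcal{A}_n$, so these replacements cost $o_p(1)$ and we are reduced to two sums with $P$-deterministic normalizing factors.

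The main step is to recast both sums as a single stochastic integral against the aggregate censoring martingale, so that the outer $\sum_{j=q_n}^{n-1}$ gets absorbed into the integrand. Using the Aalen--Nelson representation
\[
\hat\Lambda_n(t)-\Lambda(t)=\int_{-\infty}^{t}\frac{1(Y_n(s)>0)}{Y_n(s)}\,dM_n^c(s),
\]
and the pointwise Duhamel-type identity
\[
Y_{j+1}-\tilde Y_{j+1}=\tilde Y_{j+1}\cdot\frac{G(X_{j+1})-\hat G_n(X_{j+1})}{\hat G_n(X_{j+1})},
\]
together with $(G-\hat G_n)/G$ equal to $\Lambda-\hat\Lambda_n$ up to second-order terms that are negligible on $\mathcal{B}_n$, both sums can be rewritten as
\[
\int_{\mathcal T}\frac{H_n(s)}{Y_n(s)}\,dM_n^c(s),\qquad H_n(s)=\frac{1}{\sqrt{n-q_n}}\sum_{j=q_n}^{n-1}\frac{m_j}{\sigma_{nj}}\,w_{j+1}(s,k_j),
\]
where $w_{j+1}(s,k_j)$ is a uniformly bounded functional of $(A_{j+1},B_{j+1,k_j},X_{j+1},s)$ lying in $\tilde{\mathcal F}_n$. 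Combining the uniform lower bound $\inf_{s\in\mathcal T}Y_n(s)\ge\sqrt n$ from $\mathcal{B}_n$ with a VC-hull maximal inequality applied to $H_n$ then upgrades the naive $\sqrt{\log n}$ bound to one of order $\sqrt{\log(n\lor p_n)}/n^{1/4}$, which vanishes under the growth condition $\log(p_n)/n^{1/4}\to 0$.

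The hard part is precisely this cross-sample interchange: $\hat G_n$ and $\hat\Lambda_n$ use all $n$ observations, so $Y_{j+1}-\tilde Y_{j+1}$ and the integral against $\hat\Lambda_n-\Lambda$ are not independent of the random weights $m_j,\,\sigma_{nj},\,k_j$ formed from $O_1,\ldots,O_j$. Casting $\mbox{(II)}$ as a single integral against one aggregate martingale measure, and then uniform entropy control over $\tilde{\mathcal F}_n$, is the technical heart of the argument. The extra factor $(1-Q(B_{j+1,k_j},k_j)\,Q(1)/Q(0))$, which has no counterpart in the marginal-screening setting of \cite{Huang2023}, is absorbed into the integrand because it is uniformly bounded on $\mathcal{C}_n$ by \ref{assump:Positivity} and \ref{assump:Mediator}, so the adaptation is structural rather than essential.
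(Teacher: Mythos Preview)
Your plan tracks the paper's closely in its first two moves. The replacement of $\hat\sigma_{nj},\mathbb{Q}_j,\hat Q_j$ by $\sigma_{nj},Q,Q(\cdot,\cdot)$ is exactly what the paper does, carried out there via Lemma~S6.13 of \cite{Huang2023} together with the events $\mathcal{C}_n,\mathcal{E}_n$. For the second summand of~$\mbox{(II)}$---the integral of $E_0$ against $[\hat\Lambda_n-\Lambda]$---your martingale recasting is precisely the route the paper takes as well, citing Lemmas~S6.16--S6.17 of \cite{Huang2023} after the coefficient replacement. On that half of the argument the two approaches coincide.

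The genuine divergence is in the $Y_{j+1}-\tilde Y_{j+1}$ term. The paper does \emph{not} push this piece through $M_n^c$. After a first-order Taylor expansion $1/\hat G_n-1/G\approx -(\hat G_n/G-1)/G$ and after peeling off the coefficient-replacement remainder (your stabilization step again, handled in \eqref{eq:expansion.2}--\eqref{eq:ub.expansion.3}), the paper bounds the leading term by a direct Cauchy--Schwarz inequality combined with the uniform control $\sup_s|\hat G_n/G-1|\le\sqrt{\log n/n}$ on $\mathcal{B}_n$; see display~\eqref{eq:decomposition}. Your route---convert $(G-\hat G_n)/G$ to $\hat\Lambda_n-\Lambda$ modulo second order and reuse the same martingale machinery as for the integral term---is more uniform and delivers the sharper rate $\sqrt{\log(n\lor p_n)}/n^{1/4}$ that you claim. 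The paper's argument is more elementary, but as written the Cauchy--Schwarz step in~\eqref{eq:decomposition} appears to drop a factor of $\sqrt{n-q_n}$, so your observation that ``cancellation across the subsamples is essential'' is exactly on point; writing the $Y-\tilde Y$ term in the same integral form as the second summand and invoking the analogue of Lemmas~S6.16--S6.17 is a clean way to supply that cancellation. The extra mediation-specific factor $1-Q(B_{j+1,k_j},k_j)\,Q(1)/Q(0)$, as you note, is uniformly bounded under~\ref{assump:Positivity} and~\ref{assump:Mediator} and absorbs harmlessly into the integrand, so your adaptation of the \cite{Huang2023} argument is indeed structural rather than essential.
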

\begin{proof}
By the definition of $\mbox{(II)}$ in \eqref{eq:def_terms} and appealing to Lemma S6.13 of \cite{Huang2023} with
\begin{align*}
  f_n(O_{j+1,k_j}) = \int_{\mathcal{T}} E_0(A_{j+1},B_{j+1,\,k_j},s,k_j)1(X_{j+1} \geq s)[\hat{\Lambda}_n-\Lambda]({\rm d}s),  
\end{align*}
where $\sup_{o \in \mathcal{X}}\big|\,f_n(o)1_{{\cal L}_n}\big| \lesssim \sqrt{\log(n \lor p_n)}\,/q_n^{1/4}$ by following the steps in Lemma S6.15 of \cite{Huang2023}, we have the second term of $\mbox{(II)}1_{\mathcal{L}_n}$ bounded by
\begin{align} \label{eq:expansion}
  \bigg|\frac{1}{\sqrt{n-q_n}}\sum_{j=q_n}^{n-1}\frac{m_j}{\sigma_{nj}}\frac{A_{j+1}}{Q(1)}&\bigg(1-Q(B_{j+1,\,k_j},k_j)\frac{Q(1)}{Q(0)}\bigg) \nonumber \\
  & \times \int_{\mathcal{T}} E_0(A_{j+1},B_{j+1,\,k_j},s,k_j)1(X_{j+1} \ge s)[\hat{\Lambda}_n-\Lambda]({\rm d}s)\bigg|1_{\mathcal{L}_n} + o_p(1).
\end{align}
The above display can be shown of order $o_p(1)$, following Lemmas S6.16--S6.17 of \cite{Huang2023}.

First-order Taylor expanding the first term in $\mbox{(II)}1_{\mathcal{L}_n}$ around $G$ yields the approximation below
\begin{align} \label{eq:expansion.1}
  & \frac{1}{\sqrt{n-q_n}}\sum_{j=q_n}^{n-1}\frac{m_j}{\hat{\sigma}_{nj}}\bigg\{\frac{1}{\mathbb{Q}_j(1)} - 
  \frac{\hat{Q}_j(B_{j+1,\,k_j},k_j)}{\mathbb{Q}_j(0)}\bigg\}A_{j+1}\delta_{j+1}X_{j+1}\Big(\frac{1}{\hat{G}_n(X_{j+1})}-\frac{1}{G(X_{j+1})}\Big)1_{\mathcal{L}_n} \nonumber \\
  & = -\frac{1}{\sqrt{n-q_n}}\sum_{j=q_n}^{n-1}\frac{m_j}{\hat{\sigma}_{nj}}\bigg\{\frac{1}{\mathbb{Q}_j(1)} - 
  \frac{\hat{Q}_j(B_{j+1,\,k_j},k_j)}{\mathbb{Q}_j(0)}\bigg\}A_{j+1}\tilde{Y}_{j+1}\Big(\frac{\hat{G}_n(X_{j+1})}{G(X_{j+1})} - 1\Big)1_{\mathcal{L}_n} + o_p(n^{-1/2}) \nonumber \\ 
  & = -\frac{1}{\sqrt{n-q_n}}\sum_{j=q_n}^{n-1}\frac{m_j}{\sigma_{nj}}\bigg\{\frac{1}{Q(1)} - 
  \frac{Q(B_{j+1,\,k_j},k_j)}{Q(0)}\bigg\}A_{j+1}\tilde{Y}_{j+1}\Big(\frac{\hat{G}_n(X_{j+1})}{G(X_{j+1})} - 1\Big)1_{\mathcal{L}_n} \nonumber\\
  &\quad - \frac{1}{\sqrt{n-q_n}}\sum_{j=q_n}^{n-1}\bigg\{\frac{m_j}{\hat{\sigma}_{nj}}\bigg(\frac{1}{\mathbb{Q}_j(1)} - 
  \frac{\hat{Q}_j(B_{j+1,\,k_j},k_j)}{\mathbb{Q}_j(0)}\bigg) - \frac{m_j}{\sigma_{nj}}\bigg(\frac{1}{Q(1)} - 
  \frac{Q(B_{j+1,\,k_j},k_j)}{Q(0)}\bigg)\bigg\}A_{j+1}\tilde{Y}_{j+1} \nonumber \\
  & \hspace{3cm} \times \Big(\frac{\hat{G}_n(X_{j+1})}{G(X_{j+1})} - 1\Big)1_{\mathcal{L}_n}
  + o_p(n^{-1/2}).
\end{align}
The middle term in the last line of \eqref{eq:expansion.1} is shown to be $o_p(1)$ as follows. Using Cauchy--Schwarz inequality,
\begin{align}\label{eq:expansion.2}
  &\bigg|\frac{1}{\sqrt{n-q_n}}\sum_{j=q_n}^{n-1}\bigg\{\frac{m_j}{\hat{\sigma}_{nj}}\bigg(\frac{1}{\mathbb{Q}_j(1)} - 
  \frac{\hat{Q}_j(B_{j+1,\,k_j},k_j)}{\mathbb{Q}_j(0)}\bigg) - \frac{m_j}{\sigma_{nj}}\bigg(\frac{1}{Q(1)} - 
  \frac{Q(B_{j+1,\,k_j},k_j)}{Q(0)}\bigg)\bigg\}A_{j+1}\tilde{Y}_{j+1} \nonumber \\
  & \hspace{3cm} \times \Big(\frac{\hat{G}_n(X_{j+1})}{G(X_{j+1})} - 1\Big)\bigg|1_{\mathcal{L}_n} \nonumber\\
  & \le \bigg\{\frac{1}{n-q_n}\sum_{j=q_n}^{n-1}\bigg\{\frac{m_j}{\hat{\sigma}_{nj}}\bigg(\frac{1}{\mathbb{Q}_j(1)} - 
  \frac{\hat{Q}_j(B_{j+1,\,k_j},k_j)}{\mathbb{Q}_j(0)}\bigg) - \frac{m_j}{\sigma_{nj}}\bigg(\frac{1}{Q(1)} - 
  \frac{Q(B_{j+1,\,k_j},k_j)}{Q(0)}\bigg)\bigg\}^2A_{j+1}^2\tilde{Y}_{j+1}^2\bigg\}^{1/2} \nonumber \\
  & \qquad \times \bigg\{\frac{1}{n-q_n}\sum_{j=q_n}^{n-1}\Big(\frac{\hat{G}_n(X_{j+1})}{G(X_{j+1})} - 1\Big)^2\bigg\}^{1/2}1_{\mathcal{L}_n}. 
\end{align}
With the occurrence of $\mathcal{B}_n$ contained in $\mathcal{L}_n$, the right-hand side of \eqref{eq:expansion.2} is upper-bounded by
\begin{align}\label{eq:ub.expansion.2}
  &\bigg\{\frac{1}{n-q_n}\sum_{j=q_n}^{n-1}\bigg\{\frac{m_j}{\hat{\sigma}_{nj}}\bigg(\frac{1}{\mathbb{Q}_j(1)} - 
  \frac{\hat{Q}_j(B_{j+1,\,k_j},k_j)}{\mathbb{Q}_j(0)}\bigg) - \frac{m_j}{\sigma_{nj}}\bigg(\frac{1}{Q(1)} - 
  \frac{Q(B_{j+1,\,k_j},k_j)}{Q(0)}\bigg)\bigg\}^2A_{j+1}^2\tilde{Y}_{j+1}^2\bigg\}^{1/2}\sqrt{\frac{\log n}{n}}1_{\mathcal{L}_n} \nonumber \\
  & \lesssim \bigg\{\frac{1}{n-q_n}\sum_{j=q_n}^{n-1}\bigg\{\bigg(\frac{1}{\mathbb{Q}_j(1)} - 
  \frac{\hat{Q}_j(B_{j+1,\,k_j},k_j)}{\mathbb{Q}_j(0)}\bigg) -\bigg(\frac{1}{Q(1)} - 
  \frac{Q(B_{j+1,\,k_j},k_j)}{Q(0)}\bigg)\bigg\}^2\bigg\}^{1/2}\sqrt{\frac{\log n}{n}}1_{\mathcal{L}_n},
\end{align}
where the second line follows that $\min_j\hat{\sigma}_{nj}^2$ is bounded away from zero by the occurrence of $\mathcal{E}_n$ contained in $\mathcal{L}_n$, $\sigma_{nj}^2 = {\rm Var}(\,f^*_{k_j}(O\,|\,P))$ is bounded away from zero by \ref{assump:Variances} for each $j$, $m_j \in \{-1,1\}$, $A_{j+1} \in \{0,1\}$ and \ref{assump:Survival function} assumes $G(\tau) >0$ that implies $\tilde{Y}_{j + 1} = \delta_{j+1}X_{j+1}/G(X_{j+1})$ is uniformly bounded on $j$. 
Along with the decomposition
\begin{align*}
  &\bigg(\frac{1}{\mathbb{Q}_j(1)} - 
  \frac{\hat{Q}_j(B_{j+1,\,k_j},k_j)}{\mathbb{Q}_j(0)}\bigg) -\bigg(\frac{1}{Q(1)} - 
  \frac{Q(B_{j+1,\,k_j},k_j)}{Q(0)}\bigg) \\
  & = \bigg(\frac{1}{\mathbb{Q}_j(1)} - \frac{1}{Q(1)}\bigg) + \Big(\hat{Q}_j(B_{j+1,\,k_j},k_j) - Q(B_{j+1,\,k_j},k_j)\Big)\bigg(\frac{-1}{\mathbb{Q}_j(0)}\bigg)\\
  & \qquad + \bigg(\frac{1}{\mathbb{Q}_j(0)} - \frac{1}{Q(0)}\bigg)\Big(-Q(B_{j+1,\,k_j},k_j)\Big),
\end{align*}
the upper bound of the right-hand side in \eqref{eq:ub.expansion.2} is given as
\begin{align}\label{eq:ub.expansion.3}
  & \frac{\sqrt{3}}{\sqrt{n-q_n}}\bigg\{\sum_{j=q_n}^{n-1}\bigg(\frac{1}{\mathbb{Q}_j(1)} - \frac{1}{Q(1)}\bigg)^2 + 
  \sum_{j=q_n}^{n-1}\frac{\Big(\hat{Q}_j(B_{j+1,\,k_j},k_j) - Q(B_{j+1,\,k_j},k_j)\Big)^2}{\mathbb{Q}^2_j(0)} \nonumber \\
  & \hspace{1.6cm} + \sum_{j=q_n}^{n-1}
  Q^2(B_{j+1,\,k_j},k_j)\bigg(\frac{1}{\mathbb{Q}_j(0)} - \frac{1}{Q(0)}\bigg)^2\bigg\}^{1/2}\sqrt{\frac{\log n}{n}}1_{\mathcal{L}_n} \nonumber \\
  & \lesssim \sqrt{n-q_n}\bigg[\max_{j \in \{q_n,\ldots, n-1\}}\Big\{2/j + K_{nj}^2 \Big\}\bigg]\sqrt{\frac{\log n}{n}}1_{\mathcal{L}_n} = \sqrt{n-q_n}\bigg\{\frac{2}{q_n} + \frac{\log(n \lor p_n)}{q_n} \bigg\}\sqrt{\frac{\log n}{n}}1_{\mathcal{L}_n} \to 0,
\end{align}
where the first inequality follows from \ref{assump:Positivity} and the occurrence of events \ref{eq:event_Cn_2}, \ref{eq:event_Cn_3}, \ref{eq:event_Cn_6} of $\mathcal{C}_n$ contained in $\mathcal{L}_n$, and the convergence to zero results from the conditions $n - q_n \to \infty$, $n\big/q_n = O(1)$ as well as $q_n^{1/4}\big/log(n \lor p_n) \to \infty$. We have shown the middle term in \eqref{eq:expansion.1} is $o_p(1)$.

To show $\mbox{(II)}1_{\mathcal{L}_n}$ is asymptotically negligible, it remains to address the first term in the last equality of \eqref{eq:expansion.1}. We can upper bound it using Cauchy--Schwarz inequality and the occurrence of $\mathcal{B}_n$ contained in $\mathcal{L}_n$: 
\begin{align} \label{eq:decomposition}
  &\big|\mbox{(II)}\big|1_{\mathcal{L}_n} = \bigg|\frac{1}{\sqrt{n-q_n}}\sum_{j=q_n}^{n-1}\frac{m_j}{\sigma_{nj}}\bigg\{\frac{1}{Q(1)} - 
  \frac{Q(B_{j+1,\,k_j},k_j)}{Q(0)}\bigg\}A_{j+1}\tilde{Y}_{j+1}\Big(\frac{\hat{G}_n(X_{j+1})}{G(X_{j+1})} - 1\Big)\bigg|1_{\mathcal{L}_n} + o_p(n^{-1/2}) \nonumber \\
  & \leq \bigg\{\frac{1}{n-q_n}\sum_{j=q_n}^{n-1}\frac{1}{\sigma_{nj}^2}\bigg\{\frac{1}{Q(1)} - 
  \frac{Q(B_{j+1,\,k_j},k_j)}{Q(0)}\bigg\}^2A_{j+1}^2\tilde{Y}_{j+1}^2\bigg\}^{1/2}\bigg\{\frac{1}{n-q_n}\sum_{j=q_n}^{n-1}\bigg(\frac{\hat{G}_n(X_{j+1})}{G(X_{j+1})} - 1\bigg)^2\bigg\}^{1/2}1_{\mathcal{L}_n}\nonumber \\ 
  & \qquad + o_p(n^{-1/2}) \nonumber \\
  & \leq \bigg\{\frac{1}{n-q_n}\sum_{j=q_n}^{n-1}\frac{1}{\sigma_{nj}^2}\bigg\{\frac{1}{Q(1)} - 
  \frac{Q(B_{j+1,\,k_j},k_j)}{Q(0)}\bigg\}^2A_{j+1}^2\tilde{Y}_{j+1}^2\bigg\}^{1/2}\sqrt{\frac{\log n}{n}}1_{\mathcal{L}_n} + o_p(n^{-1/2}).
\end{align}
Moreover, \ref{assump:Positivity} and \ref{assump:Mediator} give that $\max_{j}\big\{1/Q(1) - Q(B_{j+1,\,k_j},k_j)\big/Q(0)\big\}^2$ is bounded above, \ref{assump:Survival function} assumes $G(\tau) >0$ that implies $\tilde{Y}_{j + 1} = \delta_{j+1}X_{j+1}/G(X_{j+1})$ is uniformly bounded on $j$, and we see $\sigma_{nj}^2 = {\rm Var}(\,f^*_{k_j}(O\,|\,P))$ be bounded away from zero by \ref{assump:Variances}. Along with $A_{j+1} \in \{0,1\}$ for each $j$, there exists a $(j,n)$-independent constant $K' \in (0, \infty)$ such that
\begin{align*}
  & \frac{1}{n-q_n}\sum_{j=q_n}^{n-1}\frac{1}{\sigma_{nj}^2}\bigg\{\frac{1}{Q(1)} - \frac{Q(B_{j+1,\,k_j},k_j)}{Q(0)}\bigg\}^2A_{j+1}^2\tilde{Y}_{j+1}^2 \\
  & \le \max_{j \in \{q_n,\ldots, n-1\}}\bigg\{\frac{1}{\sigma_{nj}^2}\bigg(\frac{1}{Q(1)} - \frac{Q(B_{j+1,\,k_j},k_j)}{Q(0)}\bigg)^2A_{j+1}^2\tilde{Y}_{j+1}^2\bigg\} \le K'.
\end{align*}
From \eqref{eq:decomposition}, together with the results in \eqref{eq:expansion}-\eqref{eq:ub.expansion.3}, we conclude that $\mbox{(II)}1_{\mathcal{L}_n}$ is $o_p(n^{-1/2})$.
\end{proof}

\begin{lemma}\label{lemma:asymp.III}
Under the conditions of Theorem \ref{Thm:stab_one_step}, $\mbox{(III)}1_{\mathcal{L}_n}$ is asymptotically negligible.
\end{lemma}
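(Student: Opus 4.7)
The plan is to show each of the eight summands in the explicit expansion of $\mbox{(III)}$ in \eqref{eq:def_terms} is $o_p(1)$ after multiplication by $1_{\mathcal{L}_n}$, following the overall template of Lemmas \ref{lemma:asymp.I}--\ref{lemma:asymp.II}. Two observations organize the analysis: since $\bs{B}_{j+1}$ belongs to $\mathcal{O}_{nj} := \sigma(\{O_1, \ldots, O_j\}, \bs{B}_{j+1})$ and $k_j$ is $\sigma(O_1, \ldots, O_j)$-measurable, each of the plug-in errors arising in (III)---namely $(1/\mathbb{Q}_j(a) - 1/Q(a))$, $(\hat{\beta}_{jk_j}\hat{q}_j(a, k_j) - \beta_{k_j}q(a, k_j))$, and $[\hat{Q}_j - Q](B_{j+1, k_j}, k_j)$---is $\mathcal{O}_{nj}$-measurable; and by construction of the EIF in \eqref{eq:if_psi_k}--\eqref{eq:if_nu_eff}, the companion factor multiplying each plug-in error is $P$-mean zero conditionally on $\mathcal{O}_{nj}$, via \eqref{eq:identical.first.moment} together with \ref{assump:Ignorability} for the regression-residual factors and via the compensated-martingale property of $M_{j+1}$ on the filtration \eqref{eq:filtration} for the stochastic-integral factors, exactly as exploited in the proof of Lemma \ref{lemma:asymp.I}. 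A preliminary step replaces $\hat{\sigma}_{nj}^{-1}$ by $\sigma_{nj}^{-1}$ at cost $o_p(1)$ via a Cauchy--Schwarz argument identical to \eqref{eq:expansion.1}--\eqref{eq:ub.expansion.3} (invoking $\mathcal{E}_n \subseteq \mathcal{L}_n$).

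For the four summands whose plug-in error is $\hat{\beta}_{jk_j}\hat{q}_j(a, k_j) - \beta_{k_j}q(a, k_j)$ or $[\hat{Q}_j - Q](B_{j+1, k_j}, k_j)$, events \ref{eq:event_Cn_4}--\ref{eq:event_Cn_6} together with the identity $\hat{\beta}\hat{q} - \beta q = (\hat{\beta} - \beta)\hat{q} + \beta(\hat{q} - q)$ bound each error by $O(K_{nj})$ uniformly on $\mathcal{L}_n$. The rescaled summand $\bar{H}_{n,j+1}$ then forms a martingale difference array on $\{\mathcal{O}_{nj}\}_{j \ge q_n}$ with $E[\bar{H}_{n,j+1}^2] \lesssim K_{nj}^2/(n-q_n)$. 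Chebyshev's inequality combined with $\sum_{j=q_n}^{n-1} E[\bar{H}_{n,j+1}^2] \lesssim \log(n \lor p_n)\log(n/q_n)/(n-q_n)$ yields $o_p(1)$ under the standing conditions $n/q_n = O(1)$ and $q_n^{1/4}/\log(n \lor p_n) \to \infty$, exactly as in the proof of Lemma \ref{lemma:asymp.I}.

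The main obstacle lies with the remaining four summands, each carrying the plug-in error $1/\mathbb{Q}_j(a) - 1/Q(a) = O(1/\sqrt{j})$ on $\mathcal{C}_n$ (coarser than $K_{nj}$), for which a direct Cauchy--Schwarz bound gives only the non-negligible order $O(1)$. To close the gap, I would mirror the strategy of Lemma \ref{lemma:asymp.II}: first Taylor-expand $1/\mathbb{Q}_j(a) - 1/Q(a) = -(\mathbb{Q}_j(a) - Q(a))/Q(a)^2 + R_j$ with $|R_j| \lesssim 1/j$ on $\mathcal{C}_n$, whose total contribution is at most $\sqrt{n-q_n}\log(n/q_n)/q_n \to 0$ by direct summation; then recognize that $\mathbb{Q}_j(a) - Q(a) = j^{-1}\sum_{i=1}^j[1(A_i = a) - Q(a)]$ is $\mathcal{O}_{nj}$-measurable, so the product of this (random but predictable) plug-in error with the conditional-mean-zero companion factor again forms a martingale difference array, now with $E[\bar{H}_{n,j+1}^2] \lesssim 1/(j(n-q_n))$ (since the companion factors are uniformly bounded on $\mathcal{L}_n$ by \ref{assump:Mediator}, \ref{assump:Variances} and \ref{assump:Survival function}). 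Summing gives $\sum_{j=q_n}^{n-1} E[\bar{H}_{n,j+1}^2] \lesssim \log(n/q_n)/(n-q_n) \to 0$, so Chebyshev's inequality closes this group as well. Assembling the three contributions yields $(III) \cdot 1_{\mathcal{L}_n} = o_p(1)$.
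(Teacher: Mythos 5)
Your overall skeleton matches the paper's: treat the summands of (III) in \eqref{eq:def_terms} one at a time, pair each plug-in error (measurable with respect to the first $j$ observations) with a conditionally centered companion factor, and close each summand with a martingale-difference plus Chebyshev bound using the rates supplied by the events in $\mathcal{C}_n$ and $\mathcal{E}_n$; your rate bookkeeping ($K_{nj}$ for the $\hat{\beta}_{jk}\hat{q}_j$ and $\hat{Q}_j$ errors, $1/\sqrt{j}$ for the $\mathbb{Q}_j$ errors, and the final $\log(n\lor p_n)/q_n$-type bounds) agrees with the paper's. Two of your steps are superfluous rather than wrong: replacing $\hat{\sigma}_{nj}$ by $\sigma_{nj}$ (unnecessary, since $\hat{\sigma}_{nj}$ is a function of $O_1,\ldots,O_j$ and is bounded below on $\mathcal{E}_n$), and Taylor-linearizing $1/\mathbb{Q}_j(a)-1/Q(a)$ (unnecessary, since this error is already a function of the first $j$ observations and is bounded by $1/\sqrt{j}$ on event \ref{eq:event_Cn_3}, so the paper feeds it directly into the martingale-difference argument; your concern that a direct Cauchy--Schwarz bound only gives $O(1)$ therefore never arises).

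The genuine gap is in your centering claim. You condition throughout on $\mathcal{O}_{nj}=\sigma(\{O_1,\ldots,O_j\},\bs{B}_{j+1})$ and assert that every companion factor has conditional mean zero given this $\sigma$-field; that is false for half of the summands. For the companions $1-(1-A_{j+1})/Q(0)$ and $1-A_{j+1}/Q(1)$ (paired with the $\hat{\beta}_{jk_j}\hat{q}_j-\beta_{k_j}q$ errors), conditioning on $\bs{B}_{j+1}$ leaves $1-P(A_{j+1}=a\,|\,\bs{B}_{j+1})/Q(a)\neq 0$, because $A$ and $\bs{B}$ are dependent by design (the exposure drives the mediators). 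For the companions $(1-A_{j+1})\big[E_0(1,B_{j+1,\,k_j},k_j)-\hat{\beta}_{jk_j}\hat{q}_j(0,k_j)\big]$ and $A_{j+1}\big[\tilde{Y}_{j+1}-\hat{\beta}_{jk_j}\hat{q}_j(1,k_j)\big]$ (paired with the $1/\mathbb{Q}_j(a)-1/Q(a)$ errors), centering requires averaging over the new mediator draw $B_{j+1,\,k_j}$ as well: given $\bs{B}_{j+1}$ the conditional mean is a nondegenerate random variable, and your justification via \eqref{eq:identical.first.moment} and \ref{assump:Ignorability} does not apply because these companions are not residuals of the form $\tilde{Y}_{j+1}-E_0$. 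As stated, the martingale-difference property fails for those terms and the Chebyshev step does not follow; the uncompensated conditional means are not controlled by your decomposition. The paper is careful at exactly this point: it uses term-specific filtrations, taking $\sigma(\{O_1,\ldots,O_j\})$ for the first four terms of \eqref{eq:decomp_III} so that the conditional expectation integrates out $(A_{j+1},\bs{B}_{j+1})$ and vanishes via the identities $E[\tilde{Y}_{j+1}\,|\,A_{j+1}=1,O_1,\ldots,O_j]=\hat{\beta}_{jk_j}\hat{q}_j(1,k_j)$ and its analogue for $E_0(1,B_{j+1,\,k_j},k_j)$, and enlarging by $\bs{B}_{j+1}$ only for the last term, where the enlargement is needed to make $[\hat{Q}_j-Q](B_{j+1,\,k_j},k_j)$ measurable and the residual $\tilde{Y}_{j+1}-E_0(1,B_{j+1,\,k_j},k_j)$ supplies the centering. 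Your argument goes through once you make the same term-by-term choice of conditioning $\sigma$-field; with the blanket enlargement it does not.
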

\begin{proof}
As observing $\int_{\mathcal{T}} \tilde{f}(s)M_{j+1}({\rm d}s) = 0$ for any predictable function $\tilde{f}$ with respect to the filtration ${\cal O}_{nj} := \sigma(\{O_1,\ldots,O_j\}, A_{j+1}, \bs{B}_{j+1})$, along with the properties given the occurrence of $\mathcal{L}_n$,
we can discard the last two terms in the decomposition of $\mbox{(III)}$ in \eqref{eq:def_terms}.
Thus, it remains to show $\big|\mbox{(III)}1_{{\cal L}_n}\big| = o_p(1)$, where
\begin{align}\label{eq:decomp_III}
  &\big|\mbox{(III)}1_{{\cal L}_n}\big| \le \bigg|\frac{1_{{\cal L}_n}}{\sqrt{n-q_n}}\sum_{j=q_n}^{n-1}\frac{m_j}{\hat{\sigma}_{nj}}\bigg(\frac{1}{\mathbb{Q}_j(0)}-\frac{1}{Q(0)}\bigg)(1-A_{j+1})\Big[E_0(1,B_{j+1,\,k_j},k_j) - {\color {change} \hat{\beta}_{jk_j}\hat{q}_j(0,k_j) }\Big] \bigg| \nonumber \\
  &\hspace{1.8cm} + \bigg|\frac{1_{{\cal L}_n}}{\sqrt{n-q_n}}\sum_{j=q_n}^{n-1}\frac{m_j}{\hat{\sigma}_{nj}}\bigg(\frac{1}{\mathbb{Q}_j(1)} - \frac{1}{Q(1)}\bigg)A_{j+1}\Big[\tilde{Y}_{j+1} - {\color {change} \hat{\beta}_{jk_j}\hat{q}_j(1,k_j)} \Big]\bigg| \nonumber \\
  &\hspace{1.8cm} + \bigg|\frac{1_{{\cal L}_n}}{\sqrt{n-q_n}}\sum_{j=q_n}^{n-1}\frac{m_j}{\hat{\sigma}_{nj}}\bigg(1-\frac{(1-A_{j+1})}{Q(0)}\bigg){\color{change} \Big[\hat{\beta}_{jk_j}\hat{q}_j(0,k_j)-\beta_{k_j}q(0,k_j)\Big]}\bigg| \nonumber \\
  &\hspace{1.8cm} + \bigg|\frac{1_{{\cal L}_n}}{\sqrt{n-q_n}}\sum_{j=q_n}^{n-1}\frac{m_j}{\hat{\sigma}_{nj}}{\color {change} \bigg(1-\frac{A_{j+1}}{Q(1)}\bigg)\Big[\hat{\beta}_{jk_j}\hat{q}_j(1,k_j)-\beta_{k_j}q(1,k_j)\Big]}\bigg| \\
  &\hspace{1.8cm} + \bigg|\frac{1_{{\cal L}_n}}{\sqrt{n-q_n}}\sum_{j=q_n}^{n-1}\frac{m_j}{\hat{\sigma}_{nj}} A_{j+1}
  \Big[\tilde{Y}_{j+1}-E_0(1,B_{j+1,\,k_j},k_j)\Big] \nonumber \\
  &\hspace{4cm} \times \bigg\{\frac{1}{\mathbb{Q}_j(0)}\big[\hat{Q}_j-Q\big](B_{j+1,\,k_j},k_j) + \bigg(\frac{1}{\mathbb{Q}_j(0)}-\frac{1}{Q(0)}\bigg)Q(B_{j+1,\,k_j},k_j)\bigg\}
   \bigg|. \nonumber
\end{align}

Note that for $a \in \{0,1\}$ and $j \in \{q_n,\ldots,n\}$,
\begin{align} \label{eq:bounded.diff.betaQ}
  \sup_{k \in \mathcal{K}_n}\big|\hat{\beta}_{jk}\hat{q}_j(a,k)-\beta_{k}q(a,k)\big| \le \sup_{k \in \mathcal{K}_n}\big|\big(\hat{\beta}_{jk}-\beta_{k}\big)\hat{q}_j(a,k)\big| + \sup_{k \in \mathcal{K}_n}\big|\big(\hat{q}_j(a,k)-q(a,k)\big)\beta_{k}\big| \lesssim K_{nj},
\end{align}
following \ref{assump:Mediator} and the occurrence of events \ref{eq:event_Cn_4}--\ref{eq:event_Cn_5} of $\mathcal{C}_n$ contained in $\mathcal{L}_n$, which further implies that
\begin{align} \label{eq:bounded.betaQ}
  \sup_{k \in \mathcal{K}_n}\big|\hat{\beta}_{jk}\hat{q}_j(a,k)\big| \lesssim K_{nj} + \sup_{k \in \mathcal{K}_n}\big|\beta_{k}q(a,k)\big|.
\end{align}
The first term in the decomposition \eqref{eq:decomp_III} is shown of order $o_p(1)$ as follows. 
Let ${\cal O}_{nj} := \sigma(\{O_1,\ldots,O_j\})$ and 
\begin{align*}
  \tilde{H}_{nj} := \frac{1}{\sqrt{n-q_n}}\frac{m_{j-1}}{\hat{\sigma}_{nj-1}}\bigg(\frac{1}{\mathbb{Q}_{j-1}(0)}-\frac{1}{Q(0)}\bigg)(1-A_{j})\Big[E_0(1,B_{j,\,k_{j-1}},k_{j-1}) - {\color {change} \hat{\beta}_{j-1k_{j-1}}\hat{q}_{j-1}(0,k_{j-1})}\Big].
\end{align*}
Then the first term in the decomposition \eqref{eq:decomp_III} is equal to $\big\{\sum_{j=q_n}^{n-1}\tilde{H}_{n,\,j+1}\big\}1_{\mathcal{L}_n}$.
Moreover, we see that $E\big|\tilde{H}_{nj}\big| < \infty$, following 
that $\min_{j}\hat{\sigma}^2_{nj}$ is bounded away from zero by the occurrence of $\mathcal{E}_n$, the occurrence of event \ref{eq:event_Cn_2}--\ref{eq:event_Cn_3} of $\mathcal{C}_n$ contained in $\mathcal{L}_n$,
\ref{assump:Positivity} that implies $Q(a)$ is bounded away from zero for $a \in \mathcal{A}$,
\ref{assump:Conditional_mean_E0} that upper-bounds $\sup_{(a,k,u)}\big|E_0(a, u, k)\big|$, and the boundedness of $\hat{\beta}_{j-1k_{j-1}}\hat{q}_{j-1}(0,k_{j-1})$ given in \eqref{eq:bounded.betaQ}.
Moreover, along with the occurrence of event \ref{eq:event_Cn_3} of $\mathcal{C}_n$ contained in $\mathcal{L}_n$, the result in \eqref{eq:bounded.betaQ} yields 
$\big|\tilde{H}_{nj}\big| \lesssim \sqrt{\log(n \lor p_n)\,/\,[q_n(n-q_n)]} \equiv \tilde{U}_n$, for each $j$.
Note that
\begin{align*}
  & E\big[\tilde{H}_{n,\,j+1}\big|{\cal O}_{nj}\big] \\
  & = \frac{1}{\sqrt{n-q_n}}\frac{m_j}{\hat{\sigma}_{nj}}\bigg(\frac{1}{\mathbb{Q}_j(0)}-\frac{1}{Q(0)}\bigg)E\Big[&(1-A_{j+1})\Big(E_0(1,B_{j+1,\,k_j},k_j) - {\color {change} \hat{\beta}_{jk_j}\hat{q}_j(0,k_j)}\Big)\big|{\cal O}_{nj}\Big]=0,
\end{align*}
resulting from
\begin{align*}
 & E\Big[(1-A_{j+1})\Big(E_0(1,B_{j+1,\,k_j},k_j) - {\color {change} \hat{\beta}_{jk_j}\hat{q}_j(0,k_j)}\Big)\big|{\cal O}_{nj}\Big] \\
 & = P(A_{j+1}=0)\Big\{E\big[E_0(1,B_{j+1,\,k_j},k_j)\big|A_{j+1}=0, {\cal O}_{nj}\big] - {\color {change} \hat{\beta}_{jk_j}\hat{q}_j(0,k_j)}\Big\} = 0.
\end{align*}
As $\tilde{H}_{nj}$ is seen ${\cal O}_{nj}$-measurable,
$\{(\tilde{H}_{nj}, {\cal O}_{nj}),\, j=q_n+1,\ldots,n\}$ is a martingale difference sequence.
Then Chebyshev's inequality implies that for $\varepsilon > 0$,
\begin{align*}
  & {\rm P}\bigg(\,\bigg|\sum_{j=q_n}^{n-1}\tilde{H}_{n,\,j+1}\bigg| \ge \varepsilon \bigg) \le \varepsilon^{-2}E\bigg[\bigg(\sum_{j=q_n}^{n-1}\tilde{H}_{n,\,j+1}\bigg)^2\,\bigg] \\
  & = \varepsilon^{-2}\bigg(\sum_{j=q_n}^{n-1}E\big[\tilde{H}_{n,\,j+1}\big]^2 + 2\sum_{q_n\le i< j\le n-1}E\Big[\tilde{H}_{n,\,i+1}E\Big[\tilde{H}_{n,\,j+1} \big| \mathcal{O}_{nj}\Big]\Big]\bigg) \\
  & = \varepsilon^{-2}\sum_{j=q_n}^{n-1}E\big[ \tilde{H}_{n,\,j+1}\big]^2 \le \varepsilon^{-2}(n-q_n)\tilde{U}_n^2 \to 0;
\end{align*}
this result disposes of the first term in the decomposition \eqref{eq:decomp_III}.

Similar arguments can be applied to show the second term in the decomposition \eqref{eq:decomp_III} of order $o_p(1)$, taking
\begin{align*}
  &\tilde{H}_{nj} := \frac{1}{\sqrt{n-q_n}}\frac{m_{j-1}}{\hat{\sigma}_{nj-1}}\bigg(\frac{1}{\mathbb{Q}_{j-1}(1)}-\frac{1}{Q(1)}\bigg)A_{j}\Big[\tilde{Y}_{j} - {\color {change} \hat{\beta}_{j-1k_{j-1}}\hat{q}_{j-1}(1,k_{j-1})}\Big]
\end{align*}
and following
$E\big[\tilde{Y}_{j+1}\big|A_{j+1}=1,\mathcal{O}_{nj}\big] = \hat{\beta}_{jk_j}\hat{q}_j(1,k_j)$.

We can handle the third term in the decomposition \eqref{eq:decomp_III} in a parallel fashion.
Provided ${\cal O}_{nj} = \sigma(\{O_1,\ldots,O_j\})$, let 
\begin{align*}
 H'_{nj} := \frac{1}{\sqrt{n-q_n}}\frac{m_{\,j\,-1}}{\hat{\sigma}_{n\,j\,-1}}\bigg(1-\frac{(1-A_{j})}{Q(0)}\bigg)\Big[\hat{\beta}_{j-1k_{j-1}}\hat{q}_{j-1}(0,k_{j-1})-\beta_{k_{j-1}}q(0,k_{j-1})\Big];
\end{align*}
the third term in the decomposition \eqref{eq:decomp_III} is equal to $\big\{\sum_{j=q_n}^{n-1}H'_{n,\,j+1}\big\}1_{\mathcal{L}_n}$.
It is easy to see that $E\big|H'_{nj}\big| < \infty$, following that both $\min_{j}\hat{\sigma}^2_{nj}$ and $Q(a)$ for $a \in \mathcal{A}$ are bounded away from zero respectively by the occurrence of $\mathcal{E}_n$ and \ref{assump:Positivity}, 
that $\big|\hat{\beta}_{jk_j}\hat{q}_j(0,k_j)-\beta_{k_j}q(0,k_j)\big|$ is upper-bounded as given in \eqref{eq:bounded.diff.betaQ}.
Moreover for each $j$, 
$\big|H'_{nj}\big| \lesssim \sqrt{\log(n \lor p_n)\,/\,[q_n(n-q_n)]} \equiv U'_n$.
As $H'_{nj}$ is seen ${\cal O}_{nj}$-measurable, along with
\begin{align*}
  E\big[H'_{n,\,j+1}\big|{\cal O}_{nj}\big] = \frac{1}{\sqrt{n-q_n}}\frac{m_j}{\hat{\sigma}_{nj}} E\bigg[1-\frac{(1-A_{j+1})}{Q(0)}\bigg| {\cal O}_{nj}\bigg] \Big[\hat{\beta}_{jk_j}\hat{q}_{j}(0,k_j)-\beta_{k_j}q(0,k_j)\Big] = 0,
\end{align*}
$\{(H'_{nj}, {\cal O}_{nj}),\, j=q_n+1,\ldots,n\}$ is a martingale difference sequence.
Similarly, Chebyshev's inequality implies that
for $\varepsilon > 0$, 
\begin{align*}
  {\rm P}\bigg(\,\bigg|\sum_{j=q_n}^{n-1}H'_{n,\,j+1}\bigg| \ge \varepsilon \bigg) \le \varepsilon^{-2}E\bigg[\bigg(\sum_{j=q_n}^{n-1}H'_{n,\,j+1}\bigg)^2\,\bigg]
  = \varepsilon^{-2}\sum_{j=q_n}^{n-1}E\big[ H_{n,\,j+1}'\big]^2 \le \varepsilon^{-2}(n-q_n)(U_n')^2 \to 0; 
\end{align*}
this result gives the asymptotic ineligibility of the third term in the decomposition \eqref{eq:decomp_III}.

Similar arguments can be used to show the fourth term in the decomposition \eqref{eq:decomp_III} of order $o_p(1)$, taking
\begin{align*}
 H'_{nj} := \frac{1}{\sqrt{n-q_n}}\frac{m_{\,j\,-1}}{\hat{\sigma}_{n\,j\,-1}}\bigg(1-\frac{A_{j}}{Q(1)}\bigg)\Big[\hat{\beta}_{j-1k_{j-1}}\hat{q}_{j-1}(1,k_{j-1})-\beta_{k_{j-1}}q(1,k_{j-1})\Big].
\end{align*}
To prove the last term in the decomposition \eqref{eq:decomp_III} of order $o_p(1)$, we take the filtration ${\cal O}_{nj} = \sigma(\{O_1,\ldots,O_j\}, \bs{B}_{j+1})$ and
\begin{align*} 
  &H_{nj} := \frac{1}{\sqrt{n-q_n}}\frac{m_{\,j\,-1}}{\hat{\sigma}_{n\,j\,-1}}
  A_{j}\big[\tilde{Y}_{j}-E_0(1,B_{j,\,k_{j-1}},k_{j-1})\big]\\
  &\hspace{2.5cm} \times \bigg\{\frac{1}{\mathbb{Q}_{j-1}(0)}\big[\hat{Q}_{j-1}-Q\big](B_{j,\,k_{j-1}},k_{j-1}) + \bigg(\frac{1}{\mathbb{Q}_{j-1}(0)}-\frac{1}{Q(0)}\bigg)Q(B_{j,\,k_{j-1}},k_{j-1})\bigg\}
\end{align*}
with $E\big[\tilde{Y}_{j+1}\big|A_{j+1}=1,\mathcal{O}_{nj}\big] = E_0(1,B_{j+1,\,k_{j}},k_{j})$, and apply martingale difference sequence theory. Combining the above results, along with the conditions of Theorem \ref{Thm:stab_one_step},
yields that $\mbox{(III)}1_{\mathcal{L}_n} = o_p(1)$.
\end{proof}

\begin{lemma}\label{lemma:asymp.V}
Under the conditions of Theorem \ref{Thm:stab_one_step}, $\mbox{(V)}1_{\mathcal{L}_n}$ is asymptotically negligible.
\end{lemma}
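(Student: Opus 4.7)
The plan is to exploit the signal-strength assumption \ref{assump:Signal strength} to show that the randomly selected index $k_j$ lies in $\mathcal{K}_n^*$ throughout on $\mathcal{L}_n$, and then use the diameter condition to control each summand of $\mathrm{(V)}$. First I dispose of the degenerate case: if $\Psi(P)=0$ then every $\Psi_{k_j}(P)=0$ and $\mathrm{(V)}$ vanishes identically, so henceforth I assume $\Psi(P)\neq 0$ and activate \ref{assump:Signal strength}.

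Next I establish the uniform product approximation
\[
\sup_{k\in\mathcal{K}_n}\big|\Psi_k(\hat{P}_{nj})-\Psi_k(P)\big|\;\lesssim\;K_{nj}\quad\text{for }j=q_n,\ldots,n-1.
\]
This combines the uniform controls $\sup_k|\hat{\beta}_{jk}-\beta_k|\lesssim K_{nj}$ and $\sup_k|\hat{\zeta}_{jk}-\zeta_k|\lesssim K_{nj}$ (from events \ref{eq:event_Cn_5} and \ref{eq:event_Cn_4} inside $\mathcal{C}_n\subseteq\mathcal{L}_n$, with $\hat{\zeta}_{jk}=\hat{q}_j(1,k)-\hat{q}_j(0,k)$) together with the boundedness of $\beta_k$ and of $\zeta_k$ implied by \ref{assump:Mediator}. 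I then argue that, for a sufficiently large constant $c$ in \ref{assump:Signal strength}, this approximation together with the signal margin forces both (i) $k_j\in\mathcal{K}_n^*$ and (ii) $\mathrm{sign}(\Psi_{k_j}(\hat{P}_{nj}))=\mathrm{sign}(\Psi_{k_j}(P))$: property (i) follows by applying the triangle inequality to the margin $c\sqrt{\log(n\vee p_n)/q_n}$ against the uniform error $K_{nj}\leq K_{nq_n}=\sqrt{\log(n\vee p_n)/q_n}$, and (ii) follows because $\inf_{k\in\mathcal{K}_n^*}|\Psi_k(P)|\geq c\sqrt{\log(n\vee p_n)/q_n}$ again dominates $K_{nj}$. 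On the resulting event, $m_j\Psi_{k_j}(P)=|\Psi_{k_j}(P)|$.

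To conclude, since the selection margin guarantees $\Psi(P)=\sup_{k\in\mathcal{K}_n^*}|\Psi_k(P)|$, the diameter condition in \ref{assump:Signal strength} yields
\[
\big|m_j\Psi_{k_j}(P)-\Psi(P)\big|\;=\;\big||\Psi_{k_j}(P)|-\Psi(P)\big|\;\leq\;\mathrm{Diam}(\mathcal{K}_n^*)\;=\;o(n^{-1/2}),
\]
uniformly in $j\in\{q_n,\ldots,n-1\}$. Combined with $\min_j\hat{\sigma}_{nj}\geq\sqrt{\varepsilon''}$ on $\mathcal{E}_n\subseteq\mathcal{L}_n$, this gives the deterministic bound
\[
\big|\mathrm{(V)}\,1_{\mathcal{L}_n}\big|\;\leq\;\frac{\sqrt{n-q_n}}{\sqrt{\varepsilon''}}\cdot o(n^{-1/2})\;=\;o(1).
\]

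The main obstacle is the constant calibration between the uniform product approximation and the signal-strength margin: the implicit constant in the product bound must be strictly smaller than half the constant $c$ in \ref{assump:Signal strength}, which is precisely why that assumption explicitly asks for $c$ to be ``sufficiently large.'' Once this calibration is secured, the diameter condition absorbs the $\sqrt{n-q_n}$ inflation from the normalization, and no martingale machinery is needed here, in contrast with the proofs of the terms $\mathrm{(I)}$--$\mathrm{(III)}$.
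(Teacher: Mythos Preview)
Your proposal is correct and follows the same route the paper defers to (Lemma S6.22 of \cite{Huang2023}): the null case is trivial since $\Psi(P)=0$ forces every $\Psi_{k_j}(P)=0$, and under the alternative the signal-strength assumption \ref{assump:Signal strength} forces $k_j\in\mathcal{K}_n^*$ with the correct sign on $\mathcal{L}_n$, after which the diameter condition absorbs the $\sqrt{n-q_n}$ factor. One small imprecision: the uniform boundedness of $\beta_k$ over $k$ needs \ref{assump:Survival function} and \ref{assump:Variances} (via the expression \eqref{eq:def.beta_k}) in addition to \ref{assump:Mediator}, but this does not affect the argument.
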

\begin{proof}
  It is trivial to see that $\mbox{(V)}1_{{\cal L}_n}=o_p(1)$ under the null. Similar arguments for Lemma S6.22 of \cite{Huang2023} can be used to verify it under the alternative.
\end{proof}

\end{document}